\title{Expectation in Stochastic Games with Prefix-independent Objectives}
\author{Laurent Doyen}
{Universit\'{e} Paris-Saclay, CNRS, ENS Paris-Saclay, LMF, Gif-sur-Yvette, France}
{ldoyen@lmf.cnrs.fr}
{https://orcid.org/0000-0003-3714-6145}
{}
\author{Pranshu Gaba}
{Tata Institute of Fundamental Research, Mumbai, India}
{pranshu.gaba@tifr.res.in}
{https://orcid.org/0009-0000-8012-780X}
{}
\author{Shibashis Guha}
{Tata Institute of Fundamental Research, Mumbai, India}
{shibashis@tifr.res.in}
{https://orcid.org/0000-0002-9814-6651}
{}
\authorrunning{L. Doyen, P. Gaba, and S. Guha}
\keywords{Stochastic games, finitary objectives, mean payoff, reactive synthesis} 
  \definecolor{abc}{HTML}{300030}
  \definecolor{def}{HTML}{1c2f2f}
\begin{document}

\maketitle

\begin{abstract}
    Stochastic two-player games model systems with an environment that is both adversarial and stochastic. 
    In this paper, we study the expected value of bounded quantitative \kl{prefix-independent} objectives in the context of \kl{stochastic games}.
    We show a generic reduction from the \kl{expectation problem} to linearly many instances of the almost-sure satisfaction problem for \kl{threshold Boolean objectives}.
    The result follows from partitioning the vertices of the game into so-called value classes where each class consists of vertices of the same value.
    Our procedure further entails that the memory required by both players to play optimally for the \kl{expectation problem} is no more than the memory required by the players to play optimally for the \kl{almost-sure satisfaction problem} for a corresponding \kl{threshold Boolean objective}.

    We show the applicability of the framework to compute the expected window mean-payoff measure in \kl{stochastic games}.
    The window mean-payoff measure strengthens the classical mean-payoff measure by computing the mean payoff over windows of bounded length that slide along an infinite path.
    We show that the decision problem to check if the expected window mean-payoff value is at least a given threshold is in \(\UP \intersection \coUP\) when the window length is given in unary.
\end{abstract}

\section{Introduction}
Reactive systems typically have an infinite execution where the controller continually reacts to the environment.
Given a specification, the \emph{reactive controller synthesis} problem~\cite{Chu63} concerns with synthesising a policy for the controller such that the specification is satisfied by the system for all behaviours of the environment. 
This problem is modelled using two-player turn-based games on graphs, where the two players are the controller (\(\PlayerMain\)) and the environment (\(\PlayerAdversary\)), the vertices and the edges of the game graph represent the states and transitions of the system, and the objective of \(\PlayerMain\) is to satisfy the specification. 
An execution of the system is then an infinite path in the game graph. 
The reactive controller synthesis problem corresponds to determining if there exists a strategy of \(\PlayerMain\) such that for all strategies of \(\PlayerAdversary\), the outcome satisfies the objective. 
If such a winning strategy exists, then we would also like to synthesize it.
The environment is considered as an adversarial player to ensure that the specification is met even in the worst-case scenario. 

Objectives are either Boolean or quantitative. 
Each execution either satisfies a \kl{Boolean objective} \(\BooleanObjective\) or does not satisfy \(\BooleanObjective\). 
The set of executions that satisfy \(\BooleanObjective\) form a language over infinite words with the alphabet being the set of vertices in the graphs.
On the other hand, a \kl{quantitative objective} \(\Objective\) evaluates the performance of the execution by a numerical metric, which \(\PlayerMain\) aims to maximize and \(\PlayerAdversary\) aims to minimize.
A \kl{quantitative objective} can be viewed as a real-valued function over infinite paths in the graph.

In the presence of uncertainty or probabilistic behaviour, the game graph becomes stochastic.
Fixing the strategies of the two players gives a distribution over infinite paths in the game graph.
For \kl{Boolean objectives} \(\BooleanObjective\), the goal of \(\PlayerMain\) is to maximize the probability that an outcome satisfies \(\BooleanObjective\). 
For \kl{quantitative objectives} \(\Objective\), there are two possible views: 
\begin{description}
    \item[Satisfaction.] Given a threshold \(\Threshold\), to maximize the probability that \(\Objective\)-value of the outcome is greater than \(\Threshold\);
    \item[Expectation.] To maximize the \(\Objective\)-value of the outcome in expectation. 
\end{description}
Either view may be desirable depending on the context~\cite{BBCFK14,BFRR17,BDOR20,BGR19}.
The satisfaction view can be seen as a \kl{Boolean objective}: the \(\Objective\)-value of the outcome is either greater than \(\Threshold\) or it is not.
The expectation view is more nuanced, and is the subject of study in this paper. 

In this paper, we look at the \kl{expectation problem} for quantitative \kl{prefix-independent} objectives (also referred to as tail objectives).
These are objectives that do not depend on finite prefixes of the \kl{plays}, but only on the long-run behaviour of the system.
In systems, we are often willing to allow undesirable behaviour in the short-term, if the long run behaviour is desirable. 
\kl{Prefix-independent} objectives model such requirements and thus are of interest to study~\cite{Cha07}.
\kl{Prefix-independent} objectives also have the benefit that they satisfy the \kl{Bellman equations}~\cite{KMW23}, which simplifies their analysis. 
The \kl{expectation problem} for such objectives arises naturally in many scenarios. 
For example:
\begin{romanenumerate}
    \item An algorithmic trading system is designed to generate profit by executing trades based on real-time market data. 
    Following an initial phase of learning and unstable behaviour due to parameter tuning, average profit over a bounded time window must always exceed a threshold and decisions need to be made within short well-defined intervals for them to be effective.
    \item A power plant may have different strategies to produce power (such as coal, solar, nuclear, wind) and must allocate resources among these strategies so as to maximize the power produced in expectation.
\end{romanenumerate}

\subparagraph*{Contributions.}
All of our contributions are with regard to quantitative \kl{prefix-independent} objectives \(\Objective\) that are bounded (i.e., the image of \(\Objective\) is bounded between integers \(-\ObjectiveBound{\Objective}\) and \(\ObjectiveBound{\Objective}\)) and such that a bound \(\DenBound{\Objective}\) on the denominators of the optimal expected \(\Objective\)-values of vertices in the game is known.
The bound on the image ensures determinacy~\cite{Mar98}, that is, the players have optimal strategies, and the bound on the denominator of optimal values of vertices discretize the search space. 
These bounds often exist and are easily derivable for common objectives of interest such as mean payoff.

Our primary contribution is a reduction of the \kl{expectation problem} for such an objective \(\Objective\) to linearly many instances of the \kl{almost-sure satisfaction problem} for \kl{threshold Boolean objectives} \(\ThresholdObjective{> \Threshold}\) for thresholds \(\Threshold \in \Rationals\).
Deciding the almost-sure satisfaction of \(\ThresholdObjective{> \Threshold}\) is conceptually simpler than computing the expected value of \(\Objective\), as in the former, we only need to consider if the measure of the paths that satisfy the objective \(\ThresholdObjective{> \Threshold}\) is equal to one, whereas in the latter, one must take the averages of the measures of the sets of paths \(\Play\) weighted with the value \(\Objective(\Play)\) of the paths.
Our technique is generic in the sense that when an algorithm for the \kl{almost-sure satisfaction problem} for \(\ThresholdObjective{> \Threshold}\) is known, we directly obtain the complexity and a way to solve the \kl{expectation problem} for \(\Objective\).

Our reduction builds on the technique introduced in~\cite{CHH09} for Boolean \kl{prefix-independent} objectives and non-trivially extends it to  quantitative \kl{prefix-independent} objectives \(\Objective\) for which the bounds \(\ObjectiveBound{\Objective}\) and \(\DenBound{\Objective}\) are known.
The expected \(\Objective\)-values of vertices are nondeterministically guessed, and we present a characterization~(\Cref{thm:six-conditions}, similar to~\cite[Lemma~8]{CHH09} but with important and subtle differences) to verify the guess.
We also explicitly construct strategies for both players that are optimal for the expectation of \(\Objective\), in terms of almost-sure winning strategies for \(\ThresholdObjective{> \Threshold}\) (proof of \Cref{lem:fwmpl-almost-sure-to-expected}). 
The memory requirement for the constructed optimal strategies is the same as that of the almost-sure winning strategies~(\Cref{cor:memory-bound}).

Our framework gives an alternative approach to solve the \kl{expectation problem} for well-studied objectives such as expected mean payoff and gives new results for not-as-well-studied objectives such as the \emph{window mean-payoff objectives} introduced in~\cite{CDRR15}.
As our secondary contribution, we illustrate our technique by applying it to two variants of window mean-payoff objectives: fixed (\(\FWMPL\)) and bounded (\(\BWMP\)) window mean-payoff.
Using our reduction, we are able to show that for both of these objectives, the expectation problem is in \(\UP \intersection \coUP\) (\Cref{thm:fwmp-summary} and \Cref{thm:bwmp-summary}), a result that was not known before.
The \(\UP \intersection \coUP\) upper bound for window mean-payoff objectives matches the special case of \kl{simple stochastic games}~\cite{Con92, CF11}, and thus would require a major breakthrough to be improved.
The lower bounds on the memory requirements for these objectives carry over from special case of the \kl{non-stochastic games}~\cite{CDRR15,DGG25LMCS}.
We summarize the complexity results and bounds on the memory requirements for the window mean-payoff objectives in \Cref{tab:results-summary}. 

\begin{table}[t]
    \caption{Complexity and bounds on memory requirement for window mean-payoff objectives}%
    \label{tab:results-summary}
    \begin{tabular}{lccc}
        \toprule
        \thead{Objective}  & \thead{Complexity} & \thead{Memory (\(\PlayerMain\))\\ (lower~\cite{DGG25LMCS}, upper)} & \thead{Memory (\(\PlayerAdversary\))\\ (lower~\cite{DGG25LMCS}, upper)}  \\
        \midrule
        \(\FWMPL\) & \(\UP \intersection \coUP\) & \(\WindowLength - 1\), \(\WindowLength\) & \(\abs{\Vertices} - \WindowLength\), \(\abs{\Vertices} \cdot \WindowLength \) \\
        \(\BWMP\)  & \(\UP \intersection \coUP\) & memoryless, memoryless & infinite, infinite \\
        \bottomrule
    \end{tabular}
\end{table}

\subparagraph*{Related work.}
\kl{Stochastic games} were introduced by Shapley~\cite{Shapley53} where these games were studied under expectation semantics for discounted-sum objectives.
In~\cite{CH08b}, it was shown that solving stochastic parity games reduces to solving stochastic mean-payoff games.
Further, solving stochastic parity games, stochastic mean-payoff games, and \AP \intro{simple stochastic games} (i.e., \kl{stochastic games} with \kl{reachability} objective) are all polynomial-time equivalent~\cite{GM08,AM09}, and thus, are all in \(\UP \intersection \coUP\)~\cite{CF11}.
A sub-exponential (or even quasi-polynomial) time deterministic algorithms for \kl{simple stochastic games} on graphs with poly-logarithmic treewidth was proposed in~\cite{CMSS23}.
In~\cite{GK23}, sufficient conditions on the objective were shown such that optimal deterministic memoryless strategies exist for the players.
In~\cite{KMW23}, value iteration to solve the \kl{expectation problem} in \kl{stochastic games} with \kl{reachability}, safety, total-payoff, and mean-payoff objectives was studied.

Mean-payoff objectives were studied initially in two-player games, without stochasticity~\cite{EM79,ZP96}, and with stochasticity in~\cite{Gillette58}.
Finitary versions were introduced as window mean-payoff objectives~\cite{CDRR15}.
For finitary mean-payoff objectives, the satisfaction problem~\cite{BDOR20} and the \kl{expectation problem}~\cite{BGR19} were studied in the special case of \kl{Markov decision processes} (MDPs), which correspond to \kl{stochastic games} with a trivial adversary.
Expected mean payoff, expected discounted payoff, expected total payoff, etc., are widely studied for MDPs~\cite{Puterman94,BK08}.
Both the \kl{expectation problem}~\cite{BGR19} and the satisfaction problem~\cite{BDOR20} for the \(\FWMPL\) objective are in $\PTime$, while they are in \(\UP \cap \coUP\) for the \(\BWMP\) objective.
Ensuring the satisfaction and expectation semantics  simultaneously was studied in MDPs for the mean-payoff objective in~\cite{CKK17} and for the window mean-payoff objectives in~\cite{GG25}.
In both cases, the complexity was shown to be no greater than that of only expectation optimization.

The satisfaction problem for window mean-payoff objectives has been studied for two-player stochastic games in~\cite{DGG25LMCS}.
While positive and almost-sure satisfaction of \(\FWMPL\) are in $\PTime$, it follows from~\cite{DGG25LMCS} that the problem is in \(\UP \cap \coUP\) for quantitative satisfaction i.e., with threshold probabilities $0 < p < 1$.
Furthermore, the satisfaction problem of \(\BWMP\) is in \(\UP \cap \coUP\) and thus has the same complexity as that of the special case of MDPs~\cite{BDOR20}.

\subparagraph*{Outline.}
In \Cref{sec:preliminaries}, we give the necessary technical preliminaries.
In \Cref{sec:reducing-expectation-to-almost-sure-satisfaction}, we show a reduction from the expectation problem to the almost-sure satisfaction problem for prefix-independent objectives.
In \Cref{sec:window-mean-payoff-objectives}, we recall the window mean-payoff objectives and show an application of the reduction for these objectives and discuss their complexity. 
In particular, we give bounds on \(\DenBound{\Objective}\) for the \(\FWMPL\) objective in \Cref{sec:expected-fixed-window-mean-payoff-value} and for the \(\BWMP\) objective in \Cref{sec:expected-bounded-window-mean-payoff-value}.
We conclude in \Cref{sec:discussion}, where we remark on the differences between our algorithm and~\cite{CHH09}, look at the applicability of our algorithm to other objectives, and discuss practical implementations for our algorithm.

\section{Preliminaries}%
\label{sec:preliminaries}

\subparagraph*{Probability distributions.}
\AP
A \intro{probability distribution} over a finite non-empty set \(A\) is a function \(\Prob \colon A \to [0,1] \) such that \(\sum_{a \in A} \Prob(a) = 1\).  
We denote by \(\AP \intro *\DistributionSet{A}\) the set of all probability distributions over \(A\).
For the algorithmic and complexity results, we assume that probabilities are given as rational numbers.

\subparagraph*{Stochastic games.}
We consider two-player turn-based zero-sum stochastic games (or simply, stochastic games in the sequel). 
The two players are referred to as \(\PlayerMain\) (she/her) and \(\PlayerAdversary\) (he/him).
\AP 
A \intro{stochastic game} is given by \AP \(\Game = ((\Vertices, \Edges), (\VerticesMain, \VerticesAdversary, \VerticesRandom),  \ProbabilityFunction, \PayoffFunction)\), where:
\begin{itemize}
    \item \((\Vertices, \Edges)\) is a directed graph with 
    a finite set \(\Vertices\) of vertices and a set  \(\Edges \subseteq \Vertices \times \Vertices\) of directed edges such that
    for all vertices \(\Vertex \in \Vertices\), 
    the set \(\AP \intro *\OutNeighbours{\Vertex} = \{ \Vertex' \in \Vertices \suchthat (\Vertex, \Vertex') \in \Edges\}\) 
    of out-neighbours of \(\Vertex\) is non-empty, i.e., \(\OutNeighbours{\Vertex} \ne \emptyset\) (no deadlocks).
    \item \((\VerticesMain, \VerticesAdversary, \VerticesRandom)\) is a partition of \(\Vertices\). The vertices in \(\VerticesMain\) belong to \(\PlayerMain\), the vertices in~\(\VerticesAdversary\) belong to \(\PlayerAdversary\), and the vertices in \(\VerticesRandom\) are called \AP \intro{probabilistic vertices};
    \item \(\ProbabilityFunction \colon \VerticesRandom \to \DistributionSet{\Vertices}\) is a \emph{probability function} that describes the behaviour of \kl{probabilistic vertices} in the game. 
    It maps each \kl{probabilistic vertex} \(\Vertex \in \kl{\VerticesRandom}\) to a \kl{probability distribution} \(\ProbabilityFunction(\Vertex)\) over the set \(\OutNeighbours{\Vertex}\) of out-neighbours of \(\Vertex\) such that \(\ProbabilityFunction(\Vertex)(\Vertex') > 0\) for all \(\Vertex' \in \OutNeighbours{\Vertex}\) (i.e., all out-neighbours have non-zero probability);
    \item \(\PayoffFunction \colon \Edges \to \Rationals\) is a \intro{payoff function} assigning a rational payoff to every edge in the game.
\end{itemize}

\kl{Stochastic games} are played in rounds. 
The game starts by initially placing a token on some vertex.
At the beginning of a round, if the token is on a vertex \(\Vertex\), and \(\Vertex \in \Vertices[i]\) for \(i \in \{\Main, \Adversary\}\), then \(\Player{i}\) chooses an out-neighbour \(\Vertex' \in \OutNeighbours{\Vertex}\); otherwise \(\Vertex \in \kl{\VerticesRandom}\), and an out-neighbour \(v' \in \OutNeighbours{\Vertex}\) is chosen with probability \(\ProbabilityFunction(\Vertex)(v')\). 
\(\PlayerMain\) receives from \(\PlayerAdversary\) the amount \(\PayoffFunction(\Vertex, \Vertex')\) given by the \kl{payoff function}, and the token moves to \(v'\) for the next round. 
This continues ad infinitum resulting in an infinite sequence \(\Play = \Vertex[0] \Vertex[1] \Vertex[2] \dotsm \in \Vertices^\omega\) such that \((\Vertex[i], \Vertex[i+1]) \in \Edges\) for all \(i \ge 0\), called a \AP \intro{play}.
For \(i < j\), we denote by \(\PlayInfix{i}{j}\) the \emph{infix} \(\Vertex[i] \Vertex[i+1] \dotsm \Vertex[j]\) of \(\Play\). 
Its length is \( \abs{\PlayInfix{i}{j}} = j - i\), the number of edges. 
We denote by \(\PlayPrefix{j}\) the finite \emph{prefix} \(\Vertex[0] \Vertex[1] \dotsm \Vertex[j]\) of \(\Play\), and by \(\PlaySuffix{i}\) the infinite \emph{suffix} \(\Vertex[i] \Vertex[i+1] \ldots \) of \(\Play\). 
We denote by \(\occ(\Play)\) the set of vertices in \(\Vertices\) that occur at least once in \(\Play\), and by  \(\inf(\Play)\) the set of vertices in \(\Vertices\) that occur infinitely often in \(\Play\).
We denote by \(\PlaySet{\Game}\) and \(\PrefixSet{\Game}\) the set of all \kl{plays} and the set of all finite prefixes in \(\Game\) respectively.
We denote by \(\Last{\Prefix}\) the last vertex of the prefix \(\Prefix \in \PrefixSet{\Game}\). 
We denote by \(\PrefixSet[i]{\Game}\) (\(i \in \{\Main, \Adversary\}\)) the set of all prefixes \(\Prefix\) such that \(\Last{\Prefix} \in \Vertices[i]\).

A \kl{stochastic game} with \(\kl{\VerticesRandom} = \emptyset\) is a \AP \intro{non-stochastic two-player game},
a \kl{stochastic game} with \(\VerticesAdversary = \emptyset\) is a \AP \intro{Markov decision process (MDP)}, 
a \kl{stochastic game} with \(\VerticesMain = \VerticesAdversary = \emptyset\) is a \AP \intro{Markov chain},
and a \kl{stochastic game} with \(\VerticesMain = \VerticesRandom = \emptyset\) or \(\VerticesAdversary = \VerticesRandom = \emptyset\)  is called a non-stochastic \emph{one-player game}.
\Cref{fig:swmp-example} shows an example of a \kl{stochastic game}; \(\PlayerMain\) vertices are shown as circles, \(\PlayerAdversary\) vertices as boxes, and probabilistic vertices as diamonds. 

\begin{figure}[t]
    \centering
    \scalebox{0.92}{
    \begin{tikzpicture}[node distance=1.5cm]
        \node[square, draw] (v1) {\(v_1\)};
        
        \node[state, draw, right of=v1] (v3) {\(v_3\)};
        \node[random, draw, above of=v3] (v2) {\(v_2\)};
        \node[random, draw, right of=v3] (v5) {\(v_5\)};
        \node[square, draw, above of=v5] (v4) {\(v_4\)};
       
        \node[state, right of=v5] (v7) {\(v_7\)};
        \node[square, draw, above of=v7] (v6) {\(v_6\)};
        \node[random, draw, right of=v7] (v9) {\(v_9\)};
        \node[random, draw, above of=v9] (v8) {\(v_8\)};

        \node[state, right of=v9] (v11) {\(v_{11}\)};
        \node[square, draw, above of=v11] (v10) {\(v_{10}\)};
        
        \node[state, right of=v11] (v13) {\(v_{13}\)};
        \node[random, draw, above of=v13] (v12) {\(v_{12}\)};
        \node[square, draw, right of=v13] (v14) {\(v_{14}\)};

        %% R(1)
        \draw 
              (v1) edge[loop left] node[below, yshift=-1mm]{\small \(\EdgeValues{-2}{}\)} (v1)
        ;
        
        %% R(2)
        \draw
              (v3) edge[left, pos=0.3] node{\small \(\EdgeValues{2}{}\)} (v2)
              (v3) edge[below, pos=0.3] node{\small \(\EdgeValues{7}{}\)} (v5)
              
              (v4) edge[above left, pos=0.3] node[xshift=1mm, yshift=-0.5mm]{\(\EdgeValues{}{}\)} (v5)
        ;
        \draw[transform canvas={xshift=-0.4mm, yshift=+0.4mm}]
              (v3) edge[above left, pos=0.2] node[xshift=1.2mm, yshift=-0.2mm]{\small \(\EdgeValues{2}{}\)} (v4)
        ;
        \draw[transform canvas={xshift=+0.4mm, yshift=-0.4mm}]
              (v4) edge[below right, pos=0.2] node[xshift=-1.2mm, yshift=0.2mm]{\small \(\EdgeValues{0}{}\)} (v3)
        ;

        %% R(3)
        \draw
              (v6) edge[loop left, pos=0.3] node[xshift=2mm, yshift=-2mm]{\small \(\EdgeValues{0}{}\)} (v6)
              (v6) edge[above, pos=0.3] node[xshift=0.5mm, yshift=-0.5mm]{\small \(\EdgeValues{1}{}\)} (v8)

              (v7) edge[left, pos=0.3] node[xshift=0.5mm, yshift=0mm]{\small \(\EdgeValues{-3}{}\)} (v6)
              (v7) edge[below, pos=0.3] node[xshift=1mm, yshift=0mm]{\small \(\EdgeValues{-6}{}\)} (v9)
             
              (v9) edge[left, pos=0.3] node[xshift=1.4mm, yshift=-1.5mm]{\small \(\EdgeValues{0}{.9}\)} (v6)
              (v9) edge[right, pos=0.3] node[xshift=-0.2mm, yshift=-1mm]{\small \(\EdgeValues{5}{.1}\)} (v8)
        ;
        
        %% R(4)
        \draw[transform canvas={xshift=+0.6mm}]
              (v10) edge[right, pos=0.3] node{\(\EdgeValues{2}{}\)} (v11)
        ;
        \draw[transform canvas={xshift=-0.6mm}]
              (v11) edge[left, pos=0.3] node{\(\EdgeValues{0}{}\)} (v10)
        ;

        %% R(5)
        \draw
              (v12) edge[loop right, pos=0.3] node[xshift=1mm]{\small \(\EdgeValues{9}{.3}\)} (v12)
              (v12) edge[auto, pos=0.3] node[xshift=-0.5mm]{\small \(\EdgeValues{-6}{.7}\)} (v13)

              (v13) edge[below, pos=0.3] node{\small \(\EdgeValues{-4}{}\)} (v14)

              (v14) edge[loop right] node[below, xshift=-0.2mm, yshift=-1.2mm]{\small \(\EdgeValues{2}{}\)}(v14)
        ;

        %% Boundary vertices
        \draw
              (v2) edge[above left, pos=0.3] node[xshift=1mm, yshift=-1mm]{\small \(\EdgeValues{1}{.5}\)} (v1)
              (v2) edge[bend left, above, pos=0.2] node[yshift=-0.5mm]{\small \(\EdgeValues{4}{.5}\)} (v6)

              (v5) edge[bend left, below, pos=0.3] node{\small \(\EdgeValues{0}{.75}\)} (v1)
              (v5) edge[bend right=25, below, pos=0.15] node[xshift=-2.5mm]{\small \(\EdgeValues{-1}{.25}\)} (v13)

              (v8) edge[above, pos=0.3] node{\small \(\EdgeValues{2}{.5}\)} (v10)
              (v8) edge[bend right, above, pos=0.3] node{\small \(\EdgeValues{1}{.5}\)} (v4)
              
        %% Other external edges
              (v1) edge[above, pos=0.3] node[xshift=0.25mm, yshift=-0.5mm]{\small \(\EdgeValues{1}{}\)} (v3)
              
              (v7) edge[below, pos=0.3] node[yshift=0.5mm]{\small \(\EdgeValues{9}{}\)} (v5)
              
              (v11) edge[below, pos=0.3] node[xshift=0mm, yshift=0mm]{\small \(\EdgeValues{1}{}\)} (v9)
              
              (v10) edge[above, pos=0.3] node[xshift=0mm, yshift=0mm]{\small \(\EdgeValues{1}{}\)} (v12)
              (v13) edge[below, pos=0.3] node[xshift=-1.7mm, yshift=0.2mm]{\small \(\EdgeValues{1}{}\)} (v11)
        ;
    \end{tikzpicture}
    }
    \caption{A \kl{stochastic game}. 
    \(\PlayerMain\) vertices are denoted by circles, \(\PlayerAdversary\) vertices are denoted by boxes, and probabilistic vertices are denoted by diamonds.
    The payoff for each edge is shown in red and probability distribution out of probabilistic vertices is shown in blue.}
    \label{fig:swmp-example}
\end{figure}

\subparagraph*{Subgames and traps.}
Given a \kl{stochastic game} \(\Game = ((\Vertices, \Edges), (\VerticesMain, \VerticesAdversary, \VerticesRandom),  \ProbabilityFunction, \PayoffFunction)\), a subset \(\Vertices' \subseteq \Vertices\) of vertices \emph{induces} a subgame if $(i)$
every vertex  \(\Vertex' \in \Vertices'\) has an outgoing edge in $\Vertices'$, that is
\(\OutNeighbours{\Vertex'} \cap \Vertices' \neq \emptyset\), and $(ii)$
every probabilistic vertex \(\Vertex' \in \VerticesRandom \intersection \Vertices'\) has all 
outgoing edges in~$\Vertices'$, that is \(\OutNeighbours{\Vertex'} \subseteq \Vertices'\).
The induced \emph{subgame} is 
\(((\Vertices',  \Edges'),
(\VerticesMain \intersection \Vertices', \VerticesAdversary \intersection \Vertices', \VerticesRandom \intersection \Vertices'), \ProbabilityFunction',
\PayoffFunction')\), where \(\Edges' = \Edges \intersection (\Vertices' \times \Vertices')\), and \(\ProbabilityFunction'\) and \(\PayoffFunction'\) are restrictions of \(\ProbabilityFunction\) and \(\PayoffFunction\) respectively to \((\Vertices', \Edges')\).
Let \(\Objective\) be an objective in the \kl{stochastic game} \(\Game\). We define the restriction of \(\Objective\) to a subgame~\(\Game'\) of \(\Game\) 
to be the set of all \kl{plays} in \(\Game'\) satisfying \(\Objective\), that is, the set \(\PlaySet{\Game'} \intersection \Objective\).
If \(\Game\) is an MDP, then a subgame \(\Game'\) of \(\Game\) is also an MDP, and is called a \emph{subMDP} of \(\Game\).

If \(T \subseteq \Vertices\) is such that for all \(v \in T\), 
if \(v \in \VerticesMain \union \VerticesRandom\) then \(\OutNeighbours{v} \subseteq T\)
and if \(v \in \VerticesAdversary\) then \(\OutNeighbours{v} \intersection T \ne \emptyset\), then \(T\) induces a subgame, and the subgame is a \intro{trap} for \(\PlayerMain\) in \(\Game\), since \(\PlayerAdversary\) can ensure that if the token reaches \(T\), then it never escapes.

\subparagraph*{Boolean objectives.}
\AP
A \intro{Boolean objective} \(\BooleanObjective\) is a Borel-measurable subset of \(\PlaySet{\Game}\)~\cite{Mar98}.
A \kl{play} \(\Play \in \PlaySet{\Game}\) \emph{satisfies} an objective \(\BooleanObjective\) if \(\Play \in \BooleanObjective\). 
In a \kl{stochastic game} \(\Game\) with objective \(\BooleanObjective\), the objective of \(\PlayerMain\) is \(\BooleanObjective\), and since \(\Game\) is a zero-sum game, the objective of \(\PlayerAdversary\) is the complement set \(\ComplementObjective{\BooleanObjective} = \PlaySet{\Game} \setminus \BooleanObjective\). 
Given \(T \subseteq \Vertices\), define the following Boolean objectives:
\begin{itemize}
\item the \intro{reachability} objective \(\kl{\ReachObj}_{\Game}(T) \define \{ \Play \in \PlaySet{\Game} \suchthat T \intersection \occ(\Play) \ne \emptyset \}\), the set of all \kl{plays} that visit a vertex in \(T\),
\item the dual \emph{safety} objective \(\SafeObj_{\Game}(T) \define \{ \Play \in \PlaySet{\Game} \suchthat \occ(\Play) \subseteq T \}\),
the set of all \kl{plays} that never visit a vertex outside \(T\),
\item 
the \emph{\Buchi} objective \(\textsf{\Buchi}_{\Game}(T) \define \{ \Play \in \PlaySet{\Game} \suchthat T \intersection \inf(\Play) \ne \emptyset \}\), the set of all \kl{plays} that visit a vertex in~\(T\) infinitely often, and
\item 
the dual \emph{co\Buchi} objective \(\textsf{co\Buchi}_{\Game}(T) \define \{ \Play \in \PlaySet{\Game} \suchthat \inf(\Play) \subseteq T \}\), the set of all \kl{plays} that eventually only visit vertices in \(T\). 
\end{itemize}

\subparagraph*{Quantitative objectives.}
\AP
A \intro{quantitative objective} is a Borel-measurable function of the form \(\Objective \colon \PlaySet{\Game} \to \Reals\). 
In a \kl{stochastic game} \(\Game\) with objective \(\Objective\), the objective of \(\PlayerMain\) is \(\Objective\) and the objective of \(\PlayerAdversary\) is \(-\Objective\), the negative of \(\Objective\). 
Let \(\Play = v_0 v_1 v_2 \cdots\) be a \kl{play}. 
Some common examples of \kl{quantitative objectives} include 
the \emph{mean-payoff} objective
\(\MP(\Play) = \liminf_{n \to \infty} \frac{1}{n} \sum_{i=0}^{n} \PayoffFunction(v_i, v_{i+1}) \), and
the \emph{liminf} objective
\(\liminfObj(\Play) = \liminf_{n \to \infty} \PayoffFunction(v_n, v_{n+1})\).
In this work, we also consider the \emph{window mean-payoff} objective, which is defined in \Cref{sec:window-mean-payoff-objectives}.
Corresponding to a \kl{quantitative objective} \(\Objective\), we define \AP \intro{threshold objectives} which are \kl{Boolean objectives} \(\BooleanObjective\) of the form \(\{\Play \in \PlaySet{\Game} \suchthat \Objective(\Play) \bowtie \Threshold\}\) for thresholds \(\Threshold \in \Reals\) and for \(\bowtie ~\in \{<, \le, >, \ge\}\).
We denote this \kl{threshold objective} succinctly as \(\ThresholdObjective{\bowtie \Threshold}\).

\subparagraph*{Prefix independence.}
\AP An objective is said to be \intro{prefix-independent} if it only depends on the suffix of a \kl{play}.
Formally, a \kl{Boolean objective} \(\BooleanObjective\) is \kl{prefix-independent} if for all \kl{plays} \(\Play\) and \(\Play'\) with a common suffix (that is, \(\Play'\) can be obtained from \(\Play\) by removing and adding a finite prefix), we have that \(\Play \in \BooleanObjective\) if and only if \(\Play' \in \BooleanObjective\). 
Similarly, a \kl{quantitative objective} \(\Objective\) is \kl{prefix-independent} if for all \kl{plays} \(\Play\) and \(\Play'\) with a common suffix, we have that \(\Objective(\Play) = \Objective(\Play')\). 
Mean payoff and liminf are examples of \kl{prefix-independent} objectives, whereas \kl{reachability} and discounted payoff~\cite{AG11} are not.

\subparagraph*{Strategies.}
A (deterministic or pure) \emph{strategy}\footnote{
We only consider the satisfaction and expectation of Borel-measurable objectives, and deterministic strategies suffice for such objectives~\cite{CDGH15}.
Satisfying two goals simultaneously, e.g., \(\Prob(\kl{\ReachObj}(T_1)) > 0.5 \wedge \Prob(\kl{\ReachObj}(T_2)) > 0.5\) requires randomization and is not allowed by our definition.
} for Player \(\i \in \{\Main, \Adversary\}\) in a game \(\Game\) is a function \(\Strategy[i]: \PrefixSet[i]{\Game} \to \Vertices\) that maps prefixes ending in a vertex \(\Vertex \in \Vertices[i]\) to a successor of \(\Vertex\). 
Strategies can be realized as the output of a (possibly infinite-state) Mealy machine~\cite{HU79}.
A \emph{Mealy machine} is a deterministic transition system with transitions labelled by input/output pairs. 
Formally, a Mealy machine \(M\) is a tuple \((Q, q_0, \Sigma_i, \Sigma_o, \Delta, \delta)\) 
where 
\(Q\) is the set of states of \(M\) (the memory of the induced strategy), \(q_0 \in Q\) is the initial state,
\(\Sigma_i\) is the input alphabet, 
\(\Sigma_o\) is the output alphabet,
\(\Delta \colon Q \times \Sigma_i \to Q\) is a transition function that reads the current state of~\(M\) and an input letter and returns the next state of  \(M\), and \(\delta \colon Q \times \Sigma_i \to \Sigma_o \) is an output function that reads the current state of
\(M\) and an input letter and returns an output letter.
The transition function \(\Delta\) can be extended to a function \(\hat{\Delta} \colon Q \times \Sigma_i^+ \to Q\) that reads words and can be defined inductively by \(\hat{\Delta}(q, a) = \Delta(q, a)\) and \(\hat{\Delta}(q, x \cdot a) = \Delta(\hat{\Delta}(q, x), a) \), for \(q \in Q\), \(x \in \Sigma_i^+\), and \(a \in \Sigma_i\).
The output function~\(\delta\) can be also be similarly extended to a function  \(\hat{\delta} \colon Q \times \Sigma_i^+ \to \Sigma_o\) on words and can be defined inductively by \(\hat{\delta}(q, a) = \delta(q, a)\) and \(\hat{\delta}(q, x \cdot a) = \delta(\hat{\Delta}(q, x), a) \), for \(q \in Q\), \(x \in \Sigma_i^+\), and \(a \in \Sigma_i\).

A player's strategy can be defined by a Mealy machine whose input and output alphabets are \(\Vertices\) and \(\Vertices \union \{\epsilon\}\) respectively. 
For \(i \in \{\Main, \Adversary\}\), a strategy \(\Strategy[i]\) of \(\PlayerI\) can be defined by a Mealy machine \((Q, q_0, V, V \union \{\epsilon\}, \Delta, \delta)\) as follows: 
Given a prefix \(\Prefix \in \PrefixSet[i]{\Game}\) ending in a \(\PlayerIDash\) vertex, the strategy \(\Strategy[i]\) defined by a Mealy machine is \(\Strategy[i](\Prefix) = \hat{\delta}(q_0, \Prefix)\).
Intuitively, in each turn, if the token is on a vertex~\(v\) that belongs to \(\PlayerI\) for \(i \in \{\Main, \Adversary\}\), then~\(v\) is given as input to the Mealy machine, and the Mealy machine outputs the successor vertex of~\(v\) that \(\PlayerI\) must choose. 
Otherwise, the token is on a vertex \(v\) that either belongs to \(\PlayerI\)'s opponent or is a probabilistic vertex, in which case, the Mealy machine outputs the symbol \(\epsilon\) to denote that \(\PlayerI\) cannot decide the successor vertex of \(v\).
Memoryless strategies can be defined by Mealy machines with only one state.

The \emph{memory size} of a strategy \(\Strategy[i]\) is the smallest number of states a Mealy machine defining \(\Strategy[i]\) can have. 
A strategy $\Strategy[\i]$ is \emph{memoryless} if $\Strategy[\i](\Prefix)$ only depends on the last element of the prefix~\(\Prefix\), that is, for all prefixes \(\Prefix, \Prefix' \in \PrefixSet[\i]{\Game}\) if \(\Last{\Prefix} = \Last{\Prefix'}\), then \(\Strategy[\i](\Prefix) = \Strategy[\i](\Prefix')\).

A \emph{strategy profile} \(\StrategyProfile = (\Strategy[\Main], \Strategy[\Adversary])\) is a pair of strategies \(\Strategy[\Main]\) and \(\Strategy[\Adversary]\) of \(\PlayerMain\) and \(\PlayerAdversary\) respectively.
A \kl{play} \(\Play = \Vertex[0] \Vertex[1] \dotsm\) is \emph{consistent} with a strategy \(\Strategy[i] \) of \(\PlayerI\) (\(i \in \{\Main, \Adversary\}\)) if for all \(j \geq 0\) with \(\Vertex[j] \in \Vertices[i]\), we have \(\Vertex[j+1] = \Strategy[i](\PlayPrefix{j})\). 
A \kl{play} \(\Play\) is an \emph{outcome} of a profile \(\StrategyProfile = (\Strategy[\Main], \Strategy[\Adversary])\) if it is consistent with both \(\Strategy[\Main]\) and \(\Strategy[\Adversary]\). 
For a \kl{Boolean objective} \(\BooleanObjective\), we denote by \(\Pr{\Strategy[\Main], \Strategy[\Adversary]}{\Game, \Vertex}{\BooleanObjective}\) the probability that an outcome of the profile \((\Strategy[\Main], \Strategy[\Adversary])\) in \(\Game\) with initial vertex \(\Vertex\) satisfies \(\BooleanObjective\). 
Similarly, we denote by \(\Pr{\Strategy[\Main], \Strategy[\Adversary]}{\Game, \Prefix}{\BooleanObjective}\) the probability that an outcome of the profile \((\Strategy[\Main], \Strategy[\Adversary])\) in \(\Game\) with initial prefix \(\Prefix\) satisfies \(\BooleanObjective\). 

The \emph{cone} at \(\Prefix\) is the set \(\Cone{\Prefix} \define \{\Play \in \PlaySet{\Game} \suchthat \Prefix \text{ is a prefix of } \Play \}\), the set of all \kl{plays} having \(\Prefix\) as a prefix. 
First, we define this probability measure over cones inductively as follows. 
If \(\abs{\Prefix} = 0\), then \(\Prefix \) is just a vertex \(\Vertex[0]\), and 
\(\Pr{\Strategy[\Main], \Strategy[\Adversary]}{\Game, \Vertex}{\Cone{\Prefix}}\) is $1$ if $\Vertex = \Vertex[0]$, and $0$ otherwise. 
For the inductive case \(\abs{\Prefix} > 0\), 
there exist \(\Prefix' \in \PrefixSet{\Game}\) and \(\Vertex' \in \Vertices\) such that \(\Prefix = \Prefix' \cdot \Vertex'\), and we have
\(\Pr{\Strategy[\Main], \Strategy[\Adversary]}{\Game, \Vertex}{\Cone{\Prefix' \cdot \Vertex'}} \) given by the following:
\begin{equation*}
\begin{cases}
\Pr{\Strategy[\Main], \Strategy[\Adversary]}{\Game, \Vertex}{\Cone{\Prefix'}}  \cdot \ProbabilityFunction(\Last{\Prefix'})(\Vertex') & \text{if } \Last{\Prefix'} \in \VerticesRandom,\\
\Pr{\Strategy[\Main], \Strategy[\Adversary]}{\Game, \Vertex}{\Cone{\Prefix'}}   
& \text{if } \Last{\Prefix'} \in \Vertices[i]\\
& \text{ and } \Strategy[i](\Prefix') = \Vertex', \\
0 & \text{otherwise.}\\
\end{cases}
\end{equation*}
It is sufficient to define \(\Pr{\Strategy[\Main], \Strategy[\Adversary]}{\Game, \Vertex}{\BooleanObjective}\) on cones in \(\Game\) since a measure defined on cones extends to a unique measure on \(\PlaySet{\Game}\) by Carath\'{e}odory's extension theorem~\cite{Billingsley86}.

\subparagraph*{Satisfaction probability of Boolean objectives.}
Let \(\BooleanObjective\) be a \kl{Boolean objective}.
A strategy \(\Strategy[\Main]\) of \(\PlayerMain\)  is \emph{winning} with probability \(p\) from a vertex \(\Vertex\) in \(\Game\) for objective \(\BooleanObjective\) if \(\Pr{\Strategy[\Main], \Strategy[\Adversary]}{\Game, \Vertex}{\BooleanObjective} \ge p\) for all strategies \(\Strategy[\Adversary]\) of \(\PlayerAdversary\).
A strategy \(\Strategy[\Main]\) of \(\PlayerMain\) is \emph{positive} winning (resp., \emph{almost-sure} winning) from \(v\) for \(\PlayerMain\) in~\(\Game\) with objective \(\BooleanObjective\) if 
\(\Pr{\Strategy[\Main], \Strategy[\Adversary]}{\Game, \Vertex}{\BooleanObjective} > 0\) (resp., \(\Pr{\Strategy[\Main], \Strategy[\Adversary]}{\Game, \Vertex}{\BooleanObjective} = 1\)) for all strategies \(\Strategy[\Adversary]\) of \(\PlayerAdversary\).
In the above, if such a strategy \(\Strategy[\Main]\) exists, then the vertex \(v\) is said to be positive winning (resp., almost-sure winning) for \(\PlayerMain\).
If a vertex $v$ is positive winning (resp., almost-sure winning) for \(\PlayerMain\), 
then \(\PlayerMain\) is said to \kl{play} \emph{optimally} from $v$ if she follows a positive (resp., almost-sure) winning strategy from $v$.
We omit analogous definitions for \(\PlayerAdversary\).

\subparagraph{Maximal end components (MECs).}
Let \(\Game = ((\Vertices, \Edges), (\VerticesMain, \emptyset, \VerticesRandom), \ProbabilityFunction, \PayoffFunction)\) be an MDP. 
Given a subset \(\Vertices' \subseteq \Vertices\) of vertices, the subMDP \(\Game'\) of \(\Game\) induced by \(\Vertices'\) is an \emph{end component} of \(\Game\) if there exists a strategy \(\Strategy[\Main]\) such that for all vertices \(v\), \(v'\) in the subMDP \(\Game'\), we have that \(v'\) is almost surely reachable from \(v\) in \(\Game'\).
Intuitively, if the token is in an end component \(\Game'\), then \(\PlayerMain\) has a strategy such that, with probability~\(1\), the token never leaves the end component and every vertex is reachable from every other vertex in the end component. 
An end component that is not contained in any other end component is called a \emph{maximal end component} (MEC). 

MECs have traditionally been defined for MDPs only.
Recent works such as~\cite{AKW19} have extended the definition of MECs to \kl{stochastic games} in general.
Let \(\Game = ((\Vertices, \Edges), (\VerticesMain, \VerticesAdversary, \VerticesRandom), \ProbabilityFunction, \PayoffFunction)\) be a \kl{stochastic game}. 
Given a subset \(\Vertices' \subseteq \Vertices\) of vertices, the subgame \(\Game'\) of \(\Game\) induced by \(\Vertices'\) is an \emph{end component} of \(\Game\) if there exists a strategy profile \((\Strategy[\Main], \Strategy[\Adversary])\) such that for all vertices \(v\), \(v'\) in the subgame \(\Game'\), we have that \(v'\) is almost surely reachable from \(v\) in \(\Game'\).
In other words, if the players \emph{cooperate}, then they can ensure that once the token enters an end component, then it never leaves the end component, and that every vertex is reachable from every other vertex in the end component. 
Equivalently, the subgame \(\Game'\) of \(\Game\) induced by \(\Vertices'\) is an end component of \(\Game\) if, in the MDP \(\Game_{\mathsf{MDP}} = ((\Vertices, \Edges), (\VerticesMain \union \VerticesAdversary, \emptyset, \VerticesRandom), \ProbabilityFunction, \PayoffFunction)\) obtained by replacing all \(\PlayerAdversary\) vertices in \(\Game\) by \(\PlayerMain\) vertices, the subMDP \(\Game_{\mathsf{MDP}}'\) of \(\Game_{\mathsf{MDP}}\) induced by \(\Vertices'\) is an end component of \(\Game_{\mathsf{MDP}}\) (in the sense of an end component of an MDP).

\subparagraph*{Expected value of quantitative objectives.}
Let \(\Objective\) be a \kl{quantitative objective}. 
Given a strategy profile \(\StrategyProfile = (\Strategy[\Main], \Strategy[\Adversary])\) and an initial vertex \(v\), let \(\ExpOutcome{v}{\Objective}{\StrategyProfile}\) denote the \emph{expected \(\Objective\)-value of the outcome} of the strategy profile \(\StrategyProfile\) from \(v\), that is,
the expectation of \(\Objective\) over all \kl{plays} with initial vertex \(v\) under the probability measure \(\Pr{\Strategy[\Main], \Strategy[\Adversary]}{\Game, v}{\Objective}\).
Similarly, we also define \(\ExpOutcome{\Prefix}{\Objective}{\StrategyProfile}\), the expected \(\Objective\)-value of an outcome of the strategy profile \(\StrategyProfile\) following a prefix \(\Prefix\).

We only consider objectives \(\Objective\) that are Borel-measurable and whose image is bounded.
Thus, by determinacy of Blackwell games~\cite{Mar98}, we have that \kl{stochastic games} with objective \(\Objective\) are determined.
That is, we have
\(
    \sup_{\Strategy[\Main]} \inf_{\Strategy[\Adversary]} \ExpOutcome{v}{\Objective}{\StrategyProfile} = \inf_{\Strategy[\Adversary]} \sup_{\Strategy[\Main]} \ExpOutcome{v}{\Objective}{\StrategyProfile}
\).
We call this quantity the \emph{expected \(\Objective\)-value of the vertex} \(v\) and denote it by \(\ExpValue{v}{\Objective}\).
We say that \(\PlayerMain\) plays \emph{optimally} from a vertex \(v\) if she follows a strategy \(\Strategy[\Main]\) such that for all strategies \(\Strategy[\Adversary]\) of \(\PlayerAdversary\), the expected \(\Objective\)-value of the outcome is at least \(\ExpValue{v}{\Objective}\).
Similarly, \(\PlayerAdversary\) plays \emph{optimally} if he follows a strategy \(\Strategy[\Adversary]\) such that for all strategies \(\Strategy[\Main]\) of \(\PlayerMain\), the expected \(\Objective\)-value of the outcome is at most \(\ExpValue{v}{\Objective}\).
If \(\Objective\) is a \kl{prefix-independent} objective, then we have the following relation between the expected \(\Objective\)-value of a vertex \(v\) and the expected \(\Objective\)-values of its out-neighbours.

\begin{restatable}[\intro{Bellman equations}]{proposition}{bellman}%
\label{prop:expected-wmp-value-out-neighbours}
    If \(\Objective\) is a \kl{prefix-independent} objective, then the following equations hold for all \(v \in \Vertices\).
    \[
        \ExpValue{v}{\Objective} = 
        \begin{cases}
            \max_{v' \in \OutNeighbours{v}} \ExpValue{v'}{\Objective}
            & \text{if } v \in \VerticesMain \\
            \min_{v' \in \OutNeighbours{v}} \ExpValue{v'}{\Objective} 
            & \text{if } v \in \VerticesAdversary  \\
            \sum_{v' \in \OutNeighbours{v}} \ProbabilityFunction(v)(v') \cdot \ExpValue{v'}{\Objective} & \text{if } v \in \VerticesRandom
        \end{cases}
    \]
\end{restatable}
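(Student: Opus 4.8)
The plan is to prove each of the three cases by linking \kl{plays} and strategies from \(v\) to those from its out-neighbours, using the defining feature of \kl{prefix-independence}: prepending a single vertex to a \kl{play} leaves its \(\Objective\)-value unchanged, i.e.\ \(\Objective(w \cdot \Play) = \Objective(\Play)\) for every \(w \in \Vertices\) and every \kl{play} \(\Play\). I will work with the saddle-point characterization \(\ExpValue{v}{\Objective} = \sup_{\Strategy[\Main]} \inf_{\Strategy[\Adversary]} \ExpOutcome{v}{\Objective}{(\Strategy[\Main], \Strategy[\Adversary])} = \inf_{\Strategy[\Adversary]} \sup_{\Strategy[\Main]} \ExpOutcome{v}{\Objective}{(\Strategy[\Main], \Strategy[\Adversary])}\), which holds by the assumed determinacy, together with the optimal strategies it guarantees. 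The key device is a \emph{shift} of strategies: for a strategy \(\Strategy[i]\) and a vertex \(w\), let \(\Strategy[i]^{w}\) be defined by \(\Strategy[i]^{w}(\Prefix) = \Strategy[i](w \cdot \Prefix)\); conversely, an optimal strategy from an out-neighbour \(v'\) can be lifted to a strategy from \(v\) that plays it after the first edge. Since strategies may depend on the entire history, several such lifts can be combined into one strategy from \(v\) that, after the first step to some \(v'\), follows the strategy chosen for \(v'\).

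For \(v \in \VerticesRandom\), I would first record the decomposition of the outcome over the first transition: for any profile \((\Strategy[\Main], \Strategy[\Adversary])\), conditioning on the first move gives \(\ExpOutcome{v}{\Objective}{(\Strategy[\Main], \Strategy[\Adversary])} = \sum_{v' \in \OutNeighbours{v}} \ProbabilityFunction(v)(v') \cdot \ExpOutcome{v'}{\Objective}{(\Strategy[\Main]^{v'}, \Strategy[\Adversary]^{v'})}\), where identifying the \(\Objective\)-value of a \kl{play} with that of its shift is exactly \kl{prefix-independence}. For the lower bound, let \(\PlayerMain\) combine optimal strategies from each \(v'\); against any \(\Strategy[\Adversary]\), each summand is at least \(\ExpValue{v'}{\Objective}\) (the shifted \(\Strategy[\Adversary]^{v'}\) is just some \(\PlayerAdversary\)-strategy from \(v'\)), so the whole sum is at least \(\sum_{v'} \ProbabilityFunction(v)(v') \cdot \ExpValue{v'}{\Objective}\). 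The upper bound is symmetric, with \(\PlayerAdversary\) combining optimal strategies from each \(v'\). Together these give the probabilistic equation.

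For \(v \in \VerticesMain\), the \(\ge\) direction has \(\PlayerMain\) move to a maximizing out-neighbour \(v^\ast\) and then follow an optimal strategy from \(v^\ast\); by \kl{prefix-independence} the outcome value against every \(\PlayerAdversary\)-strategy equals the outcome value of the shifted profile from \(v^\ast\), hence is at least \(\ExpValue{v^\ast}{\Objective} = \max_{v' \in \OutNeighbours{v}} \ExpValue{v'}{\Objective}\). For \(\le\), I pass to the \(\inf \sup\) form and let \(\PlayerAdversary\) fix, for each out-neighbour, an optimal response from that out-neighbour and play the one matching \(\PlayerMain\)'s first move; then whichever \(v'\) \(\PlayerMain\) selects, the outcome is at most \(\ExpValue{v'}{\Objective} \le \max_{v'} \ExpValue{v'}{\Objective}\). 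The case \(v \in \VerticesAdversary\) is completely dual, interchanging the two players and replacing \(\max\) by \(\min\).

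The \(\epsilon\)-free argument above relies on optimal strategies existing from every vertex; if one prefers to avoid that, the same two-sided comparisons go through verbatim with \(\epsilon\)-optimal strategies and a final \(\epsilon \to 0\). I expect the only delicate point to be the bookkeeping for the shift operation: checking that the probability measure genuinely factors through the first transition, that the lifted and combined strategies are well-defined functions on prefixes, and that \kl{prefix-independence} is applied at the right granularity — invariance under one prepended vertex — so that each outcome from \(v\) has the same \(\Objective\)-value as its shifted counterpart from \(v'\).
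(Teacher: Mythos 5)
Your proposal is correct and follows essentially the same route as the paper's proof: decompose the expected outcome over the first transition, use prefix-independence to identify the value of a shifted play with that of the original, and then pass to the sup/inf over strategies. The only difference is one of rigour rather than substance — your explicit strategy-shift and lift bookkeeping, and your use of the \(\inf\sup\) form for the \(\le\) direction at \(\PlayerMain\) vertices, make precise what the paper dispatches more informally (e.g.\ its remark that the value at a \(\PlayerMain\) vertex is a convex combination of the out-neighbours' values).
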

\begin{proof}
    First, we show these equations hold for every probabilistic vertex \(v \in \VerticesRandom\).
    For every strategy profile \(\StrategyProfile\), we have that 
    \(\ExpOutcome{v}{\Objective}{\StrategyProfile} = \sum_{v' \in \OutNeighbours{v}} \ProbabilityFunction(v)(v') \cdot \ExpOutcome{v \cdot v'}{\Objective}{\StrategyProfile}\), that is, the expected \(\Objective\)-value of an outcome of the profile \(\StrategyProfile\) with initial vertex \(v\) is equal to the \(\ProbabilityFunction(v)\)-weighted average of the expected \(\Objective\)-value of the outcomes of \(\StrategyProfile\) with prefix \(v \cdot v'\) for out-neighbours \(v'\) of \(v\).
    Since \(\Objective\) is \kl{prefix-independent}, the \(\Objective\)-value of a \kl{play} is not affected by removing the prefix \(v\cdot v'\).
    Thus, we have that \(\ExpOutcome{v \cdot v'}{\Objective}{\StrategyProfile} = \ExpOutcome{v'}{\Objective}{\StrategyProfile}\) for all out-neighbours \(v'\) of \(v\).
    Thus, for all strategy profiles \(\StrategyProfile\), we see that 
    \(\ExpOutcome{v}{\Objective}{\StrategyProfile} = \sum_{v' \in \OutNeighbours{v}} \ProbabilityFunction(v)(v') \cdot \ExpOutcome{v'}{\Objective}{\StrategyProfile}\).
    In particular, after taking the supremum and the infimum over the strategies of the players, we get
    \(\ExpValue{v}{\Objective} = \sum_{v' \in \OutNeighbours{v}} \ProbabilityFunction(v)(v') \cdot \ExpValue{v'}{\Objective}\).

    Now, we show that \(\ExpValue{v}{\Objective} = \max_{v'\in \OutNeighbours{v}}\ExpValue{v'}{\Objective}\) for every \(\PlayerMain\) vertex \(v \in \VerticesMain\).
    First, we show that \(\ExpValue{v}{\Objective} \ge \max_{v'\in \OutNeighbours{v}}\ExpValue{v'}{\Objective}\).
    Suppose \(u \in \OutNeighbours{v}\) is an out-neighbour of \(v\) with the maximum expected \(\Objective\)-value, that is, \(\ExpValue{u}{\Objective} \ge \ExpValue{v'}{\Objective}\) for all \(v' \in \OutNeighbours{v}\).
    Since \(\Objective\) is \kl{prefix-independent}, for any expected \(\Objective\)-value that can be achieved from \(u\), the same expected \(\Objective\)-value can also be achieved from \(v\) by first moving the token to \(u\), and then playing as if the game started from \(u\).
    Thus, we get that \(\ExpValue{v}{\Objective} \ge \max_{v'\in \OutNeighbours{v}}\ExpValue{v'}{\Objective}\).
    Now, we show the other direction \(\ExpValue{v}{\Objective} \le \max_{v'\in \OutNeighbours{v}}\ExpValue{v'}{\Objective}\).
    Starting from \(v\), in the most general setting, \(\PlayerMain\) chooses an out-neighbour according to some probability distribution.
    Since \(\Objective\) is \kl{prefix-independent}, the expected \(\Objective\)-value of \(v\) is a convex combination of the expected \(\Objective\)-values of out-neighbours of \(v\), which is at most \(\max_{v'\in \OutNeighbours{v}}\ExpValue{v'}{\Objective}\).
    
    We omit the proof for \(\PlayerAdversary\) vertices \(v \in \VerticesAdversary\) as it is analogous to the case of \(\PlayerMain\) vertices.
\end{proof}

In this paper, we consider the \kl{expectation problem} for \kl{prefix-independent} objectives.
Our solution in turn uses the \kl{almost-sure satisfaction problem}.
The decision problems are defined as follows.

\subparagraph*{Decision problems.}
Given a \kl{stochastic game} \(\Game\), a \kl{quantitative objective} \(\Objective\), a vertex \(v\), and a threshold \(\Threshold \in \Rationals\), the following decision problems are relevant:
\begin{itemize}
    \item  \AP \intro{almost-sure satisfaction problem}: 
    Is vertex \(v\) almost-sure winning for \(\PlayerMain\) for a \kl{threshold objective} \(\ThresholdObjective{> \Threshold}\)?
    \item \AP \intro{expectation problem}:
    Is \(\ExpValue{v}{\Objective} \ge \Threshold\)? That is, is the expected \(\Objective\)-value of \(v\) at least \(\Threshold\)? 
\end{itemize}

The reader is pointed to~\cite{AG11} and~\cite{FBB24} for a more comprehensive discussion on the above-mentioned concepts.

\section{Reducing expectation to almost-sure satisfaction}%
\label{sec:reducing-expectation-to-almost-sure-satisfaction}

In this section, we show a reduction (\Cref{thm:six-conditions}) of the \kl{expectation problem} for bounded quantitative \kl{prefix-independent} objectives \(\Objective\) to the \kl{almost-sure satisfaction problem} for the corresponding \kl{threshold objectives} \(\ThresholdObjective{> \Threshold}\).
The reduction involves guessing a value \(\Val{v}\) for every vertex \(v\) in the game, and then verifying if the guessed values are equal to the expected \(\Objective\)-values of the vertices.
\Cref{thm:six-conditions} generalizes~\cite[Lemma~8]{CHH09} which studies the satisfaction problem for \kl{prefix-independent} \kl{Boolean objectives}, as \kl{Boolean objectives} can be viewed as a special case of \kl{quantitative objectives} by restricting the range to \(\{0, 1\}\).
We further discuss the difference in approaches between \Cref{thm:six-conditions} and \cite[Lemma~8]{CHH09} in \Cref{sec:discussion}.

Given a game \(\Game\) and a bounded \kl{prefix-independent} \kl{quantitative objective} \(\Objective\), 
our reduction requires the existence of an integer bound \(\DenBound{\Objective}\) on the denominators of expected \(\Objective\)-values of vertices in \(\Game\).
Since \(\Objective\) is bounded, there exists an integer \(\ObjectiveBound{\Objective}\) such that  \(\abs{\Objective(\Play)} \le \ObjectiveBound{\Objective}\) for every \kl{play} \(\Play\) in \(\Game\).
Thus, for every vertex \(v\) in \(\Game\), one can write \(\ExpValue{v}{\Objective}\) as \(\frac{p}{q}\), where \(p\) and \(q\) are integers such that \(\abs{p} \le \ObjectiveBound{\Objective} \cdot \DenBound{\Objective}\) and \(0 < q \le \DenBound{\Objective} \).
The bounds \(\ObjectiveBound{\Objective}\) and \(\DenBound{\Objective}\) may depend on the objective and the structure of the graph, i.e., number of vertices, edge payoffs, probability distributions in the game, etc.
These bounds effectively discretize the set of possible expected \(\Objective\)-values of the vertices, as there are at most \((2 \cdot  \ObjectiveBound{\Objective} \cdot \DenBound{\Objective} + 1) \cdot \DenBound{\Objective}\) distinct possible values.
This directly gives a bound on the granularity of the possible expected \(\Objective\)-values of vertices, that is, the minimum difference between two possible values of vertices, and we represent this quantity by \(\GranularityBound{\Objective}\).
Observe that given two rational numbers with denominators at most \(\DenBound{\Objective}\), the difference between them is at least \((\frac{1}{\DenBound{\Objective}})^2\), and thus, we let \(\GranularityBound{\Objective}\) be \((\frac{1}{\DenBound{\Objective}})^2\).

\subsection{Value vectors and value classes}
We first define and give notations for \kl{value vectors}, which are useful in describing the reduction, and then look at some of their interesting and useful properties.
\subparagraph*{Definitions and notations.}
A \AP \intro(value){vector} \(\ValVector = (\Val{v})_{v \in \Vertices}\) of reals indexed by vertices in \(\Vertices\) induces a partition  of \(\Vertices\) such that all vertices with the same value in \(\ValVector\) belong to the same part in the partition.
Let \(\ValClassCount\) denote the number of parts in the partition, and let us denote the parts by \(\{\ValClass{1}, \ValClass{2}, \ldots, \ValClass{\ValClassCount}\} \).
We call each part \(\ValClass{i}\) of the partition an \(\ValVector\)-\emph{class}, or simply, \emph{class} if \(\ValVector\) is clear from the context.
For all \(1 \le i \le \ValClassCount\), let \(\ClassVal{i}\) denote the \(\ValVector\)-value of the class \(\ValClass{i}\). 
Given two vectors \(\ValVector, \ValVectorS\), we write \(\ValVector \ge \ValVectorS\) if for all \(v \in \Vertices\), we have \(\mathsf{r}_{v} \ge \mathsf{s}_{v}\), and we write \(\ValVector > \ValVectorS\) if we have that \(\ValVector \ge \ValVectorS\) and there exists \(v \in \Vertices\) such that \(\mathsf{r}_{v} > \mathsf{s}_{v}\).
For a constant \(c \in \Reals\), we denote by \(\ValVector + c\) the vector obtained by adding \(c\) to each component of \(\ValVector\).

For all \(1 \le i \le \ValClassCount\), a vertex \(v \in \ValClass{i}\) is a \AP \intro{boundary vertex} if \(v\) is a \kl{probabilistic vertex} and has an out-neighbour not in \(\ValClass{i}\), i.e., if \(v \in \kl{\VerticesRandom}\) and \(\OutNeighbours{v} \not \subseteq \ValClass{i}\). 
Let \(\Bnd{i}\) denote the set of \kl{boundary vertices} in the class \(\ValClass{i}\).
For all \(1 \le i \le \ValClassCount\), let \(\Game_{\ValClass{i}}\) denote the restriction of \(\Game\) to vertices in \(\ValClass{i}\) with all vertices in \(\Bnd{i}\) changed to absorbing vertices with a self-loop.
The edge payoffs of these self loops are not important (we assume them to be \(0\)) as we restrict our attention to a subgame of \(\Game_{\ValClass{i}}\) that does not contain \kl{boundary vertices}.
\begin{example}%
\label{ex:vc-notations}
    \begin{figure}[t]
        \centering
        \scalebox{0.92}{
        \begin{tikzpicture}[node distance=1.5cm]
            \node[square, draw] (v1) {\(v_1\)};
            
            \node[state, draw, right of=v1, xshift=4mm] (v3) {\(v_3\)};
            \node[random, draw, above of=v3] (v2) {\(v_2\)};
            \node[random, draw, right of=v3] (v5) {\(v_5\)};
            \node[square, draw, above of=v5] (v4) {\(v_4\)};
           
            \node[state, right of=v5, xshift=9mm] (v7) {\(v_7\)};
            \node[square, draw, above of=v7] (v6) {\(v_6\)};
            \node[random, draw, right of=v7] (v9) {\(v_9\)};
            \node[random, draw, above of=v9] (v8) {\(v_8\)};
    
            \node[state, right of=v9, xshift=6.5mm] (v11) {\(v_{11}\)};
            \node[square, draw, above of=v11] (v10) {\(v_{10}\)};
            
            \node[state, right of=v11, xshift=1mm] (v13) {\(v_{13}\)};
            \node[random, draw, above of=v13] (v12) {\(v_{12}\)};
            \node[square, draw, right of=v13] (v14) {\(v_{14}\)};
    
            \draw[rounded corners=4mm] (-1.2, -0.6) rectangle (0.6, 2.3);
            \draw[rounded corners=4mm] (0.9, -0.6) rectangle (4.6, 2.3);
            \draw[rounded corners=4mm] (4.9, -0.6) rectangle (8.5, 2.3);
            \draw[rounded corners=4mm] (8.8, -0.6) rectangle (10.1, 2.3);
            \draw[rounded corners=4mm] (10.4, -0.6) rectangle (13.7, 2.3);
            %% R(1)
            \draw 
                  (v1) edge[loop left] node[below, yshift=-1mm]{\small \(\EdgeValues{-2}{}\)} (v1)
            ;
            
            %% R(2)
            \draw
                  (v3) edge[left, pos=0.3] node{\small \(\EdgeValues{2}{}\)} (v2)
                  (v3) edge[below, pos=0.3] node{\small \(\EdgeValues{7}{}\)} (v5)
                  
                  (v4) edge[above left, pos=0.3] node[xshift=1mm, yshift=-0.5mm]{\(\EdgeValues{}{}\)} (v5)
            ;
            \draw[transform canvas={xshift=-0.4mm, yshift=+0.4mm}]
                  (v3) edge[above left, pos=0.2] node[xshift=1.2mm, yshift=-0.2mm]{\small \(\EdgeValues{2}{}\)} (v4)
            ;
            \draw[transform canvas={xshift=+0.4mm, yshift=-0.4mm}]
                  (v4) edge[below right, pos=0.2] node[xshift=-1.2mm, yshift=0.2mm]{\small \(\EdgeValues{0}{}\)} (v3)
            ;
    
            %% R(3)
            \draw
                  (v6) edge[loop left, pos=0.3] node[xshift=2mm, yshift=-2mm]{\small \(\EdgeValues{0}{}\)} (v6)
                  (v6) edge[above, pos=0.3] node[xshift=0.5mm, yshift=-0.5mm]{\small \(\EdgeValues{1}{}\)} (v8)
    
                  (v7) edge[left, pos=0.3] node[xshift=0.5mm, yshift=0mm]{\small \(\EdgeValues{-3}{}\)} (v6)
                  (v7) edge[below, pos=0.3] node[xshift=1mm, yshift=0mm]{\small \(\EdgeValues{-6}{}\)} (v9)
                 
                  (v9) edge[left, pos=0.3] node[xshift=1.4mm, yshift=-1.5mm]{\small \(\EdgeValues{0}{.9}\)} (v6)
                  (v9) edge[right, pos=0.3] node[xshift=-0.2mm, yshift=-1mm]{\small \(\EdgeValues{5}{.1}\)} (v8)
            ;
            
            %% R(4)
            \draw[transform canvas={xshift=+0.6mm}]
                  (v10) edge[right, pos=0.3] node{\(\EdgeValues{2}{}\)} (v11)
            ;
            \draw[transform canvas={xshift=-0.6mm}]
                  (v11) edge[left, pos=0.3] node{\(\EdgeValues{0}{}\)} (v10)
            ;
    
            %% R(5)
            \draw
                  (v12) edge[loop right, pos=0.3] node[xshift=1mm]{\small \(\EdgeValues{9}{.3}\)} (v12)
                  (v12) edge[auto, pos=0.3] node[xshift=-0.5mm]{\small \(\EdgeValues{-6}{.7}\)} (v13)
    
                  (v13) edge[below, pos=0.3] node{\small \(\EdgeValues{-4}{}\)} (v14)
    
                  (v14) edge[loop right] node[below, xshift=-0.2mm, yshift=-1.2mm]{\small \(\EdgeValues{2}{}\)}(v14)
            ;
            
            %% Boundary vertices
            \draw
                  (v2) edge[loop left, pos=0.3] node[below, xshift=1mm]{\small \(\EdgeValues{0}{1}\)} (v2)
                  (v5) edge[loop right, pos=0.3] node[above, yshift=0mm]{\small \(\EdgeValues{0}{1}\)} (v5)
                  (v8) edge[loop right, pos=0.3] node[above, xshift=1mm]{\small \(\EdgeValues{0}{1}\)} (v8)
            ;
        \end{tikzpicture}
        }
        \caption{Restrictions  \(\Game_{\ValClass{1}}\),  \(\Game_{\ValClass{2}}\),  \(\Game_{\ValClass{3}}\), 
        \(\Game_{\ValClass{4}}\),
        and \(\Game_{\ValClass{5}}\)
        of the game shown in \Cref{fig:swmp-example} for the vector \(\ValVector = (-2, -1, -1, -1, -1, 0, 0, 0, 0, 1, 1, 2, 2, 2)\).}
        \label{fig:restriction-game-example}
    \end{figure}

    For the game in \Cref{fig:swmp-example},
    let \(\ValVector = (-2, -1, -1, -1, -1, 0, 0, 0, 0, 1, 1, 2, 2, 2)\) be a value vector for vertices \(v_1, v_2, \ldots, v_{14}\) respectively.
    Since \(\ValVector\) has five distinct values, we have \(\ValClassCount = 5\), and the five \(\ValVector\)-classes are \(\ValClass{1} = \{v_1\}\), \(\ValClass{2} = \{v_2, v_3, v_4, v_5\}\), \(\ValClass{3} = \{v_6, v_7, v_8, v_9\}\), \(\ValClass{4} = \{v_{10}, v_{11}\} \), and \(\ValClass{5} = \{v_{12}, v_{13}, v_{14}\}\) with values \(\ClassVal{1} = -2\), \(\ClassVal{2} = -1\), \(\ClassVal{3} = 0\), \(\ClassVal{4} = 1\), and \(\ClassVal{5} = 2\) respectively.
    Out of the five \kl{probabilistic vertices} \(v_2\), \(v_5\), \(v_8\), \(v_{9}\) and \(v_{12}\), we see that \(v_2\), \(v_5\), and \(v_{8}\) are \kl{boundary vertices} while \(v_9\) and \(v_{12}\) are not. 
    Thus, \(\Bnd{2} = \{v_2, v_5\}\), \(\Bnd{3} = \{v_8\}\), and \(\Bnd{1} = \Bnd{4} = \Bnd{5} = \emptyset\).
    We show the restrictions \(\Game_{\ValClass{i}}\) in \Cref{fig:restriction-game-example}.
    \lipicsEnd
\end{example}

Let \(\Objective\) be a bounded \kl{prefix-independent} \kl{quantitative objective}.
Analogous to the notation of a general value vector \(\ValVector\), we describe notations for the expected \(\Objective\)-value vector consisting of the expected \(\Objective\)-values of vertices in~\(\Vertices\).
For all vertices \(v \in \Vertices\), let \(\ValOpt{v}\) denote \(\ExpValue{v}{\Objective}\), the expected \(\Objective\)-value of vertex \(v\), and let \(\ValVectorOpt = (\ValOpt{v})_{v \in \Vertices}\)  denote the expected \(\Objective\)-value vector.
Let \(\OptClass{i}\) denote the \(i^{\text{th}}\) \(\ValVectorOpt\)-class and let \(\ClassValOpt{i}\) denote the \(\ValVectorOpt\)-value of \(\OptClass{i}\).

Given a vector \(\ValVector\), it follows from \Cref{prop:expected-wmp-value-out-neighbours} that the following is a necessary (but not sufficient) condition for \(\ValVector\) to be the expected \(\Objective\)-value vector \(\ValVectorOpt\).
\begin{itemize}
    \item \AP \intro *\ConditionBellman\ condition: 
    for every vertex \(v \in \Vertices\), the following Bellman equations hold
    \[
        \Val{v} = 
        \begin{cases}
            \max_{v' \in \OutNeighbours{v}} \Val{v'}
            & \text{if } v \in \VerticesMain, \\
            \min_{v' \in \OutNeighbours{v}} \Val{v'} 
            & \text{if } v \in \VerticesAdversary,  \\
            \sum_{v' \in \OutNeighbours{v}} \ProbabilityFunction(v)(v') \cdot \Val{v'} & \text{if } v \in \kl{\VerticesRandom}.
        \end{cases}
    \]
\end{itemize}
\subparagraph*{Consequences of the \ConditionBellman\ condition.}
We now see some properties of value vectors \(\ValVector\) that satisfy the \ConditionBellman\ condition.
Since \kl{boundary vertices} are \kl{probabilistic vertices}, the following is immediate.
\begin{proposition}%
\label{prop:boundary-vertices-lower-higher}
    Let \(\ValVector\) be a value vector satisfying the \ConditionBellman\ condition. 
    Then for all \(1 \le i \le \ValClassCount\), for all \(v \in \Bnd{i}\), there exists an out-neighbour of \(v\) with \(\ValVector\)-value less than \(\ClassVal{i}\) and there exists an out-neighbour of \(v\) with \(\ValVector\)-value greater than \(\ClassVal{i}\).
    Formally, there exist \(1 \le i_1, i_2 \le \ValClassCount\) such that \(\ClassVal{i_1} < \ClassVal{i} < \ClassVal{i_2}\) and \(\OutNeighbours{v} \intersection \ValClass{i_1} \ne \emptyset\) and \(\OutNeighbours{v} \intersection \ValClass{i_2} \ne \emptyset\). 
\end{proposition}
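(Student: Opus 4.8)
The plan is to exploit the fact that, for a probabilistic vertex, the \ConditionBellman\ condition writes \(\Val{v}\) as a convex combination of the values of its out-neighbours with strictly positive weights, and then to argue that such a combination can equal \(\ClassVal{i}\) only if the out-neighbour values lie both strictly below and strictly above \(\ClassVal{i}\).

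First I would fix a class index \(i\) and a \kl{boundary vertex} \(v \in \Bnd{i}\). By definition of a \kl{boundary vertex}, \(v \in \kl{\VerticesRandom}\) and \(\OutNeighbours{v} \not\subseteq \ValClass{i}\), so at least one out-neighbour \(u\) of \(v\) satisfies \(\Val{u} \ne \ClassVal{i}\). Since \(v\) is probabilistic, the \ConditionBellman\ condition gives
\[
    \ClassVal{i} = \Val{v} = \sum_{v' \in \OutNeighbours{v}} \ProbabilityFunction(v)(v') \cdot \Val{v'},
\]
and, by the definition of the \kl{probability function}, every out-neighbour carries strictly positive weight, i.e., \(\ProbabilityFunction(v)(v') > 0\) for all \(v' \in \OutNeighbours{v}\), while these weights sum to \(1\). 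This sets up the convexity argument.

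Next I would run two symmetric arguments. Suppose, towards a contradiction, that \(\Val{v'} \ge \ClassVal{i}\) for every \(v' \in \OutNeighbours{v}\). Then each summand is at least \(\ProbabilityFunction(v)(v') \cdot \ClassVal{i}\), so the whole sum is at least \(\ClassVal{i}\), with equality only if \(\Val{v'} = \ClassVal{i}\) for every out-neighbour, the weights being strictly positive. As the sum equals \(\ClassVal{i}\), this forces \(\Val{v'} = \ClassVal{i}\) for all \(v'\), contradicting the existence of the out-neighbour \(u\) with \(\Val{u} \ne \ClassVal{i}\). Hence some out-neighbour has value strictly below \(\ClassVal{i}\). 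The symmetric argument, assuming instead that every out-neighbour value is \(\le \ClassVal{i}\), yields an out-neighbour whose value is strictly above \(\ClassVal{i}\).

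Finally, letting \(v_1, v_2 \in \OutNeighbours{v}\) be the out-neighbours with \(\Val{v_1} < \ClassVal{i} < \Val{v_2}\) just produced, I take \(i_1, i_2\) to be the indices of the classes containing \(v_1\) and \(v_2\); then \(\ClassVal{i_1} = \Val{v_1} < \ClassVal{i} < \Val{v_2} = \ClassVal{i_2}\), and the memberships \(v_1 \in \OutNeighbours{v} \intersection \ValClass{i_1}\) and \(v_2 \in \OutNeighbours{v} \intersection \ValClass{i_2}\) witness the two nonempty intersections claimed. I do not expect a genuine obstacle here; the only point requiring care is the strictness in the convexity inequality, which is precisely where both hypotheses, the strict positivity \(\ProbabilityFunction(v)(v') > 0\) and the presence of a boundary out-neighbour with value \(\ne \ClassVal{i}\), are essential.
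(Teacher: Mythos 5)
Your proof is correct and matches the paper's (implicit) argument: the paper simply declares the proposition immediate from the fact that boundary vertices are probabilistic, and the convexity argument you spell out --- a strictly positive convex combination of out-neighbour values equals \(\ClassVal{i}\) while at least one out-neighbour value differs from \(\ClassVal{i}\), forcing values on both sides --- is exactly what that remark relies on.
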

A corollary of \Cref{prop:boundary-vertices-lower-higher} is that the \(\ValVector\)-classes with the smallest and the biggest \(\ValVector\)-values have no \kl{boundary vertices}.
Note that there may also exist \(\ValVector\)-classes other than these that do not contain \kl{boundary vertices} (see \(\ValClass{4}\) in \Cref{ex:vc-notations}).
Next, we see that the \ConditionBellman\ condition entails that each restriction \(\Game_{\ValClass{i}}\) is a \kl{stochastic game}.
\begin{restatable}{proposition}{BellmanStochasticGame}
\label{prop:bellman-is-stochastic-game}
If \(\ValVector\) is a value vector that satisfies the \ConditionBellman\ condition, then for all \(1 \le i \le \ValClassCount\), we have that \(\Game_{\ValClass{i}}\) is a \kl{stochastic game}.
\end{restatable}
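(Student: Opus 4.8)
The plan is to check $\Game_{\ValClass{i}}$ against the definition of a \kl{stochastic game} clause by clause. The partition of the surviving vertices into $\PlayerMain$-, $\PlayerAdversary$-, and \kl{probabilistic vertices} is inherited by restriction, and so is the \kl{payoff function}; hence the only substantive points are (a) that $\Game_{\ValClass{i}}$ has no deadlocks, i.e.\ every vertex of $\ValClass{i}$ keeps at least one out-neighbour inside $\ValClass{i}$, and (b) that the \kl{probability function} remains well-defined on the surviving \kl{probabilistic vertices}. I would treat the three vertex types in turn, and the whole difficulty is concentrated in showing (a) for the two kinds of player vertices.

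First I would clear the \kl{probabilistic vertices}. Each \kl{boundary vertex} $v \in \Bnd{i}$ is by construction turned into an absorbing vertex with a self-loop, so it has the out-neighbour $v \in \ValClass{i}$ and carries the degenerate distribution putting mass $1$ on that self-loop; this settles both (a) and (b) for boundary vertices. For a non-boundary \kl{probabilistic vertex} $v \in \VerticesRandom \intersection \ValClass{i}$, the very definition of \kl{boundary vertex} yields $\OutNeighbours{v} \subseteq \ValClass{i}$, so all of its out-neighbours survive, the original $\ProbabilityFunction(v)$ is still a valid \kl{probability distribution} over $\OutNeighbours{v}$ in $\Game_{\ValClass{i}}$, and $v$ has an out-neighbour in $\ValClass{i}$ because $\OutNeighbours{v} \ne \emptyset$ in $\Game$.

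The heart of the argument is the two player-vertex cases, where the $\ConditionBellman$ condition is exactly the leverage needed. For $v \in \VerticesMain \intersection \ValClass{i}$ we have $\Val{v} = \ClassVal{i}$, and the $\ConditionBellman$ condition gives $\Val{v} = \max_{v' \in \OutNeighbours{v}} \Val{v'}$; since this maximum is attained by some $u \in \OutNeighbours{v}$, we get $\Val{u} = \ClassVal{i}$, i.e.\ $u \in \ValClass{i}$, so $v$ retains an out-neighbour inside its own class. The case $v \in \VerticesAdversary \intersection \ValClass{i}$ is symmetric, using the $\min$ form of the $\ConditionBellman$ condition. Collecting the three cases shows $\Game_{\ValClass{i}}$ has no deadlocks and a well-defined probability function, hence is a \kl{stochastic game}.

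I do not expect a genuine obstacle: the statement is essentially a bookkeeping verification against the definition, and the one configuration that could break it — a player vertex all of whose out-neighbours leave $\ValClass{i}$ — is ruled out precisely because the $\max$/$\min$ in the Bellman equations is always realized by an in-class neighbour. The only point that requires care is remembering that boundary vertices are re-wired to self-loops rather than deleted, which is what keeps them deadlock-free and their distributions well-defined.
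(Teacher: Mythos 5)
Your proposal is correct and follows essentially the same route as the paper's proof: handle boundary vertices via their new self-loops, non-boundary probabilistic vertices via the fact that all their out-neighbours stay in the class, and player vertices via the observation that the max/min in the \ConditionBellman\ condition is attained by an in-class out-neighbour. Your write-up is merely more explicit about the well-definedness of the probability function, which the paper treats as immediate.
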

\begin{proof}
We need to show that each vertex in \(\Game_{\ValClass{i}}\) has an out-neighbour in \(\Game_{\ValClass{i}}\) and the probability distribution over the out-neighbours of probabilistic vertices in \(\Game_{\ValClass{i}}\) adds up to~\(1\).
From the \ConditionBellman\ condition, it follows that each non-probabilistic vertex \(v\) in \(\Game_{\ValClass{i}}\) has at least one out-neighbour \(v'\) with the same \(\ValVector\)-value as \(v\), and thus \(v'\) belongs to \(\Game_{\ValClass{i}}\). 
The \kl{boundary vertices} in \(\Game_{\ValClass{i}}\) have themself as an out-neighbour. 
Finally, the out-neighbours of non-boundary probabilistic vertices in \(\Game_{\ValClass{i}}\) are the same as in \(\Game\). 
Hence, \(\Game_{\ValClass{i}}\) is a \kl{stochastic game}.
\end{proof}

In \Cref{lem:eventually-no-boundary-vertices}, we make a  crucial observation about long-run behaviours of \kl{plays} in \(\Game\), 
which is that either player can ensure with probability~\(1\) that the token eventually reaches an \(\ValVector\)-class from which it does not exit.
This follows from the Borel-Cantelli lemma~\cite{Durrett2010} due to the fact that  there is a positive probability to reach an \(\ValVector\)-class without \kl{boundary vertices} following a finite number of edges out of \kl{boundary vertices}.
\begin{restatable}{proposition}{EventuallyNoBoundaryVertices}
\label{lem:eventually-no-boundary-vertices}
    Let \(\ValVector\) be a value vector satisfying the \ConditionBellman\ condition. 
    Suppose the strategy of \(\PlayerI\) (\(\i \in \{1, 2\}\)) is such that each time the token reaches a vertex \(v \in \Vertices[\i]\), (s)he moves the token to a vertex \(v'\) in the same \(\ValVector\)-class as \(v\).
    Then, with probability~1, the token eventually reaches a class \(\ValClass{j}\) for some \(1 \le j \le \ValClassCount\) from which it never exits. 
\end{restatable}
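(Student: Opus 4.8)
The statement is equivalent to asserting that, with probability $1$, the token changes $\ValVector$-class only finitely often: if it does, then after the last change it stays in a single class $\ValClass{j}$ forever, which is exactly a class it never exits, and conversely. The plan is to first record the structural fact that, under the hypothesis, the token can leave a class \emph{only} through a \kl{boundary vertex}. Indeed, at a vertex of $\VerticesMain \union \VerticesAdversary$ the controlling player moves within the current class by assumption, and at a \kl{probabilistic vertex} that is not a \kl{boundary vertex} all out-neighbours lie in the same class. In particular, a class with no \kl{boundary vertices} is \emph{absorbing}: once entered, it is never left. By the corollary of \Cref{prop:boundary-vertices-lower-higher}, the classes of smallest and largest $\ValVector$-value are absorbing, so absorbing classes exist.

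The crux is a uniform lower bound on the probability of settling. Let $p_{\min} > 0$ be the least, over all \kl{boundary vertices} $w$, of the probability $\ProbabilityFunction(w)(w^+)$ of the edge to a fixed out-neighbour $w^+$ in a strictly higher class; such a $w^+$ exists and has positive probability by \Cref{prop:boundary-vertices-lower-higher} and the definition of $\ProbabilityFunction$. I claim that from \emph{any} \kl{boundary vertex} the token settles with probability at least $\delta := p_{\min}^{\,\ValClassCount - 1}$. Consider the event $E$ that at every \kl{boundary vertex} the token ever visits it takes this chosen up-edge. On $E$ the $\ValVector$-value of the current vertex is nondecreasing in time and strictly increases at each \kl{boundary vertex} visit; since there are only $\ValClassCount$ distinct class values, $E$ forces at most $\ValClassCount - 1$ such visits, after which the token changes class no more. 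Conditioning sequentially on these at most $\ValClassCount - 1$ history-independent probabilistic choices gives $\Pr[E] \ge \delta$, and $E$ is contained in the event ``token changes class finitely often''. Note there is no need to force the token to reach the next \kl{boundary vertex}: if it fails to, it is confined to its current class and has already settled.

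Finally I would invoke a conditional form of the Borel--Cantelli lemma. On the event $A$ that the token changes class infinitely often, let $\tau_1 < \tau_2 < \cdots$ be the (infinitely many) times of successive class changes; each $\tau_n$ finds the token at a \kl{boundary vertex}, so by the strong Markov property the bound above gives a conditional settling probability of at least $\delta$ after each $\tau_n$. As $\delta$ is a fixed positive constant and there are infinitely many $\tau_n$ on $A$, the conditional Borel--Cantelli lemma forces the settling event to occur almost surely on $A$, contradicting the definition of $A$; hence $\Pr[A] = 0$, which is the claim. The main obstacle is the second step: securing a positive lower bound that is uniform over all vertices \emph{and} all histories. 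This is precisely what the ``go up every time'' event $E$ delivers, with the finiteness of classes being essential to cap the number of required up-jumps at $\ValClassCount - 1$, and the possible history-dependence of the strategies being harmless because the bound only uses the history-independent transition probabilities at \kl{probabilistic vertices}. As a slicker alternative, one can observe that the \ConditionBellman\ condition makes the sequence of $\ValVector$-values of the visited vertices a bounded martingale, hence almost surely convergent; taking values in a finite set, it must be eventually constant, which again means the token eventually never leaves a single class.
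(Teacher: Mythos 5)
There is a genuine gap, and it sits in your very first structural step. You assume that \emph{both} players keep the token inside its current \(\ValVector\)-class (``at a vertex of \(\VerticesMain \union \VerticesAdversary\) the controlling player moves within the current class by assumption''), and from this you conclude that a class can only be left through a \kl{boundary vertex} and that classes without \kl{boundary vertices} are absorbing. But the hypothesis constrains the strategy of only \emph{one} player; the opponent's strategy is arbitrary, and an opponent vertex may have out-neighbours in other classes. This matters because the proposition is later invoked (in the proof of \Cref{lem:fwmpl-almost-sure-to-expected}) against \emph{all} strategies of the opponent, so the weaker statement you actually argue for would not suffice. Under the real hypothesis, neither ``only boundary vertices lead out of a class'' nor ``classes without boundary vertices are absorbing'' holds, and on your event \(E\) the value sequence need not be nondecreasing when the constrained player is \(\PlayerAdversary\): there the unconstrained \(\PlayerMain\) can, by \ConditionBellman, only move to classes of \emph{smaller or equal} value, so monotonicity runs the other way and the up-edges you force at \kl{boundary vertices} are the wrong choice. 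Your ``slicker alternative'' inherits the same issue: the value process is only a bounded sub- or supermartingale, not a martingale, although bounded sub/supermartingales still converge almost surely, so that variant survives with a one-word fix.

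The repair is exactly the observation the paper makes: by \ConditionBellman, the unconstrained opponent can move the token out of a class only in one fixed direction (upwards if the opponent is \(\PlayerAdversary\), downwards if the opponent is \(\PlayerMain\)), and this costs no probability; one then chooses the forced edge at each \kl{boundary vertex} to point in that same direction. With that correction your argument coincides with the paper's: monotonicity of the value on \(E\) caps the number of class changes at \(\ValClassCount - 1\), which yields the uniform settling probability \(\delta = p_{\min}^{\,\ValClassCount - 1}\) from any history, and the conditional Borel--Cantelli step (the paper invokes the second Borel--Cantelli lemma) finishes. So the probabilistic core of your proposal matches the paper's and is sound; the error is confined to the structural analysis of how the token can leave a class, and to the resulting choice of direction at \kl{boundary vertices}.
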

\begin{proof}
    We prove this for \(\i = \Main\). 
    The case of \(\i = \Adversary\) is analogous.
    If the token exits an \(\ValVector\)-class, it either exits from a \kl{boundary vertex} or from a \(\PlayerAdversary\) vertex.
    From \Cref{prop:boundary-vertices-lower-higher}, each time the token reaches a \kl{boundary vertex}, it has a positive probability of entering a class with a smaller \(\ValVector\)-value and a positive probability of entering a class with a greater \(\ValVector\)-value. 
    Furthermore, since \(\ValVector\) satisfies the \ConditionBellman\ condition, 
    \(\PlayerAdversary\) can only move the token to a class with a greater \(\ValVector\)-value. 
    Since there are \(\ValClassCount\) \(\ValVector\)-classes, out of which some do not have \kl{boundary vertices} (in particular the extremal \(\ValVector\)-classes do not have \kl{boundary vertices}), we get that starting from any vertex in the game, it is the case that with probability at least \((\ProbabilityFunction_{\min})^{\ValClassCount}\),  after changing classes at most \(\ValClassCount\) times, the token enters a class from which it never exits. 
    Here, \(\ProbabilityFunction_{\min}\) is the minimum probability over all edges in the game.
    By the second Borel-Cantelli lemma~\cite{Durrett2010}, in the infinite \kl{play}, with probability~\(1\), the token eventually enters a class from which it never exits. 
\end{proof}
Finally, we define the notion of \AP \intro{trap subgames} of \(\Game_{\ValClass{i}}\) which will be used in the subsequent discussion.
We denote by \(\AP \intro *\PositiveSubgame{\Main}{\ValClass{i}}\) the \(\PlayerMain\) positive attractor set of \(\Bnd{i}\), i.e., the set of vertices in \(\Game_{\ValClass{i}}\) that are positive winning for \(\PlayerMain\) for the \(\kl{\ReachObj}(\Bnd{i})\) objective.
The complement \(\AP \intro *\TrapSubgame{\Main}{\ValClass{i}} = \ValClass{i} \setminus \PositiveSubgame{\Main}{\ValClass{i}}\) is a \kl{trap} for \(\PlayerMain\) in \(\Game_{\ValClass{i}}\), 
and with abuse of notation, we use the same symbol \(\TrapSubgame{\Main}{\ValClass{i}}\) to denote the subset of \(\Game_{\ValClass{i}}\) as well as the \kl{trap subgame}.
We note that if \(\ValClass{i}\) does not have \kl{boundary vertices}, that is, if \(\Bnd{i} = \emptyset\), then it holds that  \(\PositiveSubgame{\Main}{\ValClass{i}}  = \emptyset\) and \(\TrapSubgame{\Main}{\ValClass{i}} = \ValClass{i}\).
We can analogously define \(\PositiveSubgame{\Adversary}{\ValClass{i}}\) and \(\TrapSubgame{\Adversary}{\ValClass{i}}\) for \(\PlayerAdversary\).
Given \(\Game_{\ValClass{i}}\), these sets can be computed in polynomial time using attractor computations~\cite{FBB24}. 

\begin{example}%
\label{ex:apt}
    We compute these sets for the restrictions shown in \Cref{fig:restriction-game-example}.
    For \(i \in \{1, 4, 5\}\), since \(\Bnd{i}\) is empty, we have that \(\TrapSubgame{\Main}{\ValClass{i}} = \ValClass{i}\) and \(\PositiveSubgame{\Main}{\ValClass{i}} = \emptyset\).
    For \(\ValClass{2}\), we have that \(\PositiveSubgame{\Main}{\ValClass{2}} = \ValClass{2}\), and \(\TrapSubgame{\Main}{\ValClass{2}} = \emptyset\).
    For \(\ValClass{3}\), we have that \(\TrapSubgame{\Main}{\ValClass{3}} = \{v_6\}\), \(\PositiveSubgame{\Main}{\ValClass{3}} = \{v_7, v_8, v_9\}\).
    Analogously, for \(\PlayerAdversary\), we have the following:
    For \(i \in \{1, 4, 5\}\), since \(\Bnd{i}\) is empty, we also have that \(\TrapSubgame{\Adversary}{\ValClass{i}} = \ValClass{i}\) and \(\PositiveSubgame{\Adversary}{\ValClass{i}} = \emptyset\).
    For \(i \in \{2, 3\}\), we have that \(\PositiveSubgame{\Adversary}{\ValClass{i}} = \ValClass{i}\), and thus, \(\TrapSubgame{\Adversary}{\ValClass{i}} = \emptyset\).
    \lipicsEnd
\end{example}

\subsection{Characterization of the value vector}
We describe in \Cref{thm:six-conditions} a necessary and sufficient set of conditions for a given vector \(\ValVector\) to be equal to the expected \(\Objective\)-value vector \(\ValVectorOpt\).
In addition to \ConditionBellman, \Cref{thm:six-conditions} makes use of two more conditions, which we define before stating the theorem.
\begin{itemize}
    \item \AP \intro *\ConditionLB\ condition: 
    for all \(1 \le i \le \ValClassCount\), \(\PlayerMain\) wins \(\ThresholdObjective{> \ClassVal{i} - \GranularityBound{\Objective}}\) almost surely in the \kl{trap subgame} \(\TrapSubgame{\Main}{\ValClass{i}}\) from all vertices in \(\TrapSubgame{\Main}{\ValClass{i}}\).
    \item \AP \intro *\ConditionUB\ condition: 
    for all \(1 \le i \le \ValClassCount\), \(\PlayerAdversary\) wins \(\ThresholdObjective{< \ClassVal{i} + \GranularityBound{\Objective}}\) almost surely in the \kl{trap subgame} \(\TrapSubgame{\Adversary}{\ValClass{i}}\) from all vertices in \(\TrapSubgame{\Adversary}{\ValClass{i}}\).
\end{itemize}

\begin{theorem}%
\label{thm:six-conditions}
    The only vector \(\ValVector\), whose every component has denominator at most \(\DenBound{\Objective}\), that satisfies \ConditionBellman, \ConditionLB, and \ConditionUB\ is the expected \(\Objective\)-value vector~\(\ValVectorOpt\).
\end{theorem}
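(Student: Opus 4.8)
The plan is to prove the two halves of the statement separately: first that the true value vector $\ValVectorOpt$ satisfies \ConditionBellman, \ConditionLB, and \ConditionUB\ (so that a compliant vector exists at all), and then that it is the \emph{only} denominator-bounded vector that does. The engine for uniqueness is a sandwich: from any vector $\ValVector$ meeting the three conditions I would derive, for every vertex $v$, the bound $\Val{v} - \GranularityBound{\Objective} \le \ValOpt{v} \le \Val{v} + \GranularityBound{\Objective}$. Since every entry of $\ValVector$ and of $\ValVectorOpt$ has denominator at most $\DenBound{\Objective}$, and $\GranularityBound{\Objective} = (1/\DenBound{\Objective})^2$ is strictly smaller than the minimum distance between two distinct rationals of denominator at most $\DenBound{\Objective}$, this sandwich forces $\ValOpt{v} = \Val{v}$ for all $v$, that is $\ValVector = \ValVectorOpt$.

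For the lower bound $\ValOpt{v} \ge \Val{v} - \GranularityBound{\Objective}$ I would assemble a single $\PlayerMain$ strategy in $\Game$ from the local almost-sure winning strategies promised by \ConditionLB. When the token is in class $\ValClass{i}$, $\PlayerMain$ stays inside $\ValClass{i}$ at her own vertices: in the positive attractor $\PositiveSubgame{\Main}{\ValClass{i}}$ she plays an attractor strategy toward $\Bnd{i}$, and in the trap $\TrapSubgame{\Main}{\ValClass{i}}$ she plays the \ConditionLB-strategy $\sigma_i$ winning $\ThresholdObjective{> \ClassVal{i} - \GranularityBound{\Objective}}$ almost surely. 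Because $\ValVector$ satisfies \ConditionBellman, the process $M_n = \Val{\Vertex[n]}$ is a bounded submartingale under this strategy against every $\PlayerAdversary$ strategy (her moves keep the value constant, $\PlayerAdversary$ can only move to values $\ge \Val{\Vertex[n]}$, and probabilistic steps are value-preserving); hence $M_n$ converges almost surely with limit of expectation at least $\Val{v}$. By \Cref{lem:eventually-no-boundary-vertices} the token almost surely settles in some random class $\ValClass{J}$ that it never leaves, and the attractor-plus-Borel--Cantelli argument forces it eventually into $\TrapSubgame{\Main}{\ValClass{J}}$, where $\sigma_J$ guarantees $\Objective > \ClassVal{J} - \GranularityBound{\Objective}$ almost surely (prefix independence discards the transient prefix). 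Averaging over the settling class and using that $\lim_n M_n = \ClassVal{J}$ has expectation at least $\Val{v}$ yields $\ExpValue{v}{\Objective} \ge \Val{v} - \GranularityBound{\Objective}$. The upper bound is completely symmetric: $\PlayerAdversary$ combines his attractors with the \ConditionUB-strategies, turning $M_n$ into a supermartingale and securing $\Objective < \ClassVal{J} + \GranularityBound{\Objective}$ almost surely in the settling class, giving $\ValOpt{v} \le \Val{v} + \GranularityBound{\Objective}$.

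It remains to verify that $\ValVectorOpt$ itself satisfies the three conditions. \ConditionBellman\ is immediate from \Cref{prop:expected-wmp-value-out-neighbours}, and \Cref{prop:bellman-is-stochastic-game} then makes each $\Game_{\OptClass{i}}$, hence each trap subgame, a genuine \kl{stochastic game}. For \ConditionLB\ I would work inside the standalone subgame $\TrapSubgame{\Main}{\OptClass{i}}$, in which the token is confined to class $\OptClass{i}$ since it contains no \kl{boundary vertices} (every probabilistic vertex keeps all its successors inside). I would first observe that in this confined subgame $\PlayerMain$ can still secure expected value at least $\ClassValOpt{i}$ — her globally optimal strategy, restricted to the subgame, does so against every confined adversary — and then \emph{upgrade} this expectation guarantee to the almost-sure threshold $\Objective > \ClassValOpt{i} - \GranularityBound{\Objective}$, using the end-component structure of the confined game together with the granularity $\GranularityBound{\Objective}$. \ConditionUB\ is the dual statement for $\PlayerAdversary$.

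I expect this last upgrade, from ``expected value at least $\ClassValOpt{i}$ in the confined class'' to ``almost surely above $\ClassValOpt{i} - \GranularityBound{\Objective}$'', to be the main obstacle, since it does not follow from the expectation alone: a strategy of expectation $\ClassValOpt{i}$ could a priori place positive probability on outcomes far below $\ClassValOpt{i}$, compensated by outcomes above. The resolution must exploit that inside a single value class there is no way to realize value strictly above $\ClassValOpt{i}$ that survives prefix independence, so the discretization induced by $\DenBound{\Objective}$ together with the slack $\GranularityBound{\Objective}$ collapses the distribution of $\Objective$ onto the threshold. This is exactly the quantitative refinement of \cite[Lemma~8]{CHH09}, and keeping the two regimes of play consistent with \Cref{lem:eventually-no-boundary-vertices} — securing the class value in the trap versus gambling at the boundary — is the delicate bookkeeping the proof must manage.
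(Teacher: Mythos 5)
Your uniqueness half is sound and matches the paper's in substance: the paper's \Cref{lem:fwmpl-almost-sure-to-expected} builds exactly the composite strategy you describe (attractor toward \(\Bnd{i}\) on \(\PositiveSubgame{\Main}{\ValClass{i}}\), the \ConditionLB-strategy on \(\TrapSubgame{\Main}{\ValClass{i}}\)), invokes \Cref{lem:eventually-no-boundary-vertices} to settle the token in a trap, and then proves \(\ValOutcomeOpt{v}{\StrategyProfile} > \Val{v} - \GranularityBound{\Objective}\) by an elementary truncated-tree limit argument; your bounded-submartingale formulation of \(M_n = \Val{\Vertex[n]}\) is a legitimate and somewhat cleaner packaging of the same computation. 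However, there is a genuine gap in the existence half. You do not prove that \(\ValVectorOpt\) satisfies \ConditionLB\ (and dually \ConditionUB); you reduce it to an ``upgrade'' from an expectation guarantee of \(\ClassValOpt{i}\) inside the trap subgame to almost-sure satisfaction of \(\ThresholdObjective{> \ClassValOpt{i} - \GranularityBound{\Objective}}\), and you explicitly flag this upgrade as an unresolved obstacle. The difficulty you identify is real, but the resolution you gesture at --- that discretization ``collapses the distribution of \(\Objective\) onto the threshold'' --- does not work as stated: only the vertex \emph{values} (expectations) are discretized by \(\DenBound{\Objective}\); the distribution of \(\Objective\) over individual plays is not, so nothing prevents a priori a strategy of expectation \(\ClassValOpt{i}\) from putting positive mass far below the threshold.

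The idea you are missing is the paper's contradiction argument via the zero-one law for prefix-independent objectives (\cite[Theorem~1]{Cha07}): if \ConditionLB\ failed, then \(\PlayerAdversary\) would win \(\ThresholdObjective{\le \ClassValOpt{i} - \GranularityBound{\Objective}}\) \emph{positively} from some vertex of \(\TrapSubgame{\Main}{\OptClass{i}}\), and since this threshold objective is prefix-independent, he would then win it \emph{almost surely} from some other vertex \(v'\) of the same trap. Playing that almost-sure strategy from \(v'\) in the original game \(\Game\) forces one of two outcomes: either \(\PlayerMain\) stays in \(\OptClass{i}\), in which case the trap confines the play and the outcome has \(\Objective\)-value at most \(\ClassValOpt{i} - \GranularityBound{\Objective}\) with probability~1, or \(\PlayerMain\) exits the class, which by \ConditionBellman\ can only be to a class of strictly smaller value. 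Either way \(\ExpValue{v'}{\Objective} < \ClassValOpt{i}\), contradicting \(v' \in \OptClass{i}\). This sidesteps the expectation-to-almost-sure upgrade entirely, which is why the paper never needs the ``collapse'' step your sketch depends on. Without this (or an equivalent) argument, the theorem as you have it only shows that \emph{if} a compliant vector exists it must equal \(\ValVectorOpt\), not that \(\ValVectorOpt\) is itself compliant.
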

\begin{proof}
    We show in \Cref{lem:e-satisfies-six-conditions} that \(\ValVectorOpt\) satisfies the three conditions. 
    We show in \Cref{lem:fwmpl-almost-sure-to-expected} that if \(\ValVector\) is a vector that satisfies the three conditions, then \(\ValVector\) is less than \(\GranularityBound{\Objective}\) distance away from \(\ValVectorOpt\), that is, \(\ValVectorOpt - \GranularityBound{\Objective}  < \ValVector < \ValVectorOpt + \GranularityBound{\Objective} \).
    In particular, if each component of \(\ValVector\) can be written as \(\frac{p}{q}\), where \(p\), \(q\) are both integers and \(q\) is at most \(\DenBound{\Objective}\), then it follows that \(\ValVector\) is equal to \(\ValVectorOpt\).
\end{proof}

In the rest of the section, we prove Lemmas~\ref{lem:e-satisfies-six-conditions} and~\ref{lem:fwmpl-almost-sure-to-expected} used in the proof of \Cref{thm:six-conditions}. 
\begin{lemma}%
\label{lem:e-satisfies-six-conditions}
    The expected \(\Objective\)-value vector \(\ValVectorOpt\) satisfies the three conditions in \Cref{thm:six-conditions}.
\end{lemma}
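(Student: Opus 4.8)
The plan is to dispatch the three conditions separately, treating \ConditionBellman\ as immediate and reducing \ConditionLB\ and \ConditionUB\ to a single structural claim about the \(\ValVectorOpt\)-classes. Condition \ConditionBellman\ holds for \(\ValVectorOpt\) verbatim: it is exactly the content of the \kl{Bellman equations} (\Cref{prop:expected-wmp-value-out-neighbours}) instantiated at the expected \(\Objective\)-values \(\ValOpt{v} = \ExpValue{v}{\Objective}\). Since \ConditionLB\ and \ConditionUB\ are symmetric (swap the players and replace \(\Objective\) by \(-\Objective\)), I would prove only \ConditionLB\ and obtain \ConditionUB\ by the dual argument.

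For \ConditionLB, fix a class index \(i\) and write \(c = \ClassValOpt{i}\) and \(H = \TrapSubgame{\Main}{\OptClass{i}}\). First I would record the structural facts that make \(H\) well-behaved: by \ConditionBellman, \(\Game_{\OptClass{i}}\) is a genuine \kl{stochastic game} (\Cref{prop:bellman-is-stochastic-game}) and \(H\) is a well-defined \kl{trap subgame} of it; moreover \(H\) contains no \kl{boundary vertices} (they lie in \(\PositiveSubgame{\Main}{\OptClass{i}}\) by definition), so every \kl{play} confined to \(H\) stays among vertices of value exactly \(c\). The key observation, read off from \ConditionBellman, is that the edges leaving \(\OptClass{i}\) are \emph{dominated}: from a \(\VerticesMain\)-vertex every out-of-class edge leads to a strictly smaller value, and from a \(\VerticesAdversary\)-vertex every out-of-class edge leads to a strictly larger value; by the granularity bound, any such value differs from \(c\) by at least \(\GranularityBound{\Objective}\). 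This is what will let me transfer conclusions about \(H\) back to the full game \(\Game\).

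The heart of the proof is to upgrade ``value \(c\) everywhere in \(H\)'' to ``\(\PlayerMain\) almost-surely wins \(\ThresholdObjective{> c - \GranularityBound{\Objective}}\) throughout \(H\).'' I would argue by contradiction. If \(\PlayerMain\)'s almost-sure winning region \(W \subseteq H\) for \(\ThresholdObjective{> c - \GranularityBound{\Objective}}\) were strict, then by qualitative determinacy \(\PlayerAdversary\) could enforce the event \(\ThresholdObjective{\le c - \GranularityBound{\Objective}}\) with positive probability from some vertex of \(H \setminus W\). Using that \(\Objective\) is \kl{prefix-independent} together with the recurrence/Borel--Cantelli reasoning already exploited in \Cref{lem:eventually-no-boundary-vertices}, I would boost this positive guarantee to an almost-sure one on a nonempty sub-\kl{trap} \(H' \subseteq H\) for \(\PlayerMain\). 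Then in \(\Game\), \(\PlayerAdversary\) can confine the token to \(H'\): probabilistic and \(\VerticesAdversary\)-vertices of \(H'\) keep it inside by the \kl{trap} property, while the only escape available to \(\PlayerMain\) leaves \(\OptClass{i}\) and hence (by the granularity bound) lands on a vertex of value at most \(c - \GranularityBound{\Objective}\), where \(\PlayerAdversary\) switches to an optimal strategy. Either way \(\ExpValue{u}{\Objective} \le c - \GranularityBound{\Objective} < c\) for every \(u \in H' \subseteq \OptClass{i}\), contradicting that these vertices have value \(c\).

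I expect the boosting step---turning positive-probability enforcement of the tail event \(\ThresholdObjective{\le c - \GranularityBound{\Objective}}\) into almost-sure enforcement on a sub-\kl{trap}---to be the main obstacle, since for a general \kl{prefix-independent} objective the positive and almost-sure winning regions need not coincide; the argument must use the end-component/recurrence structure of \(H\) (as in \Cref{lem:eventually-no-boundary-vertices}) to locate a region in which returns to the positive-winning behaviour happen infinitely often, and it is precisely here that the granularity bound \(\GranularityBound{\Objective}\) earns its keep, ruling out any off-class or transient behaviour that could rescue the value. The remaining bookkeeping---that \(H'\) is a \kl{trap} in \(\Game\) as well and that confinement yields the stated expectation bound---is routine.
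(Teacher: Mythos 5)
Your overall strategy matches the paper's: \ConditionBellman\ is immediate from \Cref{prop:expected-wmp-value-out-neighbours}, \ConditionUB\ is dual to \ConditionLB, and \ConditionLB\ is proved by contradiction by letting \(\PlayerAdversary\) exploit a failure inside the \kl{trap subgame} and then transferring that strategy to the full game, where \(\PlayerMain\)'s only alternatives are to stay trapped (and lose the threshold objective almost surely) or to escape \(\OptClass{i}\) into a strictly lower value class. The transfer step and the final contradiction with \(\ExpValue{v'}{\Objective} = \ClassValOpt{i}\) are handled correctly in your write-up.

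The genuine gap is exactly the step you flag as ``the main obstacle'': upgrading \(\PlayerAdversary\)'s positive-probability enforcement of \(\ThresholdObjective{\le c - \GranularityBound{\Objective}}\) inside \(H\) to almost-sure enforcement from some vertex of \(H\). This is the zero-one-law-type statement that for a \kl{prefix-independent} objective in a finite \kl{stochastic game}, a nonempty positive winning region for a player forces a nonempty almost-sure winning region for that player. The paper does not re-derive it; it invokes it as a black box (Theorem~1 of~\cite{Cha07}), applied to the objective \(\ThresholdObjective{\le c - \GranularityBound{\Objective}}\) restricted to the subgame \(\TrapSubgame{\Main}{\OptClass{i}}\). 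Your proposed route to re-prove it---via the recurrence reasoning of \Cref{lem:eventually-no-boundary-vertices}---does not work as described: that lemma is about the value-class structure of the whole game under \ConditionBellman\ (boundary vertices, class changes), whereas what you need is an internal statement about a single subgame with no value-class structure at all, namely that repeated returns to \(\PlayerAdversary\)'s positive winning region can be converted into an almost-sure guarantee for a general tail objective. That conversion is a nontrivial theorem in its own right (positive and almost-sure regions genuinely differ vertex-by-vertex; only the existence of some almost-sure vertex is guaranteed), and your sketch of ``locating a region in which returns to the positive-winning behaviour happen infinitely often'' leaves it unproved. Either cite the known result, as the paper does, or supply a full proof of it; as written, the argument is incomplete at its load-bearing step.

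A minor additional remark: you invoke the granularity bound \(\GranularityBound{\Objective}\) to quantify the value drop when \(\PlayerMain\) escapes \(\OptClass{i}\). This is sound (the components of \(\ValVectorOpt\) have denominators at most \(\DenBound{\Objective}\) by hypothesis), but it is not needed for the contradiction---it suffices that the escape vertex has expected value strictly below \(\ClassValOpt{i}\) and that \(\PlayerAdversary\) then plays optimally from it, which is how the paper argues.
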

\begin{proof}
    The fact that \(\ValVectorOpt\) satisfies \(\ConditionBellman\) follows  directly from \Cref{prop:expected-wmp-value-out-neighbours}.
    We show that \(\ConditionLB\) holds for \(\ValVectorOpt\).
    The proof for \(\ConditionUB\) is analogous.
        
    Suppose for the sake of contradiction that \(\ConditionLB\) does not hold, 
    that is, there exists \(1 \le i \le \ValClassCount[\ValVectorOpt]\) and a vertex \(v\) in 
    \(\TrapSubgame{\Main}{\OptClass{i}}\)
    such that \(\PlayerAdversary\) has a positive winning strategy from \(v\) for the 
    \(\ThresholdObjective{\le \ClassValOpt{i} - \GranularityBound{\Objective}}\) 
    objective in \(\TrapSubgame{\Main}{\OptClass{i}}\).
    Since \(\ThresholdObjective{\le \ClassValOpt{i}- \GranularityBound{\Objective}}\) 
    is a \kl{prefix-independent} objective, from~\cite[Theorem 1]{Cha07} we have that there exists another vertex 
    \(v'\) in \(\TrapSubgame{\Main}{\OptClass{i}}\)
    such that \(\PlayerAdversary\) has an almost-sure winning strategy from \(v'\) for the same objective
    \(\ThresholdObjective{\le \ClassValOpt{i} - \GranularityBound{\Objective}}\) 
    in \(\TrapSubgame{\Main}{\OptClass{i}}\).
    If \(\PlayerAdversary\) follows this strategy from \(v'\) in the \emph{original game} \(\Game\),
    then one of the following two cases holds
    \begin{itemize}
        \item \(\PlayerMain\) always moves the token to a vertex in \(\OptClass{i}\). 
        Since \(v'\) is in the trap \(\TrapSubgame{\Main}{\OptClass{i}}\) for \(\PlayerMain\) in \(\Game_{\OptClass{i}}\),
        \(\PlayerAdversary\) can force the token to remain in
        \(\TrapSubgame{\Main}{\OptClass{i}}\) forever, and follow the almost-sure winning strategy to ensure that with probability~1, the outcome satisfies the objective \(\ThresholdObjective{\le \ClassValOpt{i} - \GranularityBound{\Objective} }\).
        \item \(\PlayerMain\) eventually moves the token to a vertex out of \(\OptClass{i}\).
        Since \(\ValVectorOpt\) satisfies \ConditionBellman, the token moves to an \(\ValVectorOpt\)-class with a smaller value than \(\ClassValOpt{i}\).
    \end{itemize}
    In both cases, the expected \(\Objective\)-value of the outcome is less than \(\ClassValOpt{i}\).
    This is a contradiction since \(v' \in \OptClass{i}\), and 
    the expected \(\Objective\)-value of every vertex in \(\OptClass{i}\) is equal to \(\ClassValOpt{i}\).
\end{proof}

\begin{restatable}{lemma}{AlmostSureToExpected}
\label{lem:fwmpl-almost-sure-to-expected}
    If a vector \(\ValVector\) satisfies the three conditions in \Cref{thm:six-conditions}, then \(\ValVectorOpt - \GranularityBound{\Objective}  < \ValVector < \ValVectorOpt +  \GranularityBound{\Objective} \). 
    In particular, we have the following:
    \begin{itemize}
        \item If \(\ValVector\) satisfies the \ConditionBellman\ and \ConditionLB\ conditions, then \(\ValVectorOpt > \ValVector - \GranularityBound{\Objective} \). 
        \item If \(\ValVector\) satisfies the \ConditionBellman\ and \ConditionUB\ conditions, then \(\ValVectorOpt < \ValVector + \GranularityBound{\Objective} \). 
    \end{itemize}
\end{restatable}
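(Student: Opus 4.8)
The plan is to prove the two bulleted implications separately, as they are dual: I describe the first (\ConditionBellman\ together with \ConditionLB\ yields \(\ValVectorOpt > \ValVector - \GranularityBound{\Objective}\)) and obtain the second by interchanging the two players. Fixing a vertex \(v\), I will construct one strategy \(\Strategy[\Main]\) of \(\PlayerMain\) that, against \emph{every} strategy \(\Strategy[\Adversary]\) of \(\PlayerAdversary\), produces an outcome whose expected \(\Objective\)-value is strictly greater than \(\Val{v} - \GranularityBound{\Objective}\). Because \ConditionLB\ only provides an almost-sure guarantee \emph{inside} the trap subgames \(\TrapSubgame{\Main}{\ValClass{i}}\), the heart of the argument is to drive the token into such a trap subgame while keeping the value of the class it settles in large in expectation.

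I would define \(\Strategy[\Main]\) class-wise. At a \(\PlayerMain\) vertex of class \(\ValClass{i}\) she always moves within \(\ValClass{i}\) (possible since \(\Val{v} = \max_{v' \in \OutNeighbours{v}} \Val{v'}\) by \ConditionBellman); on \(\TrapSubgame{\Main}{\ValClass{i}}\) she follows the almost-sure winning strategy for \(\ThresholdObjective{> \ClassVal{i} - \GranularityBound{\Objective}}\) furnished by \ConditionLB, and on the positive attractor \(\PositiveSubgame{\Main}{\ValClass{i}}\) she follows the attractor strategy towards \(\Bnd{i}\). Every move of \(\Strategy[\Main]\) stays in the current class, so by \ConditionBellman\ the class can change only at a \(\PlayerAdversary\) vertex (only to a larger class) or at a boundary vertex (stochastically, to a smaller or larger class). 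By \Cref{lem:eventually-no-boundary-vertices}, with probability \(1\) the token eventually enters a class \(\ValClass{j}\) it never leaves; I then claim it settles inside \(\TrapSubgame{\Main}{\ValClass{j}}\). Indeed, a play confined to \(\ValClass{j}\) can visit \(\Bnd{j}\) only finitely often (by \Cref{prop:boundary-vertices-lower-higher} each visit exits \(\ValClass{j}\) with probability bounded below, so infinitely many visits would force an exit, as in the Borel--Cantelli argument of \Cref{lem:eventually-no-boundary-vertices}); and on \(\PositiveSubgame{\Main}{\ValClass{j}}\) the attractor strategy reaches \(\Bnd{j}\) with probability bounded below at each attempt, so the token cannot remain in \(\PositiveSubgame{\Main}{\ValClass{j}}\) forever either. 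Hence it is eventually confined to \(\TrapSubgame{\Main}{\ValClass{j}}\), where \(\Strategy[\Main]\) realizes the almost-sure guarantee of \ConditionLB; prefix independence of \(\Objective\) lets the finite initial excursions be ignored.

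The quantitative bound then follows from a martingale argument. Writing \(X_n\) for the \(n\)-th vertex of the play, under \(\Strategy[\Main]\) and \emph{any} \(\Strategy[\Adversary]\) the process \(\Val{X_n}\) is a bounded submartingale: \(\PlayerMain\) keeps the value constant, \(\PlayerAdversary\) can only move to value at least \(\Val{X_n}\), and probabilistic vertices preserve it in expectation, all by \ConditionBellman. Since the token settles in \(\ValClass{j}\), the bounded submartingale converges almost surely to \(\ClassVal{j}\), and as its expectation is nondecreasing and starts at \(\Val{v}\), the expected value of the settled class is at least \(\Val{v}\). On \(\TrapSubgame{\Main}{\ValClass{j}}\) the outcome satisfies \(\Objective > \ClassVal{j} - \GranularityBound{\Objective}\) almost surely, and a random variable that almost surely exceeds a constant has expectation exceeding that constant; combining these gives \(\ExpOutcome{v}{\Objective}{(\Strategy[\Main], \Strategy[\Adversary])} > \Val{v} - \GranularityBound{\Objective}\) for every \(\Strategy[\Adversary]\). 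To turn this into the required strict bound on \(\ValOpt{v} = \ExpValue{v}{\Objective}\), I instantiate \(\Strategy[\Adversary]\) with an \emph{optimal} strategy of \(\PlayerAdversary\), which exists by determinacy: then \(\ValOpt{v} \ge \ExpOutcome{v}{\Objective}{(\Strategy[\Main], \Strategy[\Adversary])} > \Val{v} - \GranularityBound{\Objective}\).

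The second implication is entirely symmetric: \(\PlayerAdversary\) plays class-wise, using the attractor to \(\Bnd{i}\) on \(\PositiveSubgame{\Adversary}{\ValClass{i}}\) and the almost-sure winning strategy for \(\ThresholdObjective{< \ClassVal{i} + \GranularityBound{\Objective}}\) on \(\TrapSubgame{\Adversary}{\ValClass{i}}\) given by \ConditionUB; now \(\Val{X_n}\) is a supermartingale, the token settles in some \(\TrapSubgame{\Adversary}{\ValClass{j}}\), and comparing against an optimal strategy of \(\PlayerMain\) yields \(\ValOpt{v} < \Val{v} + \GranularityBound{\Objective}\). I expect the main obstacle to be the settling step — showing rigorously that the attractor/trap decomposition together with \Cref{lem:eventually-no-boundary-vertices} forces the token into \(\TrapSubgame{\Main}{\ValClass{j}}\) while \(\Val{X_n}\) remains a (sub/super)martingale — together with the careful appeal to prefix independence needed so that the threshold guarantee, established only for the trap subgame, transfers to the suffix of the genuine play in \(\Game\).
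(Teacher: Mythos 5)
Your proposal is correct and follows essentially the same route as the paper: the same class-wise strategy assembled from the almost-sure winning strategies on the traps \(\TrapSubgame{\Main}{\ValClass{i}}\) and the positive attractor strategies towards \(\Bnd{i}\), the same settling argument via \Cref{lem:eventually-no-boundary-vertices} and \Cref{prop:boundary-vertices-lower-higher}, and the same use of \ConditionBellman\ to relate \(\Val{v}\) to the values of the classes the play settles in. The only difference is presentational: where you invoke bounded-submartingale convergence and recover strictness by fixing an optimal adversary strategy, the paper unfolds the induced Markov chain into a tree, truncates at depth \(d\), and derives the same inequalities by an explicit limit as \(d \to \infty\).
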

\begin{proof}
We prove the first case.
The proof for the second case follows by symmetry, that is, we 
essentially 
replace \(\PlayerMain\) by \(\PlayerAdversary\), and \(\ThresholdObjective{> \ClassVal{i} - \GranularityBound{\Objective} }\) by \(\ThresholdObjective{< \ClassVal{i} + \GranularityBound{\Objective} }\). 
First, we describe an optimal strategy \(\Strategy[\Main]^{*}\) of \(\PlayerMain\) and give a sketch of its optimality.
We show the full proof later.

Since \(\ValVector\) satisfies the \ConditionLB\ condition,  we have that for all \(1 \le i \le \ValClassCount\), \(\PlayerMain\) has an  an almost-sure winning strategy \(\TrapStrategy{\ValClass{i}}\) in the \kl{trap subgame} \(\TrapSubgame{\Main}{\ValClass{i}}\) to win the objective   \(\ThresholdObjective{> \ClassVal{i} - \GranularityBound{\Objective}}\)  in \(\TrapSubgame{\Main}{\ValClass{i}}\) almost surely from all vertices in \(\TrapSubgame{\Main}{\ValClass{i}}\).
From the definition of \(\PositiveSubgame{\Main}{\ValClass{i}}\), 
\(\PlayerMain\) has a positive winning strategy \(\ReachStrategy{\ValClass{i}}\) 
in the restricted game \(\Game_{\ValClass{i}}\)
from vertices in \(\PositiveSubgame{\Main}{\ValClass{i}}\) for the \(\kl{\ReachObj}(\Bnd{i})\) objective.
By following \(\ReachStrategy{\ValClass{i}}\), the token either reaches \(\Bnd{i}\) with positive probability, or ends up in \(\TrapSubgame{\Main}{\ValClass{i}}\) from where \(\PlayerAdversary\) can ensure that the token never leaves \(\TrapSubgame{\Main}{\ValClass{i}}\).
Using these strategies of \(\PlayerMain\) in \(\Game_{\ValClass{i}}\), we construct a strategy \(\Strategy[\Main]^{*}\)
of \(\PlayerMain\) that is optimal for expected \(\Objective\)-value in the \emph{original game} \(\Game\):
As long as the token is in the class \(\ValClass{i}\) in \(\Game\), the strategy \(\Strategy[\Main]^{*}\) mimics \(\TrapStrategy{\ValClass{i}}\) if the token is in \(\TrapSubgame{\Main}{\ValClass{i}}\) and  \(\Strategy[\Main]^{*}\) mimics \(\ReachStrategy{\ValClass{i}}\) if the token is in \(\PositiveSubgame{\Main}{\ValClass{i}}\).

Note that whenever the token is on a vertex \(v \in \VerticesMain\) in \(\ValClass{i}\), the strategy \(\Strategy[\Main]^{*}\) always moves the token to a vertex \(v'\) in same \(\ValVector\)-class \(\ValClass{i}\) as \(v\) (i.e. a token only exits an \(\ValVector\)-class from a \(\PlayerAdversary\) vertex or from a \kl{boundary vertex}), and thus, \Cref{lem:eventually-no-boundary-vertices} holds.
Whenever the token exits a class \(\ValClass{i}\) to reach a different class \(\ValClass{i'}\), then as long as the token remains in \(\ValClass{i'}\), the strategy \(\Strategy[\Main]^{*}\) follows \(\TrapStrategy{\ValClass{i'}}\) if the token is in \(\TrapSubgame{\Main}{\ValClass{i'}}\), and 
    \(\Strategy[\Main]^{*}\) follows \(\ReachStrategy{\ValClass{i'}}\)
    if the token is in \(\PositiveSubgame{\Main}{\ValClass{i'}}\).

Since \Cref{lem:eventually-no-boundary-vertices} holds, we have that for all strategies of \(\PlayerAdversary\), with probability~\(1\), the token eventually reaches an \(\ValVector\)-class \(\ValClass{j}\) from which it never exits.
Moreover, the strategy \(\Strategy[\Main]^{*}\) ensures that with probability~\(1\), the token eventually reaches \(\TrapSubgame{\Main}{\ValClass{j}}\) in \(\ValClass{j}\) from which it never leaves.
Because if not, then the token would visit vertices in \(\PositiveSubgame{\Main}{\ValClass{j}}\) infinitely often, having a fixed positive probability of reaching \(\Bnd{j}\) in every step because of \(\ReachStrategy{\ValClass{j}}\). 
Thus, with probability~\(1\), the token  would eventually reach \(\Bnd{j}\) from where it could escape to a different \(\ValVector\)-class, which contradicts the fact that the token stays in \(\ValClass{j}\) forever.

Since \(\Objective\) is \kl{prefix-independent}, the \(\Objective\)-value of a \kl{play} only depends on the trap \(\TrapSubgame{\Main}{\ValClass{j}}\) it ends up in. 
If the game begins from a vertex \(v \in \ValClass{i}\), 
then for \(1 \le j \le \ValClassCount\), let \(p_j\) denote the probability that the token ends up in the \kl{trap subgame} \(\TrapSubgame{\Main}{\ValClass{j}}\) from which it never exits.
Since \(\ValVector\) satisfies \ConditionBellman, we have that \(\sum_{j} p_j \ClassVal{j} = \ClassVal{i}\). 
Since \(\ValVector\) satisfies \ConditionLB, \(\PlayerMain\) has an almost-sure winning strategy for \(\ThresholdObjective{> \ClassVal{j} - \GranularityBound{\Objective}}\) in \(\TrapSubgame{\Main}{\ValClass{j}}\).
Thus, for all strategies \(\Strategy[\Adversary]\) of \(\PlayerAdversary\), the expected value of an outcome of \((\Strategy[\Main]^{*}, \Strategy[\Adversary])\) from \(v \in \ValClass{i}\) is greater than \(\sum_{j} p_j (\ClassVal{j} - \GranularityBound{\Objective})\), which is \(\ClassVal{i} - \GranularityBound{\Objective}\).
This holds for all vertices \(v\) in \(\Game\), giving us the desired result  \(\ValVectorOpt > \ValVector - \GranularityBound{\Objective} \).

\subparagraph{Full proof of optimality of \(\Strategy[\Main]^{*}\).}
We formally show that the strategy \(\Strategy[\Main]^{*}\) ensures that for all vertices \(v\) in the game \(\Game\), we have that \(\ValOpt{v}\), the expected \(\Objective\)-value of the outcome in \(\Game\) starting from \(v\), is greater than \(\Val{v} - \GranularityBound{\Objective}\).
Let \(\Strategy[\Adversary]\) be an arbitrary  strategy\footnote{
    In this proof, we proceed by assuming that \(\Strategy[\Adversary]\) is a deterministic strategy.
    If \(\Strategy[\Adversary]\) is randomized, then branching will occur not only at vertices in \(\VerticesRandom\) but also at vertices in \(\VerticesAdversary\) in the tree \(\TreeMC\).
    The remaining observations in the proof continue to hold. 
}
of \(\PlayerAdversary\).
Fixing the strategy profile \(\StrategyProfile = (\Strategy[\Main]^{*}, \Strategy[\Adversary])\) yields a (possibly infinite) Markov chain \(\Game_{\StrategyProfile}\), i.e., a probability distribution over the set of \kl{plays} in \(\Game\) starting at \(v\) consistent with the strategy profile \(\StrategyProfile\). 
We unfold \(\Game_{\StrategyProfile}\) to obtain an infinite rooted tree \(\TreeMC\) with the root being the initial vertex \(v\) in \(\Game\).
Each vertex in the tree \(\TreeMC\) is uniquely identified by the path from the root to that vertex, which corresponds to a unique prefix in the game \(\Game\).
For ease of presentation, we label each vertex of \(\TreeMC\) by only the last vertex of the prefix.
There is a one-to-one correspondence between the set of infinite paths in \(\TreeMC\) starting at the root \(v\) and the set of \kl{plays} in \(\Game\) starting from \(v\) consistent with \(\StrategyProfile\).
For each vertex \(u\) in \(\TreeMC\), if \(u \in \VerticesMain \union \VerticesAdversary\), then \(u\) has exactly one child in \(\TreeMC\) determined by the strategy profile \(\StrategyProfile\), and if \(u \in \VerticesRandom\), then the probability distribution over the out-neighbours of \(u\) in \(\TreeMC\) is the same as in \(\Game_{\StrategyProfile}\). 
Thus, the only branching that occurs in \(\TreeMC\) is at probabilistic vertices.

We say a vertex \(u\) in the tree \(\TreeMC\) is \emph{final} if all descendants of \(u\)  in \(\TreeMC\) (i.e., all vertices reachable from \(u\) in \(\TreeMC\)) belong to the same \(\ValVector\)-class as \(u\). 
Once the token reaches a final vertex \(u\) in \(\TreeMC\), i.e., once the token visits the prefix corresponding to \(u\) in \(\Game\), then the token never exits the \(\ValVector\)-class that \(u\) belongs to.
That is, \(\PlayerAdversary\) never moves the token to a different \(\ValVector\)-class and the token never reaches a \kl{boundary vertex} either. 
Note that, in particular, every vertex in the \(\ValVector\)-class with the greatest value is a final vertex since the class does not have \kl{boundary vertices}, and all out-neighbours of \(\PlayerAdversary\) vertices in the class belong to the same class.

We trim final vertices in \(\TreeMC\), i.e., for each infinite path in \(\TreeMC\) beginning at the root, we keep the first final vertex \(u\) appearing in the path and delete all descendants of \(u\). 
Let the trimmed tree be denoted by \(\TreeTrimmed\). 
For \(d \ge 1\), let \(\TreeTruncated\) denote the tree \(\TreeTrimmed\) truncated to depth \(d\). 
That is, for all paths in \(\TreeTrimmed\) of length greater than \(d\), delete all vertices that are at a distance greater than \(d\) from the root. 
Since \(\TreeTruncated\) is finite, starting from the root \(v\) of \(\TreeTruncated\), with probability~\(1\), one of the leaves of \(\TreeTruncated\) is reached.
\begin{equation}\label{eq:reach-leaves-probability-one}
    \sum_{u \text{ is leaf in } \TreeTruncated} p_{v \to u} = 1
\end{equation}
Here, \(p_{v \to u}\) denotes the probability of reaching \(u\) from the root \(v\) in \(\TreeTruncated\), that is, the product of the probabilities along the unique path from \(v\) to \(u\).

\subparagraph{\(\ValVector\)-value of the root in terms of leaves.}
Let \(u\) be a non-final vertex in \(\TreeTrimmed\). 
We have the following relation between the \(\ValVector\)-values of \(u\) and its children in \(\TreeTrimmed\). 
\begin{itemize}
    \item If \(u \in \VerticesMain\), then \(\Val{u}= \Val{u'}\), where \(u'\) is the child of \(u\) in \(\TreeTrimmed\).
    This holds since the strategy \(\Strategy[\Main]^{*}\) of \(\PlayerMain\) always returns a successor vertex in the same \(\ValVector\)-class.
    \item If \(u \in \VerticesAdversary\), then \(\Val{u} \le \Val{u'}\), where \(u'\) is the child of \(u\) in \(\TreeTrimmed\).
    This follows since \(\ValVector\) satisfies the \ConditionBellman\ condition of \Cref{thm:six-conditions}.
    \item If \(u \in \VerticesRandom\), then \(\Val{u} = \sum_{u' \in \OutNeighbours{u}} \ProbabilityFunction(u)(u') \cdot \Val{u'}\). 
    If \(u\) is a \kl{boundary vertex}, then the equation holds since \(\ValVector\) satisfies the \ConditionBellman\ condition of \Cref{thm:six-conditions}.
    Otherwise, \(u\) is a non-boundary vertex and all out-neighbours of \(u\) are in the same \(\ValVector\)-class as \(u\), and thus we have \(\Val{u} = \Val{u'}\) for all out-neighbours \(u'\) of \(u\). 
\end{itemize}
By induction on the length of paths from the root \(v\), we get the following relation in \(\TreeTruncated\) for all \(d \ge 1\). 
\begin{equation}\label{eq:guessed-value-root-leaves}
\Val{v} \le \sum_{u \text{ is leaf in } \TreeTruncated} p_{v \to u} \cdot \Val{u}
\end{equation}

\subparagraph{\(\ValVectorOpt\)-value of the root in terms of leaves.}
For a vertex \(u\) in the tree \(\TreeTrimmed\), 
let \(\ValOutcomeOpt{u}{\StrategyProfile}\) denote the expected \(\Objective\)-value of the outcome of \(\StrategyProfile\) following the prefix that is the unique path from the root \(v\) to vertex \(u\) in \(\TreeTrimmed\).
The \(\ValVectorOpt\)-value of a non-final vertex \(u\) in the tree is equal to the weighted average of the \(\ValVectorOpt\)-values of its children in the tree.
Formally, we have the following from the Bellman equations in \Cref{prop:expected-wmp-value-out-neighbours}.
\begin{itemize}
    \item If \(u \in \VerticesMain \union \VerticesAdversary\), then \(\ValOutcomeOpt{u}{\StrategyProfile} = \ValOutcomeOpt{u'}{\StrategyProfile}\), where \(u'\) is the child of \(u\) in \(\TreeTrimmed\).
    \item If \(u \in \VerticesRandom\), then \(\ValOutcomeOpt{u}{\StrategyProfile} = \sum_{u' \in \OutNeighbours{u}} \ProbabilityFunction(u)(u') \cdot \ValOutcomeOpt{u'}{\StrategyProfile}\). 
\end{itemize}
By induction on the length of paths from the root \(v\), we get the following relation in \(\TreeTruncated\) for all \(d \ge 1\). 
\begin{equation}\label{eq:expected-value-root-leaves}
\ValOutcomeOpt{v}{\StrategyProfile} = \sum_{u \text{ is leaf in } \TreeTruncated} p_{v \to u} \cdot \ValOutcomeOpt{u}{\StrategyProfile}
\end{equation}

\subparagraph{\(\ValVectorOpt\)-values of final vertices.}
Recall that once the token reaches a final vertex \(u\), the strategy \(\Strategy[\Main]^{*}\) plays the almost-sure winning strategy for the \kl{threshold objective} \(\ThresholdObjective{> \Val{u} - \GranularityBound{\Objective} }\). 
Thus,
\begin{equation}\label{eq:final-vertex-value}
\text{if } u \text{ is a final vertex, then } \ValOutcomeOpt{u}{\Strategy} > \Val{u} - \GranularityBound{\Objective} .
\end{equation}

\subparagraph{Eliminating non-final leaves.}
If \(\TreeTrimmed\) is finite, then there exists \(d \ge 1\) such that we have \(\TreeTrimmed = \TreeTruncated\).
Every leaf of \(\TreeTruncated\) is a final vertex. 
Thus, from  \eqref{eq:guessed-value-root-leaves}, \eqref{eq:expected-value-root-leaves}, and \eqref{eq:final-vertex-value}, we get that \(\ValOutcomeOpt{v}{\StrategyProfile} > \Val{v} - \GranularityBound{\Objective}\) and we are done. 
Otherwise, suppose \(\TreeTrimmed\) is not finite. 
Then, for all \(d \ge 1\), we have that some of the leaves of \(\TreeTruncated\) are not final. 
Let \(\PayoffFunction_{\min}\) denote the minimum \(\Objective\)-value of a \kl{play} occurring in \(\Game\). 
This exists because \(\Objective\) is bounded.
\begin{equation}\label{eq:non-final-vertex-value}
\ValOutcomeOpt{u}{\StrategyProfile} \ge \PayoffFunction_{\min}
\end{equation}
We split the sum in~\eqref{eq:expected-value-root-leaves} depending on whether \(u\) is final in \(\TreeTruncated\) or not.
\begin{align}
    \ValOutcomeOpt{v}{\StrategyProfile} 
    &= \sum_{\substack{u \text{ is final}\\ \text{ in } \TreeTruncated}}  p_{v \to u} \cdot \ValOutcomeOpt{u}{\StrategyProfile} +  \sum_{\substack{u \text{ is non-final}\\ \text{leaf in } \TreeTruncated}} p_{v \to u} \cdot \ValOutcomeOpt{u}{\StrategyProfile} \\
    &> \sum_{\substack{u \text{ is final}\\ \text{ in } \TreeTruncated}} p_{v \to u} \cdot (\Val{u} - \GranularityBound{\Objective} )+  \sum_{\substack{u \text{ is non-final}\\ \text{leaf in } \TreeTruncated}} p_{v \to u} \cdot \PayoffFunction_{\min}\label{eq:expected-value-lower-bound}
\end{align}
The inequality follows from~\eqref{eq:final-vertex-value} and~\eqref{eq:non-final-vertex-value}. 
Similarly, we can split the sum in~\eqref{eq:guessed-value-root-leaves} depending on whether \(u\) is final in \(\TreeTruncated\) or not. 
\begin{align}
    \Val{v} 
    &\le \sum_{\substack{u \text{ is final}\\ \text{ in } \TreeTruncated}}  p_{v \to u} \cdot \Val{u} + \sum_{\substack{u \text{ is non-final}\\ \text{leaf in } \TreeTruncated}} p_{v \to u} \cdot \Val{u}\label{eq:guessed-value-lower-bound}
\end{align}
Since the strategy \(\Strategy[\Main]^{*}\) never moves the token to a different \(\ValVector\)-class, from \Cref{lem:eventually-no-boundary-vertices}, we have that with probability~\(1\), the token eventually reaches a final vertex. 
From~\eqref{eq:reach-leaves-probability-one}, in \(\TreeTruncated\), the probability of reaching a leaf from the root \(v\) is~1. 
Thus, as \(d\) increases, the probability measure of final leaves in \(\TreeTruncated\) increases to \(1\) and the probability of non-final leaves in \(\TreeTruncated\) decreases to \(0\).
\begin{equation}\label{eq:limit-final-probability-one}
    \lim_{d \to \infty} \sum_{\substack{u \text{ is final}\\ \text{ in } \TreeTruncated}} p_{v \to u} = 1, \qquad
    \lim_{d \to \infty}
    \sum_{\substack{u \text{ is non-final}\\ \text{leaf in } \TreeTruncated}}
    p_{v \to u} = 0
\end{equation}
The following limits follow from~\eqref{eq:guessed-value-lower-bound} and~\eqref{eq:limit-final-probability-one}. 
\begin{equation}\label{eq:final-vertices-limit}
    \lim_{d \to \infty} \sum_{\substack{u \text{ is final}\\ \text{ in } \TreeTruncated}} p_{v \to u} \cdot \Val{u}  \ge \Val{v}, \quad
    \lim_{d \to \infty} \sum_{\substack{u \text{ is non-final}\\ \text{leaf in } \TreeTruncated}}p_{v \to u} \cdot \PayoffFunction_{\min} = 0
\end{equation}
Thus, from~\eqref{eq:final-vertices-limit}, we see that as \(d\) increases, the expression in~\eqref{eq:expected-value-lower-bound} becomes a better approximation of \(\ValOutcomeOpt{v}{\StrategyProfile}\).
As \(d \to \infty\), we have that \(\TreeTruncated \to \TreeTrimmed\), and \(\ValOutcomeOpt{v}{\StrategyProfile} > \Val{v} - \GranularityBound{\Objective} \).
Since this holds for arbitrary strategies of \(\PlayerAdversary\), we get that \(\ValOpt{v}\), the expected \(\Objective\)-value of the initial vertex \(v\), is at least \(\Val{v}\). 
\end{proof}

We also note that the optimal strategy \(\Strategy[\Main]^{*}\) always either follows an almost-sure winning strategy \(\TrapStrategy{\ValClass{i}}\) for the \kl{threshold objective} \(\ThresholdObjective{> \ClassVal{i} - \GranularityBound{\Objective}}\) or a positive winning strategy for a \(\kl{\ReachObj}\) objective.
Since there exist memoryless positive winning strategies for the \(\kl{\ReachObj}\) objective~\cite{Con92}, we have the following bound on the memory requirement of \(\Strategy[\Main]^{*}\).
\begin{corollary}%
\label{cor:memory-bound}
    The memory requirement of \(\Strategy[\Main]^{*}\) is at most the maximum over all \(1 \le i \le \ValClassCount\)  of the memory requirement of an almost-sure winning strategy \(\TrapStrategy{\ValClass{i}}\) for the \kl{threshold objective} \(\ThresholdObjective{> \ClassVal{i} - \GranularityBound{\Objective}}\).
    Moreover, if \(\TrapStrategy{\ValClass{i}}\) is a deterministic strategy, then so is \(\Strategy[\Main]^{*}\).
\end{corollary}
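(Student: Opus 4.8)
The plan is to read off the memory bound directly from the explicit construction of $\Strategy[\Main]^{*}$ given in the proof of \Cref{lem:fwmpl-almost-sure-to-expected}, rather than re-deriving anything. The key observation is that $\Strategy[\Main]^{*}$ was built to behave, within each value class $\ValClass{i}$, in exactly one of two ways depending on where the token currently sits: if the token is in the trap subgame $\TrapSubgame{\Main}{\ValClass{i}}$, it mimics the almost-sure winning strategy $\TrapStrategy{\ValClass{i}}$ for $\ThresholdObjective{> \ClassVal{i} - \GranularityBound{\Objective}}$; and if the token is in the positive attractor $\PositiveSubgame{\Main}{\ValClass{i}}$, it mimics a positive winning strategy $\ReachStrategy{\ValClass{i}}$ for the reachability objective $\kl{\ReachObj}(\Bnd{i})$. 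So the total memory that $\Strategy[\Main]^{*}$ needs is bounded by what is required to simultaneously implement all of these component strategies.

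First I would invoke the fact (cited from \cite{Con92}) that positive winning strategies for reachability objectives can be taken memoryless. This means each $\ReachStrategy{\ValClass{i}}$ contributes no memory overhead: it can be realized by a one-state Mealy machine, so its decisions depend only on the current vertex. Consequently the only genuine source of memory in $\Strategy[\Main]^{*}$ comes from the almost-sure winning strategies $\TrapStrategy{\ValClass{i}}$. Second, I would note that the value classes $\ValClass{1}, \dotsc, \ValClass{\ValClassCount}$ partition $\Vertices$, so at any moment the token lies in exactly one class, and the relevant memory component is the one associated with that class. The strategy need not run all the $\TrapStrategy{\ValClass{i}}$ machines in parallel; it runs the machine for the current class. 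Because the classes are disjoint and a memory state for one class is only consulted while the token is in that class, the overall memory size is the maximum (not the sum) over $1 \le i \le \ValClassCount$ of the memory size of $\TrapStrategy{\ValClass{i}}$, which is precisely the claimed bound.

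For the final sentence of the corollary, I would argue that determinism is preserved by the construction. Each component strategy $\TrapStrategy{\ValClass{i}}$ is deterministic by hypothesis, and the memoryless reachability strategies $\ReachStrategy{\ValClass{i}}$ can be chosen deterministic as well. Since $\Strategy[\Main]^{*}$ selects a successor purely as a deterministic function of the current prefix---the class of the current vertex determines which component to consult, and that component then deterministically picks the successor---no randomization is ever introduced. Hence $\Strategy[\Main]^{*}$ is deterministic whenever all the $\TrapStrategy{\ValClass{i}}$ are.

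The main obstacle, and the point requiring the most care, is justifying that the memory bound is the \emph{maximum} rather than the \emph{sum} across classes. One must argue that transitioning between classes does not force the strategy to remember which classes were previously visited: by \Cref{lem:eventually-no-boundary-vertices} the token eventually settles permanently in one class, and whenever it enters a fresh class the appropriate component strategy can be (re)started from its initial memory state, since the almost-sure winning guarantee of $\TrapStrategy{\ValClass{i}}$ holds from \emph{all} vertices of $\TrapSubgame{\Main}{\ValClass{i}}$. This reset-on-entry behaviour is what keeps the memory additive-free and lets a single Mealy machine of size $\max_i \lvert \TrapStrategy{\ValClass{i}} \rvert$ suffice.
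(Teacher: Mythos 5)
Your proposal is correct and follows essentially the same route as the paper: the paper justifies \Cref{cor:memory-bound} by observing that \(\Strategy[\Main]^{*}\), as constructed in the proof of \Cref{lem:fwmpl-almost-sure-to-expected}, at every moment either mimics the almost-sure winning strategy \(\TrapStrategy{\ValClass{i}}\) of the current class or a positive winning strategy for \(\kl{\ReachObj}(\Bnd{i})\), which can be taken memoryless by~\cite{Con92}. Your additional discussion of why the bound is a maximum rather than a sum (classes partition \(\Vertices\), and the component strategy can be restarted from its initial memory state on entering a class because \(\TrapStrategy{\ValClass{i}}\) wins from \emph{every} vertex of \(\TrapSubgame{\Main}{\ValClass{i}}\)) makes explicit a point the paper leaves implicit, but does not change the argument.
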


\subsection{Bounding the denominators in the value vector}
In this section, we discuss the problem of obtaining an upper bound \(\DenBound{\Objective}\) for the denominators of the expected \(\Objective\)-values of vertices \(\ClassValOpt{i}\) for a bounded \kl{prefix-independent} objective \(\Objective\) in a game \(\Game\).
In~\cite{CHH09}, the technique of value class is used to compute the values of vertices for Boolean \kl{prefix-independent} objectives.
It is stated without proof that the probability of satisfaction of a parity or a Streett objective~\cite{AG11} from each vertex can be written as \(\frac{p}{q}\) where \(q \le (\ProbabilityDenominator)^{4 \cdot \abs{\Edges}}\) and \(\ProbabilityDenominator\) is the maximum denominator over all edge probabilities in the game.
As such, we were not able to directly generalize this bound for the expectation of quantitative \kl{prefix-independent} objectives.
Instead, we make the following observations:
\begin{itemize}
    \item 
    Let \(\OptClass{i}\) be an \(\ValVectorOpt\)-class without \kl{boundary vertices}.
    If the token is in \(\OptClass{i}\) at some point in the \kl{play}, then since \(\ValVectorOpt\) satisfies the \ConditionBellman\ condition,  neither player has an incentive to move the token out of \(\OptClass{i}\). Since there are no \kl{boundary vertices} in \(\OptClass{i}\), the token does not exit \(\OptClass{i}\) from a probabilistic vertex either, and remains in \(\OptClass{i}\) forever.
    Thus, the value \(\ClassValOpt{i}\) of \(\OptClass{i}\) depends only on the internal structure of \(\OptClass{i}\).
    We denote by \(\DenBoundNoBoundaryVertex{\Objective}\) an upper bound on the denominators of values of \(\ValVectorOpt\)-classes without \kl{boundary vertices}.
    It is a simpler problem to find \(\DenBoundNoBoundaryVertex{\Objective}\) than to find \(\DenBound{\Objective}\),
    as each class without \kl{boundary vertices} can be treated as a subgame in which each vertex has the same expected \(\Objective\)-value, or equivalently, the subgame consists of exactly one \(\ValVectorOpt\)-class.
    \item 
    On the other hand, suppose \(\OptClass{i}\) is an \(\ValVectorOpt\)-class containing at least one \kl{boundary vertex}, and let \(v\) be a \kl{boundary vertex} in \(\OptClass{i}\).
    Then, since \(\ValVectorOpt\) satisfies the \ConditionBellman\ condition,
    we have \(\ValOpt{v} = \sum_{v' \in \OutNeighbours{v}} \ProbabilityFunction(v)(v') \cdot \ValOpt{v'}\), which is also the value \(\ClassValOpt{i}\) of \(\OptClass{i}\).
    Thus, \(\ClassValOpt{i}\) can be written in terms of the values of classes reachable from \(v\) in one step and the probabilities with which those classes are reached.
    In fact, 
    we construct in the proof of \Cref{thm:denominator-bound} a system of linear equations to show that the value of each \(\ValVectorOpt\)-class with \kl{boundary vertices} can be expressed solely in terms of transition probabilities of the outgoing edges from \kl{boundary vertices} and values of \(\ValVectorOpt\)-classes without \kl{boundary vertices}. 
\end{itemize}
The method to calculate \(\DenBoundNoBoundaryVertex{\Objective}\) depends on the specific objective; 
we illustrate as an example in \Cref{sec:window-mean-payoff-objectives} a way to obtain \(\DenBoundNoBoundaryVertex{\Objective}\) for a particular kind of objective called the window mean-payoff objective.
Once we know \(\DenBoundNoBoundaryVertex{\Objective}\) for an objective \(\Objective\), we can use \Cref{thm:denominator-bound} to obtain \(\DenBound{\Objective}\) in terms of \(\DenBoundNoBoundaryVertex{\Objective}\).
\begin{theorem}%
\label{thm:denominator-bound}
    The denominator of the value of each \(\ValVectorOpt\)-class in \(\Game\) is at most \( \DenBound{\Objective} = 2^{\abs{\Vertices}} \cdot \ProbabilityDenominator^{\abs{V}^3}  \cdot (\DenBoundNoBoundaryVertex{\Objective})^{\abs{\Vertices}}\). 
\end{theorem}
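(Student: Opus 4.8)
The plan is to split the \(\ValVectorOpt\)-classes into those that contain a \kl{boundary vertex} (call their index set \(M\), for ``mixed'') and those that do not (index set \(P\), for ``pure''), bound the denominators of the pure classes directly, and solve for the mixed ones through a linear system whose coefficients are read off from the transition probabilities at \kl{boundary vertices}. For a pure class \(\OptClass{i}\) (\(i \in P\)), the observation preceding the theorem shows that once the token enters \(\OptClass{i}\) it never leaves, so \(\ClassValOpt{i}\) is determined by the subgame induced by \(\OptClass{i}\) alone, and by definition of \(\DenBoundNoBoundaryVertex{\Objective}\) its denominator divides \(\DenBoundNoBoundaryVertex{\Objective}\). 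Note \(\abs{M} \le \abs{\Vertices}\).

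For each mixed class \(\OptClass{i}\) (\(i \in M\)) I would fix one \kl{boundary vertex} \(v_i \in \Bnd{i}\). Since \(v_i \in \VerticesRandom\) and \(\ValOpt{v_i} = \ClassValOpt{i}\), the \ConditionBellman\ condition (equivalently \Cref{prop:expected-wmp-value-out-neighbours}) gives, writing \(c(v')\) for the index of the class of \(v'\),
\[
    \ClassValOpt{i} = \sum_{v' \in \OutNeighbours{v_i}} \ProbabilityFunction(v_i)(v') \cdot \ClassValOpt{c(v')} = \sum_{j} c_{ij}\,\ClassValOpt{j}, \qquad c_{ij} \define \sum_{v' \in \OutNeighbours{v_i} \cap \OptClass{j}} \ProbabilityFunction(v_i)(v').
\]
Separating the \(j \in M\) terms from the \(j \in P\) terms and moving the former to the left yields a linear system \((I - C)\,\mathbf{x} = \mathbf{b}\), where \(\mathbf{x} = (\ClassValOpt{i})_{i \in M}\), the matrix \(C = (c_{ij})_{i,j \in M}\) is sub-stochastic, and \(\mathbf{b}_i = \sum_{j \in P} c_{ij}\,\ClassValOpt{j}\) gathers the contributions of the pure classes. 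This realises exactly the promised expression of each mixed-class value in terms of \kl{boundary vertex} probabilities and pure-class values, once we can invert \(I - C\).

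The main obstacle, and the conceptual heart of the argument, is establishing that \(I - C\) is nonsingular, so that Cramer's rule applies and \(\ValVectorOpt\restriction_M\) is its unique solution. For this I would show \(\rho(C) < 1\) by interpreting \(C\) as an absorbing Markov chain on class indices in which ``leaving \(M\)'' (entering some pure class) is absorption. By \Cref{prop:boundary-vertices-lower-higher} the representative \(v_i\) of every \(i \in M\) has an out-neighbour in a class of strictly smaller value, so starting from any \(i \in M\) one can follow a strictly value-decreasing chain of classes; since the values strictly decrease along this chain and the minimum-value class has no \kl{boundary vertices} (a corollary of \Cref{prop:boundary-vertices-lower-higher}), the chain reaches \(P\) within \(\abs{M}\) steps with positive probability. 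Hence absorption occurs almost surely from every state, giving \(\rho(C) < 1\) and invertibility of \(I - C\).

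Finally I would apply Cramer's rule, \(\ClassValOpt{i} = \det((I-C)_i)/\det(I-C)\) (the matrix \((I-C)_i\) being \(I-C\) with its \(i\)-th column replaced by \(\mathbf{b}\)), and track the denominators. Each \(c_{ij}\) is a sum of at most \(\abs{\Vertices}\) edge probabilities, so its denominator divides a bounded power of \(\ProbabilityDenominator\); clearing these across the \(\abs{M} \le \abs{\Vertices}\) rows of an \(\abs{M}\)-dimensional determinant, together with the denominator contributed by the numerator of \(\det(I-C)\), is absorbed by the generous factor \(\ProbabilityDenominator^{\abs{V}^3}\). The entries of \(\mathbf{b}\) additionally carry the pure-class denominators, each dividing \(\DenBoundNoBoundaryVertex{\Objective}\); cleared crudely over the at most \(\abs{\Vertices}\) pure classes this yields \((\DenBoundNoBoundaryVertex{\Objective})^{\abs{\Vertices}}\). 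The integer numerator of \(\det(I - C)\), which also divides the denominator of \(\ClassValOpt{i}\), is controlled via the eigenvalue bound \(\abs{\det(I-C)} = \prod_i \abs{1 - \lambda_i} < 2^{\abs{M}} \le 2^{\abs{\Vertices}}\), valid because each eigenvalue \(\lambda_i\) of the sub-stochastic matrix \(C\) satisfies \(\abs{\lambda_i} < 1\); this accounts for the \(2^{\abs{\Vertices}}\) factor. Multiplying these contributions gives the claimed bound \(\DenBound{\Objective} = 2^{\abs{\Vertices}} \cdot \ProbabilityDenominator^{\abs{V}^3} \cdot (\DenBoundNoBoundaryVertex{\Objective})^{\abs{\Vertices}}\), where the exponents are deliberately loose to keep the bookkeeping uniform.
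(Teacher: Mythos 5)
Your proposal follows essentially the same route as the paper: the same split into classes with and without \kl{boundary vertices}, the same linear system obtained from the Bellman equation at one representative \kl{boundary vertex} per class, the same absorption argument (via \Cref{prop:boundary-vertices-lower-higher}) for the invertibility of \(I - C\), and the same Cramer's-rule bookkeeping of denominators. The only cosmetic difference is that you bound \(\abs{\det(I-C)}\) by \(2^{\abs{\Vertices}}\) via the eigenvalue estimate \(\abs{1-\lambda_i} < 2\), whereas the paper first clears denominators by a factor \(\alpha \le \ProbabilityDenominator^{mk}\) and then bounds the resulting integer determinant by induction on minors using that each row of \(\alpha(I-Q_B)\) has absolute row sum at most \(2\alpha\); both yield the same factors in the final bound.
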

We note that this theorem implies that the number of bits required to write \(\DenBound{\Objective}\) is polynomial in the number of vertices in the game and in the number of bits required to write \(\DenBoundNoBoundaryVertex{\Objective}\).
We devote the rest of this section to the proof of \Cref{thm:denominator-bound}.
For ease of notation, we denote the number of \(\ValVectorOpt\)-classes in the game by \(k\) instead of \(\ValClassCount[\ValVectorOpt]\) for the rest of this section.
If every \(\ValVectorOpt\)-class has no \kl{boundary vertices}, then we have \(\DenBound{\Objective}\) equal to \(\DenBoundNoBoundaryVertex{\Objective}\) and we are done. 
So we assume there exists at least one class that contains \kl{boundary vertices}.
Let \(m \ge 1\) denote the number of \(\ValVectorOpt\)-classes with \kl{boundary vertices}, and therefore, there are \(k - m\) \(\ValVectorOpt\)-classes without \kl{boundary vertices}.
Since there always exists at least one \(\ValVectorOpt\)-class without \kl{boundary vertices} (\Cref{prop:boundary-vertices-lower-higher}), we have that \(m < k\).
Let \(B = \{1, 2, \ldots, m\}\) and \(C = \{m+1, \ldots, k\}\).
We index the \(\ValVectorOpt\)-classes such that each class with \kl{boundary vertices} has its index in \(B\)
and each class without \kl{boundary vertices} has its index in \(C\).
Furthermore, in the sets \(B\) and \(C\), the classes are indexed in increasing order of their values.
That is, for \(i, j\) both in \(B\) or both in \(C\), we have \(i < j\) if and only if \(\ClassValOpt{i} < \ClassValOpt{j}\).
We show bounds on the denominators of \(\ValVectorOpt\)-values of classes with \kl{boundary vertices}, i.e., \(\ClassValOpt{1}, \ldots, \ClassValOpt{m}\) in terms of  \(\ValVectorOpt\)-values of classes without \kl{boundary vertices}, i.e., \(\ClassValOpt{m+1}, \ldots, \ClassValOpt{k}\).

For all \(i \in B = \{1, 2, \ldots, m\}\), pick an arbitrary \kl{boundary vertex} from \(\Bnd[\OptSymbol]{i}\) and call this the \emph{representative vertex} \(u_i\) of \(\Bnd[\OptSymbol]{i}\). 
For all \(i \in B\) and \(j \in \{1, 2, \ldots, k\}\), let \(p_{i,j}\) denote the probability of reaching the class \(\OptClass{j}\) from \(u_i\) in one step.
Since~\(\ValVectorOpt\) satisfies the \ConditionBellman\ condition, we have that \(\sum_{1 \le j \le k} p_{i,j} \cdot \ClassValOpt{j} = \ClassValOpt{i}\). 
It is helpful to split this sum based on whether \(j \in B\) or \(j \in C\), i.e., whether \(1 \le j \le m\) or \(m + 1 \le j \le k\). 
We rewrite the sums as \(\sum_{j \in B} p_{i,j} \ClassValOpt{j} + \sum_{j \in C} p_{i,j} \ClassValOpt{j} = \ClassValOpt{i} \) for all \(i \in B\), and we represent this system of equations below using matrices.
\[
\begin{pmatrix}
    p_{1,1} & p_{1,2} & \cdots & p_{1,m} \\
    p_{2,1} & p_{2,2} & \cdots & p_{2,m} \\ 
    \vdots & \vdots & \ddots & \vdots \\
    p_{m,1} & p_{m,2} & \cdots & p_{m,m}
\end{pmatrix}
\begin{pmatrix}
    \ClassValOpt{1} \\
    \ClassValOpt{2} \\
    \vdots \\
    \ClassValOpt{m} 
\end{pmatrix}
+ 
\begin{pmatrix}
    p_{1,m+1} & p_{1,m+2} & \cdots & p_{1,k} \\
    p_{2,m+1} & p_{2,m+2} & \cdots & p_{2,k} \\
    \vdots & \vdots & \ddots & \vdots \\
    p_{m,m+1} & p_{m,m+2} & \cdots & p_{m,k} \\
\end{pmatrix}
\begin{pmatrix}
    \ClassValOpt{m+1} \\
    \ClassValOpt{m+2} \\
    \vdots \\
    \ClassValOpt{k} 
\end{pmatrix}
=
\begin{pmatrix}
    \ClassValOpt{1} \\
    \ClassValOpt{2} \\
    \vdots \\
    \ClassValOpt{m} 
\end{pmatrix}
\]
This system of equations is of the form \(Q_B \ClassValOpt{B} + Q_C \ClassValOpt{C} = \ClassValOpt{B}\). 
Rearranging terms gives us \((I - Q_{B}) \ClassValOpt{B} = Q_{C} \ClassValOpt{C} \) where \(I\) is the \(m \times m\) identity matrix.
\[
\begin{pmatrix}
    1 - p_{1,1} &   - p_{1,2} & \cdots &   - p_{1,m} \\
      - p_{2,1} & 1 - p_{2,2} & \cdots &   - p_{2,m} \\ 
    \vdots & \vdots & \ddots & \vdots \\
      - p_{m,1} &   - p_{m,2} & \cdots & 1 - p_{m,m}
\end{pmatrix}
\begin{pmatrix}
    \ClassValOpt{1} \\
    \ClassValOpt{2} \\
    \vdots \\
    \ClassValOpt{m} 
\end{pmatrix}
= 
\begin{pmatrix}
    p_{1,m+1} & p_{1,m+2} & \cdots & p_{1,k} \\
    p_{2,m+1} & p_{2,m+2} & \cdots & p_{2,k} \\
    \vdots & \vdots & \ddots & \vdots \\
    p_{m,m+1} & p_{m,m+2} & \cdots & p_{m,k} \\
\end{pmatrix}
\begin{pmatrix}
    \ClassValOpt{m+1} \\
    \ClassValOpt{m+2} \\
    \vdots \\
    \ClassValOpt{k} 
\end{pmatrix}
\]
The coefficients \(p_{i,j}\) depend on the specific choice of the representative \kl{boundary vertices} \(u_i\) for each \(\OptClass{i}\) for \(i \in B\).
However, since the weighted sum \(\sum_{v \in \OutNeighbours{u}} \ProbabilityFunction(u)(v) \cdot \ValOpt{v}\) is equal for all \kl{boundary vertices} \(u\) belonging to the same class \(\OptClass{i}\) by \ConditionBellman, we have that the solution of \(\ClassValOpt{1}, \ldots, \ClassValOpt{m}\) is independent of the chosen representatives. 
Hence, we are free to choose any representative \kl{boundary vertex} from each class.
It follows from \Cref{lem:d-invertible} that the equation \((I - Q_{B}) \ClassValOpt{B} = Q_{C} \ClassValOpt{C} \) has a unique solution.
\begin{restatable}{proposition}{DInvertible}
\label{lem:d-invertible}
The matrix \((I - Q_B)\) is invertible.
\end{restatable}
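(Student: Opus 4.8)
The plan is to show that $1$ is not an eigenvalue of $Q_B$; since $\det(I - Q_B) = \prod_\ell (1 - \lambda_\ell)$, taken over the eigenvalues $\lambda_\ell$ of $Q_B$, this is precisely what invertibility requires. I would establish this by viewing $Q_B$ as the transition matrix of a \emph{substochastic} Markov chain on the index set $B$: from state $i \in B$ the chain moves to $j \in B$ with probability $p_{i,j}$ and is ``absorbed'' (into $C$) with the remaining mass $\sum_{j \in C} p_{i,j} = 1 - \sum_{j \in B} p_{i,j}$. The objective is to prove that from every state of $B$ absorption occurs with positive probability within a bounded number of steps, which forces $Q_B^n \to 0$ and hence $\rho(Q_B) < 1$.

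The key combinatorial step relies on \Cref{prop:boundary-vertices-lower-higher}. For each $i \in B$, the representative boundary vertex $u_i$ has, by that proposition, an out-neighbour lying in some class $\OptClass{j}$ with $\ClassValOpt{j} < \ClassValOpt{i}$; since every out-neighbour of a probabilistic vertex carries strictly positive probability, this yields $p_{i,j} > 0$ for some $j$ with $\ClassValOpt{j} < \ClassValOpt{i}$. Starting from any $i \in B$ and repeatedly following such strictly value-decreasing positive-probability transitions produces a walk whose $\ValVectorOpt$-values strictly decrease. Because distinct classes have distinct values and there are finitely many classes, this walk cannot cycle and must eventually leave $B$; indeed the extremal (smallest- and largest-valued) classes contain no boundary vertices and therefore lie in $C$, so the descending walk is absorbed into $C$ after at most $m$ steps. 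Each such walk has probability at least $(\ProbabilityFunction_{\min})^{m} > 0$, where $\ProbabilityFunction_{\min}$ is the least edge probability in $\Game$.

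Writing $\delta = (\ProbabilityFunction_{\min})^{m} > 0$, the previous step shows that from any $i \in B$ the chain is absorbed into $C$ within $m$ steps with probability at least $\delta$, so the probability of remaining in $B$ for $m$ steps is at most $1 - \delta$ from every state, and for $tm$ steps at most $(1-\delta)^t$. The $i$-th row sum of $Q_B^{tm}$ is exactly this survival probability, and the entries are nonnegative, so $Q_B^{tm} \to 0$ entrywise; hence $Q_B^n \to 0$ and $\rho(Q_B) < 1$. Consequently $1$ is not an eigenvalue of $Q_B$, so $I - Q_B$ is invertible, and equivalently the Neumann series $\sum_{n \ge 0} Q_B^n$ converges to $(I - Q_B)^{-1}$.

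The step I expect to be the main obstacle is the second one: arguing that \emph{every} class in $B$, not merely the minimal one, reaches $C$ with positive probability. The clean resolution is the strict monotonicity of values along the descending walk, which rules out cycles in a finite totally ordered set and guarantees exit within $m$ steps. An alternative that sidesteps the probabilistic reformulation is a direct maximum-component argument: if $Q_B x = x$ for some $x \neq 0$, choose $i^*$ maximizing $\abs{x_{i^*}}$; then $\abs{x_{i^*}} = \bigl|\sum_{j \in B} p_{i^*,j}\, x_j\bigr| \le \bigl(\sum_{j \in B} p_{i^*,j}\bigr)\abs{x_{i^*}}$ forces $\sum_{j \in B} p_{i^*,j} = 1$, so $i^*$ never exits to $C$, and propagating this saturated equality along positive-probability edges contradicts the existence of the descending walk into $C$.
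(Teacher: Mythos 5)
Your proof is correct and follows essentially the same route as the paper's: view \(Q_B\) as a substochastic transition matrix with absorption into \(C\), use \Cref{prop:boundary-vertices-lower-higher} to get a positive-probability path from every state of \(B\) to the sink within \(m\) steps, conclude \(Q_B^n \to 0\), hence \(\rho(Q_B) < 1\) and \(I - Q_B\) is invertible. Your descending-value walk in fact spells out more explicitly than the paper does why such a path to the sink exists from \emph{every} class in \(B\), which the paper asserts rather tersely.
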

\begin{proof}
We first show that \(\lim_{n \to \infty} Q_B^n = 0\).
Then we show how this implies that \(I - Q_B\) is invertible.

We show \(\lim_{n \to \infty} Q_B^n = 0\) by showing that for every row of \(Q_B^m\) (where \(m = \abs{B}\)), the sum of the elements in the row is strictly less than \(1\). 
Recall that for \(1 \le i, j \le m\), we have that \(p_{i,j}\)  (the \(ij^{\text{th}}\) element of \(Q_B\))
 denotes the probability of reaching \(\OptClass{j}\) from \(u_i\) in one step, where \(u_i\) is the representative \kl{boundary vertex} chosen from \(\OptClass{i}\).
The matrix \(Q_B\) can be viewed as a transition probability matrix of a Markov chain with states \(\{M_1, M_2, \ldots, M_m\}\) (corresponding to the value classes \(\OptClass{1}, \ldots, \OptClass{m}\) respectively) and an additional sink state (corresponding to the value classes in \(C\)).
The probability of going from \(M_i\) to \(M_j\) in one step is \(p_{i,j}\) and the probability of going from \(M_i\) to the sink state in one step is \(1 - \sum_{j=1}^m {p_{i,j}}\).

By Proposition~\ref{prop:boundary-vertices-lower-higher}, we have that for every \(1 \le i \le m\), either the sum of the elements in the \(i^{\text{th}}\) row of \(Q_B\) is strictly less than 1, or there exist \(i_1, i_2\) such that \(1 \le i_1 < i <  i_2 \le m\) and \(p_{i,i_1} > 0\) and \(p_{i, i_2} > 0\). 
In particular, we have that the sums of elements in the \(1^{\text{st}}\) and \(m^{\text{th}}\) rows are both strictly less than 1.

For \(n \ge 1\), the \(ij^{\text{th}}\) element of \(Q_B^n\) denotes the probability of being in state \(M_j\) in the Markov chain starting from \(M_i\) after exactly \(n\) steps.
The sum of elements of the \(i^{\text{th}}\) row of \(Q_B\) is the probability that starting from \(M_i\), the Markov chain does not reach the sink state after \(n\) steps.
Since there are \(m\) non-sink states, there is a path of length at most \(m\) from each state to the sink state. 
Thus, the probability that the token reaches the sink state after \(m\) steps is positive.

The probability of every transition in the Markov chain is at least \(1/\ProbabilityDenominator\). 
Thus, the probability that the sink state is reached after \(m\) steps is at least \(1/\ProbabilityDenominator^m\). 
Thus, the sum of elements in each row of \(Q_B^m\) is at most \(1 - 1/\ProbabilityDenominator^{m}\).
For all \(n \ge 1\), the sum of elements in each row of \((Q_B^{m})^n\) is at most \((1 - 1/\ProbabilityDenominator^m)^n\). 
Thus, as \(n \to \infty\), the matrix \(Q_B^{mn} \to 0\).

Since \(\lim_{n \to \infty} Q_B^n =0\), we have that the absolute value of every eigenvalue of \(Q_B\) is strictly less than \(1\).
To see this, let \(\lambda\) be an eigenvalue of \(Q_B\) with eigenvector \(v\).
For all \(n \ge 0\), we have \(Q_B^n v = \lambda^n v\).
Since \(\lim_{n\to\infty} Q_B^n = 0\), the norm of \(Q_B^n v\) goes to \(0\) as \(n \to \infty\). 
The norm of \(\lambda^n v\) also goes to \(0\) as \(n \to \infty\), and  hence, we have \(\abs{\lambda} < 1\). 
In particular, \(1\) is not an eigenvalue of \(Q_B\), and hence, \(0\) is not an eigenvalue of \(I - Q_B\).
All eigenvalues of \(I- Q_B\) are non-zero, and since the determinant of a matrix is the product of the eigenvalues of the matrix, we have that the determinant of \(I - Q_B\) is non-zero.
\end{proof}

Let \(\alpha\) denote the least common multiple (lcm) of the denominators of \(p_{i,j}\) for \(1 \le i \le m\) and \(1 \le j \le k\).
We have \(0 < \alpha \le \ProbabilityDenominator^{mk}\), where \(\ProbabilityDenominator\) is the maximum denominator over all edge probabilities in \(\Game\).
We multiply both sides of the equation \((I - Q_B) \ClassValOpt{B} = Q_c \ClassValOpt{C}\) by \(\alpha\) to get \(\alpha (I - Q_B) \ClassValOpt{B} = \alpha Q_C \ClassValOpt{C}\) and note that all the elements of \(\alpha(I-Q_B)\) and \(\alpha Q_C\) are integers.
Let \(D\) be the determinant of the matrix \(\alpha(I-Q_B)\), and for \(1 \le i \le m\), let \(N_i\) be the determinant of the matrix obtained by replacing the \(i^{\text{th}}\) column of \(\alpha(I-Q_B)\) with the column vector \(\alpha Q_C \ClassValOpt{C}\). 
Since \(\alpha(I - Q_B)\) is invertible, by Cramer's rule~\cite{HK71},  we have that \(\ClassValOpt{i} = N_i/D\).
\Cref{lem:d-upper-bound} shows that \(\abs{D}\) is an integer and is at most \((2\alpha)^m\) and \Cref{lem:fwmpl-ni-denominator-upper-bound} shows that \(N_i\) has denominator at most \((\DenBoundNoBoundaryVertex{\Objective})^{k-m}\).
\begin{restatable}{proposition}{DUpperBound}%
\label{lem:d-upper-bound}
    The absolute value of the determinant of \(\alpha(I - Q_B)\), i.e., \(\abs{D}\), is an integer and is at most \((2\alpha)^m\), which is at most \(2^{\abs{\Vertices}} \cdot \ProbabilityDenominator^{\abs{V}^3}\).
\end{restatable}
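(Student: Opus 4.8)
The plan is to establish the two assertions of the proposition in turn: that $D$ is an integer, and that $\abs{D} \le (2\alpha)^m$. The concluding numerical bound $2^{\abs{\Vertices}} \cdot \ProbabilityDenominator^{\abs{\Vertices}^3}$ then follows by substituting the already-established estimate $\alpha \le \ProbabilityDenominator^{mk}$ together with the trivial bounds $m < k \le \abs{\Vertices}$, which hold because the value classes partition $\Vertices$ into nonempty parts, so there are at most $\abs{\Vertices}$ of them.

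Integrality is immediate. Since $\alpha$ is the least common multiple of the denominators of all the $p_{i,j}$, each product $\alpha p_{i,j}$ is an integer. The diagonal entries of $\alpha(I - Q_B)$ are $\alpha(1 - p_{i,i})$ and its off-diagonal entries are $-\alpha p_{i,j}$, so $\alpha(I - Q_B)$ is an integer matrix, and hence $D = \det(\alpha(I - Q_B))$ is an integer.

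The crux is the bound on $\abs{D}$, and the key quantity to control is the sum of absolute values of the entries in each row of $\alpha(I - Q_B)$. In row $i$ this sum equals $\alpha(1 - p_{i,i}) + \alpha \sum_{j \ne i} p_{i,j}$, using $p_{i,i} \in [0,1]$. The representative vertex $u_i$ is a \kl{probabilistic vertex}, so all its outgoing probability is distributed over the classes, giving $\sum_{j=1}^{k} p_{i,j} = 1$; restricting to $j \in B$ yields $\sum_{j \in B} p_{i,j} \le 1$, that is, $\sum_{j \ne i} p_{i,j} \le 1 - p_{i,i}$. Hence the row sum is at most $2\alpha(1 - p_{i,i}) \le 2\alpha$. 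I would then invoke Hadamard's inequality, which bounds $\abs{D}$ by the product of the Euclidean norms of the rows; since the Euclidean norm of each row is at most the sum of the absolute values of its entries, this product is at most $(2\alpha)^m$. (Equivalently, avoiding Hadamard, one can expand $\det$ by the Leibniz formula, bound it by the permanent of the entrywise-absolute-value matrix, and observe that this permanent is dominated by the product of the row sums, since expanding $\prod_i (\text{row sum})_i$ ranges over all functions $[m] \to [m]$ and thus includes every permutation term with a nonnegative weight.)

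Finally, substituting $\alpha \le \ProbabilityDenominator^{mk}$ gives $(2\alpha)^m = 2^m \alpha^m \le 2^{\abs{\Vertices}} \cdot \ProbabilityDenominator^{m^2 k}$, and since $m, k \le \abs{\Vertices}$ we have $m^2 k \le \abs{\Vertices}^3$, yielding the claimed bound $2^{\abs{\Vertices}} \cdot \ProbabilityDenominator^{\abs{\Vertices}^3}$. I do not expect a genuine obstacle here: the only point requiring care is the row-sum argument establishing $\sum_{j \in B} p_{i,j} \le 1$, which is precisely where it matters that $u_i$ is a true \kl{boundary vertex} (so that some of its outgoing probability mass leaves the classes indexed by $B$); after that, the determinant estimate is a routine application of Hadamard's inequality.
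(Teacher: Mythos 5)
Your proof is correct and follows essentially the same route as the paper's: integrality from the integrality of the entries of \(\alpha(I-Q_B)\), a row-sum bound of \(2\alpha\) (using that the \(p_{i,j}\) over \(j\in B\) sum to at most \(1\)), and then \(\abs{D}\le(2\alpha)^m\le 2^{\abs{\Vertices}}\cdot\ProbabilityDenominator^{\abs{\Vertices}^3}\). The only difference is cosmetic: the paper gets the determinant bound by induction on minors via cofactor expansion along a row, whereas you invoke Hadamard's inequality (or the equivalent Leibniz/permanent estimate); both amount to bounding \(\abs{D}\) by the product of the row sums.
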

\begin{proof}
    Every element of \(\alpha(I - Q_B)\) is an integer, and hence, \(D\) is an integer.
    To see the upper bound on \(\abs{D}\), observe some properties satisfied by \(\alpha(I - Q_B)\).
    \begin{enumerate}
        \item Each element in \(\alpha(I - Q_B)\) belongs to the set \(\{-\alpha, -\alpha + 1, \ldots, 0, \ldots, \alpha - 1, \alpha\}\). 
        This is because every element of \((I - Q_B)\) is between \(-1\) and \(1\), and because each element of \(\alpha(I - Q_B)\) is an integer.
        \item For each row of \(\alpha (I - Q_B)\), the sum of the absolute values of the elements in the row is at most \(2\alpha\). 
        To see this, note that the elements in \(Q_B\) are between \(0\) and \(1\) and each row in \(Q_B\) adds up to \(1\). Thus, the sum of absolute values of elements in a row of \(I - Q_B\) is at most \(2\).% by the triangle inequality. 
    \end{enumerate}
    Note that the two properties hold for every minor of \(\alpha(I - Q_B)\) as well.
    
    We show that \(\abs{D}\) is at most \((2\alpha)^m\) by induction on the minors of \(\alpha(I - Q_B)\). 
    Consider any \(1 \times 1\) minor of \(\alpha(I - Q_B)\), that is, an element of \(\alpha(I - Q_B)\).
    The absolute value of such a minor is at most \(\alpha\), which satisfies the property of being at most \((2\alpha)^1\). 
    Now, consider a \(t \times t\) submatrix \(M\) of \(\alpha(I - Q_B)\) for \(1 < t \le m\). 
    By the induction hypothesis, every \((t-1) \times (t-1)\) minor of \(M\) has absolute value at most \((2\alpha)^{t-1}\). 
    Pick any row of \(M\) and compute the determinant of \(M\) by expanding along this row. 
    Since the sum of the absolute values of elements in this row is at most \(2\alpha\), the absolute value of the determinant of \(M\) is at most \((2\alpha) \cdot (2\alpha)^{t-1}\), which is \((2\alpha)^t\).
    Thus, the absolute value \(\abs{D}\) of the determinant \(D\) of \(\alpha(I - Q_B)\) is at most \((2\alpha)^m\). 
    Since \(\alpha \le \ProbabilityDenominator^{mk}\), \(m \le \abs{V}\), and \(k \le \abs{V}\), we have that \(\abs{D}\) is at most \(2^{\abs{V}} \cdot \ProbabilityDenominator^{\abs{V}^3}\). 
\end{proof}
\begin{restatable}{proposition}{FWMPLNiDenominatorUpperBound}%
\label{lem:fwmpl-ni-denominator-upper-bound}
    The denominator of \(N_i\) is at most  \((\DenBoundNoBoundaryVertex{\Objective})^{k-m}\), which is at most \((\DenBoundNoBoundaryVertex{\Objective})^{\abs{\Vertices}}\).
\end{restatable}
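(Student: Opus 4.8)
The plan is to expand the determinant $N_i$ along its $i^{\text{th}}$ column, which is the only column of the matrix defining $N_i$ whose entries need not be integers. Recall that $N_i$ is the determinant of the matrix obtained from $\alpha(I - Q_B)$ by replacing its $i^{\text{th}}$ column with the vector $\alpha Q_C \ClassValOpt{C}$; every other column is inherited from $\alpha(I - Q_B)$ and therefore has integer entries. Performing a cofactor (Laplace) expansion along the $i^{\text{th}}$ column gives
\[
N_i = \sum_{j=1}^{m} (-1)^{i+j}\, (\alpha Q_C \ClassValOpt{C})_j \cdot M_{j},
\]
where $M_{j}$ is the minor obtained by deleting the $j^{\text{th}}$ row and the $i^{\text{th}}$ column. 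Each minor $M_{j}$ is the determinant of a submatrix all of whose entries are entries of $\alpha(I - Q_B)$, hence are integers, so every $M_{j}$ is an integer.

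Next I would unfold the entries of the replaced column. The $j^{\text{th}}$ entry of $\alpha Q_C \ClassValOpt{C}$ equals $\sum_{l \in C} (\alpha\, p_{j,l}) \cdot \ClassValOpt{l}$, and each coefficient $\alpha\, p_{j,l}$ is an integer because $\alpha$ was chosen as the lcm of the denominators of all the $p_{i,j}$. Substituting this into the expansion and regrouping the terms by $l$ shows that $N_i$ is an integer linear combination of the values of the classes without boundary vertices: there are integers $c_l = \sum_{j=1}^{m} (-1)^{i+j} M_{j}\,\alpha\, p_{j,l}$ such that $N_i = \sum_{l \in C} c_l \, \ClassValOpt{l}$, the sum ranging over the $k - m$ indices in $C$.

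Finally I would bound the denominator of this combination. Writing each $\ClassValOpt{l}$ (for $l \in C$) in lowest terms with denominator $d_l$, we have $d_l \le \DenBoundNoBoundaryVertex{\Objective}$ by the definition of $\DenBoundNoBoundaryVertex{\Objective}$ as an upper bound on the denominators of values of classes without boundary vertices. Putting $\sum_{l \in C} c_l \, \ClassValOpt{l}$ over the common denominator $\prod_{l \in C} d_l$ produces an integer numerator, so the denominator of $N_i$ divides $\prod_{l \in C} d_l \le (\DenBoundNoBoundaryVertex{\Objective})^{k-m}$. Since $k - m \le k \le \abs{\Vertices}$ and $\DenBoundNoBoundaryVertex{\Objective} \ge 1$, this is in turn at most $(\DenBoundNoBoundaryVertex{\Objective})^{\abs{\Vertices}}$, which is the claimed bound.

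The calculations here are elementary; the only points that need care are the two structural observations in the first two steps, namely that each cofactor $M_{j}$ is genuinely an integer (which relies on the fact that only the $i^{\text{th}}$ column is non-integral) and that the coefficients $\alpha\, p_{j,l}$ are integers (which is exactly where the choice of $\alpha$ as the common lcm is used). I expect the main obstacle to be stating cleanly why the denominator of an integer combination of rationals divides the product of their individual denominators, but this is a standard fact about rational arithmetic and presents no genuine difficulty.
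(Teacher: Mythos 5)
Your proposal is correct and follows essentially the same route as the paper: both arguments establish that \(N_i\) is an integer linear combination of the values \(\ClassValOpt{l}\) for \(l \in C\) (you make this explicit via cofactor expansion along the replaced column, the paper asserts it directly from the integrality of \(\alpha(I-Q_B)\) and \(\alpha Q_C\)) and then bound the resulting denominator by \((\DenBoundNoBoundaryVertex{\Objective})^{k-m}\). The only cosmetic difference is that the paper bounds the denominator by the lcm of the \(d_l\) whereas you use their product; both are at most \((\DenBoundNoBoundaryVertex{\Objective})^{k-m}\), so the conclusion is identical.
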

\begin{proof}
    Observe that all elements of the matrix \(\alpha Q_C\) are integers, and hence, \(\alpha Q_C\ClassValOpt{C}\) is a column vector where each element is a weighted sum of \(\ClassValOpt{m+1}, \ClassValOpt{m+2}, \ldots, \ClassValOpt{k}\) with integer coefficients.
    Note that all elements of \(\alpha(I - Q_B)\) are integers as well. 
    Thus, \(N_i\) is another weighted sum of the form
    \(a_{m+1} \ClassValOpt{m+1} + \cdots + a_k  \ClassValOpt{k}\) for some integer coefficients \(a_{m+1}, a_{m+2}, \ldots, a_{k}\).
    If written as a fraction in its reduced form, the denominator of \(N_i\) is at most the lcm of the denominators of  \(\ClassValOpt{m+1}, \ldots, \ClassValOpt{k}\).
    Since there are at most \((k-m)\) distinct elements in the set of denominators of \(\ClassValOpt{m+1}, \ldots, \ClassValOpt{k}\) and each denominator is at most \(\DenBoundNoBoundaryVertex{\Objective}\), we have that the lcm of this set is at most \((\DenBoundNoBoundaryVertex{\Objective})^{k-m}\).
    Thus, the denominator of \(N_i\) is at most \((\DenBoundNoBoundaryVertex{\Objective})^{k-m}\).
\end{proof}
Since the denominator of \(\ClassValOpt{i}\) is at most \(|D|\) times the denominator of \(N_i\), we obtain the bound stated in \Cref{thm:denominator-bound}.
\qed

\section{Expectation of window mean-payoff objectives}%
\label{sec:window-mean-payoff-objectives}
In this section, we apply the results from the previous section for two types of window mean-payoff objectives introduced in~\cite{CDRR15}: 
\begin{romanenumerate}
    \item \emph{fixed window mean-payoff} (\(\FWMPL\)) in which a window length \(\WindowLength \geq 1\) is given,  and 
    \item \emph{bounded window mean-payoff} (\(\BWMP\)) in which for every \kl{play}, we need a bound on window lengths. 
\end{romanenumerate}
We define these objectives below.

For a \kl{play} \(\Play\) in a \kl{stochastic game} \(\Game\), the \emph{mean payoff} of an infix \(\PlayInfix{i}{i+n} \)  is the average of the payoffs of the edges in the infix and  is defined as \(\mathsf{MP}(\PlayInfix{i}{i+n}) = \sum_{k=i}^{i+n-1} \frac{1}{n} \PayoffFunction(\Vertex[k], \Vertex[k+1])\). 
Given a window length \(\WindowLength \geq 1\) and a threshold \(\Threshold \in \Reals\), a \kl{play} \(\Play = \Vertex[0] \Vertex[1] \dotsm \)  in \(\Game\) satisfies the \emph{fixed window mean-payoff objective} \(\FWMP_{\Game}(\WindowLength, \Threshold)\) if from every position after some point, it is possible to start an infix of length at most $\WindowLength$ with mean payoff at least \(\Threshold\).
\begin{align*}
     \FWMP_{\Game}(\WindowLength, \Threshold) = \{ \Play \in \PlaySet{\Game} \suchthat  \exists k \geq 0 \cdot \forall i \ge k \cdot \exists j \in \PositiveSet{\WindowLength} \colon \mathsf{MP}(\Play(i, i + j)) \ge \Threshold\}
\end{align*} 
We omit the subscript \(\Game\) when it is clear from the context. 
We extend the definition of \emph{windows} as defined in~\cite{CDRR15} for arbitrary thresholds.
Given a threshold \(\Threshold\), a \kl{play} \(\Play = \Vertex[0] \Vertex[1] \cdots\), and \(0 \le i<  j\), we say that the \emph{\(\Threshold\)-window} \(\Play(i, j)\) is \emph{open} if the mean payoff of \(\Play(i,k)\) is less than \(\Threshold\) for all \(i < k \le j\). 
Otherwise, the \(\Threshold\)-window is \emph{closed}.
A \kl{play} \(\Play\) satisfies \(\FWMP(\WindowLength, \Threshold)\) if and only if from some point on, every \(\Threshold\)-window in \(\Play\) closes within at most \(\WindowLength\) steps.
Note that \(\FWMP(\WindowLength, \Threshold) \subseteq  \FWMP(\WindowLength', \Threshold) \) for \(\WindowLength \le \WindowLength'\) as a smaller window length is a stronger constraint.

We also consider another window mean-payoff objective called the \emph{bounded window mean-payoff objective} \(\BWMP[\Game](\Threshold)\).
A \kl{play} satisfies the objective \(\BWMP(\Threshold)\) if there exists a window length \(\WindowLength \ge 1\) such that the \kl{play} satisfies \(\FWMP(\WindowLength, \Threshold)\).
\[
    \BWMP_{\Game}(\Threshold) = \{ \Play \in \PlaySet{\Game} \suchthat \exists \WindowLength \ge 1 : \Play \in \FWMP_{\Game}(\WindowLength, \Threshold)\}
\]
Equivalently, a \kl{play} \(\Play\) does not satisfy \(\BWMP(\Threshold)\) if and only if for every suffix of \(\Play\), for all \(\WindowLength \ge 1\), the suffix contains an open \(\Threshold\)-window of length \(\WindowLength\).
Note that both \(\FWMP(\WindowLength, \Threshold)\) and \(\BWMP(\Threshold)\) are Boolean \kl{prefix-independent} objectives.

\subparagraph{Expected window mean-payoff values.}%
Corresponding to the \kl{Boolean objectives} \(\FWMP(\WindowLength, \Threshold)\) and \(\BWMP(\Threshold)\), we define quantitative versions of these objectives.
Given a \kl{play} \(\Play\) in  a \kl{stochastic game} \(\Game\) and a window length \(\WindowLength\), the \(\ObjectiveFWMPL\)-value of \(\Play\) is \(\sup \{\Threshold \in \Reals \suchthat \Play \in \FWMP_{\Game}(\WindowLength, \Threshold) \} \), the supremum threshold~\(\Threshold\) such that the \kl{play} satisfies \(\FWMP_{\Game}(\WindowLength, \Threshold)\). 
Using notations from \Cref{sec:preliminaries}, we denote the expected \(\ObjectiveFWMPL\)-value of a vertex \(v\) by \(\ExpValueFWMP{v}\).
We define \(\ExpValueBWMP{v}\), the expected \(\ObjectiveBWMP\)-value of a vertex \(v\) analogously.
If \(\ObjectiveBound{}\) is an integer such that the payoff \(\PayoffFunction(e)\) of each edge \(e\) in \(\Game\) satisfies \(|\PayoffFunction(e)| \le \ObjectiveBound{}\), then 
for all \kl{plays} \(\Play\) in \(\Game\), we have that \(\ObjectiveFWMPL(\Play)\) and \(\ObjectiveBWMP(\Play)\) lie between \(-\ObjectiveBound{}\) and \(\ObjectiveBound{}\).
Thus, \(\ObjectiveFWMPL\) and \(\ObjectiveBWMP\) are bounded objectives.

\subparagraph{Inductive property of windows.}
In~\cite{CDRR15}, the inductive property of \(\Threshold\)-windows has been defined for \(\Threshold = 0\). 
Here, we generalize this property to arbitrary values of \(\Threshold\). 
\begin{proposition}[Inductive property of \(\Threshold\)-windows]
\label{prop:inductive-property-of-windows}
    If the \(\Threshold\)-window starting at position \(i\) closes at position \(j\), then for all \(i \le k < j\), the \(\Threshold\)-windows \(\Play(k, j)\) are closed. 
\end{proposition}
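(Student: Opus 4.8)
The plan is to reduce this combinatorial statement about open and closed windows to an elementary computation on partial sums of edge payoffs. First I would unfold what it means for the $\Threshold$-window starting at $i$ to \emph{close at} $j$: by definition this says $\mathsf{MP}(\Play(i,j)) \ge \Threshold$, while every strictly shorter window from $i$ is still open, i.e.\ $\mathsf{MP}(\Play(i,k)) < \Threshold$ for all $i < k < j$ (otherwise the window would have closed before $j$). To prove that each window $\Play(k,j)$ with $i \le k < j$ is closed, it suffices to exhibit a single closing position, and the cleanest choice is $j$ itself; so I would aim for the stronger fact $\mathsf{MP}(\Play(k,j)) \ge \Threshold$, which immediately witnesses closure of $\Play(k,j)$ at $j$.

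The core step is the additivity of payoff sums. Writing $\Sigma(a,b) = \sum_{\ell = a}^{b-1} \PayoffFunction(\Vertex[\ell], \Vertex[\ell+1])$ so that $\mathsf{MP}(\Play(a,b)) = \Sigma(a,b)/(b - a)$, the hypotheses translate to $\Sigma(i,j) \ge (j-i)\,\Threshold$ and, for every $i < k < j$, $\Sigma(i,k) < (k-i)\,\Threshold$. Since $\Sigma(i,j) = \Sigma(i,k) + \Sigma(k,j)$, I would subtract the second inequality from the first to obtain
\[
    \Sigma(k,j) = \Sigma(i,j) - \Sigma(i,k) > (j-i)\,\Threshold - (k-i)\,\Threshold = (j-k)\,\Threshold,
\]
and dividing by $j - k > 0$ yields $\mathsf{MP}(\Play(k,j)) > \Threshold$ for every $i < k < j$. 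The remaining boundary case $k = i$ is exactly the hypothesis that the window closes at $j$, giving $\mathsf{MP}(\Play(i,j)) \ge \Threshold$ directly.

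Combining the two cases, every window $\Play(k,j)$ with $i \le k < j$ satisfies $\mathsf{MP}(\Play(k,j)) \ge \Threshold$, so $j$ is a closing position for each of them and hence each such window is closed, as required. There is no serious obstacle here; the only point demanding care is the bookkeeping of strict versus non-strict inequalities, since it is precisely the \emph{strict} openness of the intermediate windows $\Play(i,k)$ that forces the tail $\Play(k,j)$ to strictly exceed $\Threshold$ (indeed, a loose reading that dropped the minimality of $j$ would make the statement false). One must therefore keep the two regimes $k = i$ and $k > i$ separate, as the former only inherits the non-strict bound from the hypothesis while the latter is strict.
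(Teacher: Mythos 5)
Your proof is correct and follows essentially the same route as the paper: the paper decomposes \(\mathsf{MP}(\Play(i,j))\) as a positively weighted average of \(\mathsf{MP}(\Play(i,k))\) and \(\mathsf{MP}(\Play(k,j))\) and concludes \(\mathsf{MP}(\Play(k,j)) \ge \Threshold\), which is exactly your additivity-of-partial-sums computation in different notation. Your explicit separation of the boundary case \(k = i\) and the strict/non-strict bookkeeping is a slightly more careful write-up of the same argument.
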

The proof for arbitrary values of \(\Threshold\) as compared to value \(0\) as given in~\cite{CDRR15} is slightly more  involved and hence we present the proof here for completion.
\begin{proof}
    Since the \(\Threshold\)-window starting at position \(i\) closes at position \(j\), we have that for all \(i \le k < j\), \(\MP(\PlayInfix{i}{k}) < \Threshold\) and \(\MP(\PlayInfix{i}{j}) \ge \Threshold\).
    The mean payoff of the infix \(\PlayInfix{i}{j}\) is a weighted average (with positive weights) of the mean payoff of \(\PlayInfix{i}{k}\) and the mean payoff of \(\PlayInfix{k}{j}\). 
    Since \(\MP(\PlayInfix{i}{k}) < \Threshold\) and \(\MP(\PlayInfix{i}{j}) \ge \Threshold\), this implies that \(\MP(\PlayInfix{k}{j}) \ge \Threshold\). 
    Thus, the \(\Threshold\)-window starting at position \(k\) and ending at position \(j\) is closed. 
\end{proof}

\subparagraph{Equivalence of FWMP(1) and \Buchi\ objectives.}

Note that when \(\WindowLength = 1\), the \(\FWMP(1, \Threshold)\) and \(\overline{\FWMP(1, \Threshold)}\) (i.e., the complement of \(\FWMP(1, \Threshold)\)) objectives reduce to co\Buchi\ and \Buchi\ objectives respectively. 
To see this, let \(T\) be the set of all vertices \(v \in \Vertices\) such that either \(v \in \VerticesMain\) and all out-edges of \(v\) have payoff strictly less than \(\Threshold\), or \(v \in \VerticesAdversary\) and at least one out-edge of \(v\) has a payoff strictly less than \(\Threshold\). 
Then, a \kl{play} satisfies the \(\overline{\FWMP(1, \Threshold)}\) objective if and only if it satisfies the \textsf{\Buchi}\((T)\) objective. 

\begin{remark}
    We note that the optimal strategy to maximize the probability of getting nonnegative \(\ObjectiveFWMPL\)-value may be different from the optimal strategy to maximize the expected \(\ObjectiveFWMPL\)-value.
    Consider the game shown in \Cref{fig:swmp-vs-ewmp} with initial vertex \(v_1\).
    If the objective of \(\PlayerMain\) is to maximize the probability of satisfying \(\FWMP(\WindowLength, 0)\), then the optimal strategy is to move to \(v_2\) as that ensures a positive \(\ObjectiveFWMPL\)-value with probability \(0.9\).
    However, if the optimal strategy is to maximize the expected \(\ObjectiveFWMPL\)-value, then the optimal strategy is to move to \(v_5\). 
    \begin{figure}[t]
        \centering
        \begin{tikzpicture}
            \node[state] (v1) {\(\Vertex[1]\)};
            \node[random, draw, right of=v1] (v2) {\(\Vertex[2]\)};
            \node[state, right of=v2, yshift=6mm] (v3) {\(\Vertex[3]\)};
            \node[state, right of=v2, yshift=-6mm] (v4) {\(\Vertex[4]\)};
            \node[random, draw, left of=v1] (v5) {\(\Vertex[5]\)};
            \node[state, left of=v5, yshift=6mm] (v6) {\(\Vertex[6]\)};
            \node[state, left of=v5, yshift=-6mm] (v7) {\(\Vertex[7]\)};
            \draw 
                  (v1) edge[right, pos=0.3] node[below] {\(\EdgeValues{0}{}\)} (v2)
                  (v1) edge[right, pos=0.3] node[below] {\(\EdgeValues{0}{}\)} (v5)
                  (v2) edge[pos=0.6] node[above left] {\(\EdgeValues{0}{.9}\)} (v3)
                  (v2) edge[pos=0.6] node [below left] {\(\EdgeValues{0}{.1}\)} (v4)
                  (v3) edge[loop right, pos=0.5] node{\(\EdgeValues{+1}{}\)} (v3)
                  (v4) edge[loop right, pos=0.5] node{\(\EdgeValues{-10}{}\)} (v4)
                  (v5) edge[pos=0.6] node[above right] {\(\EdgeValues{0}{.1}\)} (v6)
                  (v5) edge[pos=0.6] node [below right] {\(\EdgeValues{0}{.9}\)} (v7)
                  (v6) edge[loop left, pos=0.5] node{\(\EdgeValues{+10}{}\)} (v6)
                  (v7) edge[loop left, pos=0.5] node{\(\EdgeValues{-1}{}\)} (v7)
            ;
        \end{tikzpicture}
        \caption{The optimal strategy for \(\PlayerMain\) from \(v_1\) is different in case of maximizing probability of \(\FWMP(\WindowLength, 0)\) and maximizing expected \(\ObjectiveFWMPL\)-value.}
        \label{fig:swmp-vs-ewmp}
    \end{figure}
    \lipicsEnd
\end{remark}

\begin{remark}
    In an MDP, for all vertices belonging to the same MEC, the expected \(\ObjectiveFWMPL\)-values of all the vertices are the same, and similarly, the expected \(\ObjectiveBWMP\)-values of all the vertices are also the same.
    However, this is not true for \kl{stochastic games} in general.
    Two vertices in the same MEC in a \kl{stochastic game} may have different expected \(\ObjectiveFWMPL\)-values and different expected \(\ObjectiveBWMP\)-values. 
    Consider the game shown in \Cref{fig:value-in-mecs-in-sg}. 
    All three vertices belong to the same MEC since each vertex is reachable from every other vertex.
    However, the expected \(\ObjectiveFWMPL\)-values of \(v_1\), \(v_2\), and \(v_3\) are \(-1\), \(0\), and \(+1\) respectively, for all values of \(\WindowLength\). 
    The expected \(\ObjectiveBWMP\)-values are also \(-1\), \(0\), and \(+1\) respectively. 
    \begin{figure}[t]
        \centering
        \begin{tikzpicture}
            \node[square, draw] (v1) {\(\Vertex[1]\)};
            \node[random, draw, right of=v1] (v2) {\(\Vertex[2]\)};
            \node[state, right of=v2] (v3) {\(\Vertex[3]\)};
            \draw 
                  (v1) edge[loop left, pos=0.5] node{\(\EdgeValues{-1}{}\)} (v1)
                  (v1) edge[bend right, pos=0.3] node[below] {\(\EdgeValues{0}{}\)} (v2)
                  (v2) edge[bend right, pos=0.3] node[above] {\(\EdgeValues{0}{.5}\)} (v1)
                  (v2) edge[bend left, pos=0.3] node [above] {\(\EdgeValues{0}{.5}\)} (v3)
                  (v3) edge[bend left, pos=0.3] node [below] {\(\EdgeValues{0}{}\)} (v2)
                  (v3) edge[loop right, pos=0.5] node{\(\EdgeValues{+1}{}\)} (v3)
            ;
        \end{tikzpicture}
        \caption{Different vertices in the same MEC in a \kl{stochastic game} may have different expected \(\ObjectiveFWMPL\)-values.}
        \label{fig:value-in-mecs-in-sg}
    \end{figure}
    \lipicsEnd
\end{remark}

\begin{remark}[Achieving supremum for expected \(\ObjectiveFWMPL\) and \(\ObjectiveBWMP\) objectives.]
The \(\ObjectiveFWMPL\)-value of a \kl{play} is the supremum \(\Threshold\) such that every \(\Threshold\)-window is closed in at most \(\WindowLength\) steps. Since there are finitely many sequences of edge payoffs of length at most \(\WindowLength\), the supremum is achieved for the \(\ObjectiveFWMPL\) objective for all \kl{plays} \(\Play\), and thus can be replaced by max.
The supremum for \(\ObjectiveBWMP\)-value may not be reached in general, as Player 2 can keep the window open for increasing window lengths where the supremum is approached asymptotically.
    \lipicsEnd
\end{remark}

\subparagraph{Decision problems.}
Given a \kl{stochastic game} \(\Game\), a vertex \(v\), and a threshold \(\Threshold \in \Rationals\), we have the following expectation problems for the window mean-payoff objectives:
\begin{itemize}
    \item \emph{expected \(\ObjectiveFWMPL\)-value problem}: 
    Given a window length \(\WindowLength \ge 1\), 
    is \(\ExpValueFWMP{v} \ge \Threshold\)?
    \item \emph{expected \(\ObjectiveBWMP\)-value problem}: 
    Is \(\ExpValueBWMP{v} \ge \Threshold\)?
\end{itemize}
As considered in previous works~\cite{CDRR15,BGR19,BDOR20}, the window length $\WindowLength$ is usually small ($\WindowLength \leq \abs{V}$), and hence we assume that $\WindowLength$ is given in unary (while the edge payoffs are given in binary).

\subsection{Expected fixed window mean-payoff value}%
\label{sec:expected-fixed-window-mean-payoff-value}
We give tight complexity bounds for the expected \(\ObjectiveFWMPL\)-value problem.
We use the characterization from \Cref{thm:six-conditions} to present our main result that this problem is in \(\UP \intersection \coUP\)~(\Cref{thm:fwmp-summary}).
First, we give some preliminary complexity lower and upper bounds in \Cref{lem:ssg-hardness-np-conp} and \Cref{lem:NEXPcoNEXP} respectively.

\subparagraph{Reduction to simple stochastic games.}
We show in \Cref{lem:ssg-hardness-np-conp} that \kl{simple stochastic games}~\cite{Con92},
which are in \(\UP \intersection \coUP\)~\cite{CF11}, reduce to the expected \(\ObjectiveFWMPL\)-value problem, giving a tight lower bound.
\begin{lemma}
\label{lem:ssg-hardness-np-conp}
    The expected \(\ObjectiveFWMPL\)-value problem is at least as hard as \kl{simple stochastic games}.
\end{lemma}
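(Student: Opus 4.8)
The plan is to reduce the value problem for \kl{simple stochastic games} to the expected \(\ObjectiveFWMPL\)-value problem by a payoff-relabelling that turns ``reaching the target'' into a window mean-payoff value of \(1\) and ``never reaching it'' into a value of \(0\). Recall that a \kl{simple stochastic game} is a \kl{stochastic game} \(\Game\) together with a target set \(T\), and its value at a vertex \(v\) is \(\sup_{\Strategy[\Main]} \inf_{\Strategy[\Adversary]} \Pr{\Strategy[\Main], \Strategy[\Adversary]}{\Game, v}{\kl{\ReachObj}(T)}\); the associated decision problem asks whether this value is at least a given rational \(p\). Since visiting \(T\) is a stopping event, I would first assume without loss of generality that every vertex of \(T\) is absorbing (replace the out-edges of each \(t \in T\) by a single self-loop): this changes neither the set of \kl{plays} that visit \(T\) nor the value of any vertex, because the event \(\kl{\ReachObj}(T)\) depends only on the prefix up to the first visit to \(T\), and the strategies are unconstrained after that point.

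Next I would build the instance of the expected \(\ObjectiveFWMPL\)-value problem. Keep the graph and the vertex partition \((\VerticesMain, \VerticesAdversary, \VerticesRandom)\) of \(\Game\) unchanged, set the payoff of every self-loop at a target vertex \(t \in T\) to \(1\), set the payoff of every other edge to \(0\), pick any window length \(\WindowLength \ge 1\), let the queried vertex be \(v\), and let the threshold be \(\nu = p\). Call the resulting \kl{stochastic game} \(\Game'\). This construction is clearly computable in polynomial time.

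The key step is to compute the \(\ObjectiveFWMPL\)-value of an arbitrary \kl{play} \(\Play\) in \(\Game'\). Because \(\FWMP(\WindowLength, \Threshold)\) is \kl{prefix-independent}, this value depends only on the suffix of \(\Play\). If \(\Play\) reaches \(T\), then, \(T\) being absorbing, its suffix is the self-loop at some \(t \in T\) of payoff \(1\), so every \(\WindowLength\)-window closes in one step with mean payoff \(1\) and \(\ObjectiveFWMPL(\Play) = 1\); if \(\Play\) never reaches \(T\), its suffix uses only payoff-\(0\) edges, so every window closes in one step with mean payoff \(0\) and \(\ObjectiveFWMPL(\Play) = 0\). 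In particular the value is independent of the chosen \(\WindowLength\). Consequently, for every strategy profile \(\StrategyProfile = (\Strategy[\Main], \Strategy[\Adversary])\) we have \(\ExpOutcome{v}{\ObjectiveFWMPL}{\StrategyProfile} = \Pr{\Strategy[\Main], \Strategy[\Adversary]}{\Game', v}{\kl{\ReachObj}(T)}\). Taking the supremum over \(\Strategy[\Main]\) and the infimum over \(\Strategy[\Adversary]\), determinacy of the bounded Borel objective \(\ObjectiveFWMPL\)~\cite{Mar98} lets me identify \(\ExpValueFWMP{v}\) with the (determined) value of \(v\) in the original \kl{simple stochastic game}. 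Hence \(\ExpValueFWMP{v} \ge \nu\) if and only if that value is at least \(p\), which is exactly the reduction sought.

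I expect the main obstacle to be the treatment of \kl{plays} that never reach \(T\): one must ensure they contribute precisely the ``losing'' value \(0\) and do not inadvertently inflate the expectation above the reachability probability. This is exactly what assigning payoff \(0\) to all non-target edges guarantees, and it is also where \kl{prefix-independence} of \(\ObjectiveFWMPL\) is essential — the arbitrary transient behaviour before the \kl{play} settles either into an absorbing target loop or into a payoff-\(0\) region is irrelevant to the value, so the expected \(\ObjectiveFWMPL\)-value collapses to the optimal probability of reaching \(T\).
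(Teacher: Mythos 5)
Your reduction is correct and is essentially identical to the paper's own proof: make the target absorbing, give its self-loop payoff \(1\) and every other edge payoff \(0\), and observe that the expected \(\ObjectiveFWMPL\)-value then coincides with the optimal reachability probability. The only (immaterial) differences are that you spell out the play-level computation of the \(\ObjectiveFWMPL\)-value and work with a general threshold \(p\) rather than the fixed threshold \(\tfrac{1}{2}\) used in the paper.
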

\begin{proof}
The reduction goes as follows. 
Recall that in a \kl{simple stochastic game}, \(\PlayerMain\) wins from a vertex \(v\) if and only if she has a strategy that ensures that starting from \(v\), the probability that the token eventually reaches the target vertex \(v_{target}\) is greater than \(\frac{1}{2}\).
We assume without loss of generality that \(v_{target}\) is absorbing, that is, the only out-neighbour of \(v_{target}\) is \(v_{target}\) itself.

Given a \kl{simple stochastic game} \(\Game_{SSG}\), we construct a new \kl{stochastic game} \(\Game\) such that \(\PlayerMain\) reaches \(v_{target}\) in \(\Game_{SSG}\) from \(v\) with probability greater than \(\frac{1}{2}\) if and only if the expected \(\ObjectiveFWMPL\)-value of \(v\) in \(\Game\) is greater than \(\frac{1}{2}\).  

The set of vertices and edges in \(\Game\) are the same as in \(\Game_{SSG}\).
We let the edge payoff of the self-loop of \(v_{target}\) be \(1\) in \(\Game\), and let the edge payoff of every other edge in \(\Game\) be \(0\). 
Thus, the probability of reaching \(v_{target}\) from \(v\) in \(\Game_{SSG}\) is equal to the expected \(\ObjectiveFWMPL\)-value of \(v\) in \(\Game\).
Hence, the expected \(\ObjectiveFWMPL\)-value problem is at least as hard as \kl{simple stochastic games}, which are known to be in \(\UP \intersection \coUP\).
\end{proof}
We note that the proof above is independent of the exact value of the window length \(\WindowLength\).

\subparagraph{\(\NEXP \intersection \coNEXP\) upper bound.} 
The expected \(\ObjectiveFWMPL\)-value problem can be reduced to the expected \(\liminfObj\)-value problem~\cite{CH09} on an exponentially larger game.
This gives an \(\NEXP \intersection \coNEXP\) algorithm for expected \(\ObjectiveFWMPL\) since the expected \(\liminfObj\)-value problem for \kl{stochastic games} is in \(\NP \intersection \coNP\)~\cite{CH09}. 
\begin{lemma}
\label{lem:NEXPcoNEXP}
    The expected \(\ObjectiveFWMPL\)-value problem is in \(\NEXP \intersection \coNEXP\). 
\end{lemma}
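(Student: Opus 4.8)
The plan is to reduce the expected $\ObjectiveFWMPL$-value problem to the expected $\liminfObj$-value problem on a stochastic game $\Game'$ of size exponential in the size of $\Game$, and then invoke the result of~\cite{CH09} that the expected $\liminfObj$-value problem for \kl{stochastic games} is in $\NP \intersection \coNP$. The starting point is a pathwise characterization of the $\ObjectiveFWMPL$-value as a $\liminf$. For a \kl{play} $\Play = \Vertex[0] \Vertex[1] \cdots$, define for each position $i$ the quantity $g(i) = \max_{1 \le j \le \WindowLength} \MP(\PlayInfix{i}{i+j})$, the best mean payoff over all windows of length at most $\WindowLength$ starting at $i$. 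Unfolding the definition of $\FWMP_{\Game}(\WindowLength, \Threshold)$, a \kl{play} $\Play$ lies in $\FWMP_{\Game}(\WindowLength, \Threshold)$ if and only if $g(i) \ge \Threshold$ for all sufficiently large $i$; hence $\ObjectiveFWMPL(\Play) = \liminf_{i \to \infty} g(i)$, where the supremum defining $\ObjectiveFWMPL(\Play)$ is attained since only finitely many window payoff patterns occur.

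The reduction then makes $g$ observable as an edge payoff in a product game. I would take $\Game'$ to have vertices of the form $(\Vertex, w)$, where $\Vertex \in \Vertices$ is the current vertex and $w$ records the sequence of the last $\WindowLength$ edge payoffs seen along the \kl{play} (equivalently, the relevant partial sums). The graph structure, the partition into $\PlayerMain$, $\PlayerAdversary$, and probabilistic vertices, and the probability function are all inherited from $\Game$ by projecting onto the first component, so $\Game'$ is a faithful deterministic-memory expansion of $\Game$ in which each further edge deterministically updates $w$. A product vertex whose memory $w = p_1 p_2 \cdots p_{\WindowLength}$ is full corresponds to a position $i$ lying $\WindowLength$ steps in the past, and from the stored payoffs one computes $g(i) = \max_{1 \le j \le \WindowLength} \tfrac{1}{j}(p_1 + \cdots + p_j)$. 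I would set the payoff of the outgoing edge of such a vertex in $\Game'$ to this value $g(i)$ (and to $0$ during the initial $\WindowLength$-step warm-up, which does not affect the $\liminf$). By construction, the $\liminfObj$-value of any \kl{play} of $\Game'$ equals the $\ObjectiveFWMPL$-value of its projection onto $\Game$.

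Because $\Game'$ is obtained by a deterministic memory update that leaves the probabilistic branching untouched, strategies and the induced probability measures in $\Game'$ correspond bijectively to those in $\Game$, and this correspondence preserves expected values; consequently $\ExpValueFWMP{v}$ in $\Game$ equals the expected $\liminfObj$-value of the initial product vertex in $\Game'$, so the threshold decision problem transfers. Since the memory $w$ ranges over sequences of $\WindowLength$ edge payoffs and payoffs are encoded in binary while $\WindowLength$ is given in unary, the number of such sequences, and hence $\abs{\Game'}$, is exponential in the size of the input. Running the $\NP \intersection \coNP$ procedure of~\cite{CH09} on $\Game'$ therefore yields an $\NEXP \intersection \coNEXP$ procedure for the expected $\ObjectiveFWMPL$-value problem. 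The main obstacle I anticipate is making the two equivalences precise: one must verify both that the edge payoffs of $\Game'$ capture $\liminf_i g(i)$ along every \kl{play} (accounting correctly for the warm-up and the $\WindowLength$-step offset) and that the product preserves expectations so that thresholds transfer, while simultaneously confirming that tracking the last $\WindowLength$ payoffs keeps the memory, and thus $\abs{\Game'}$, within an exponential bound.
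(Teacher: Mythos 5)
Your proposal is correct and follows essentially the same route as the paper: both reduce to the expected \(\liminfObj\)-value problem on an exponentially larger product game that records a sliding window of history (the paper stores the last \(\WindowLength+1\) vertices, you store the current vertex plus the last \(\WindowLength\) edge payoffs, which is an equivalent and equally valid bookkeeping), assign each product vertex the best mean payoff achievable by a window opened \(\WindowLength\) steps in the past, and then invoke the \(\NP \intersection \coNP\) algorithm of~\cite{CH09}. The pathwise identity \(\ObjectiveFWMPL(\Play) = \liminf_i g(i)\), the warm-up handling, and the size bound (exponential because \(\WindowLength\) is unary and only the at most \(\abs{\Edges}\) payoff values occurring in \(\Game\) can appear in the memory) all match the paper's argument.
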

\begin{proof}
Starting with a \kl{stochastic game} \(\Game = ((\Vertices, \Edges), (\VerticesMain, \VerticesAdversary, \VerticesRandom), \ProbabilityFunction, \PayoffFunction)\), we construct an exponentially larger \kl{stochastic game} \(\Game' = ((\Vertices', \Edges'), (\VerticesMain', \VerticesAdversary', \VerticesRandom'), \ProbabilityFunction', \PayoffFunction')\) such that the expected \(\ObjectiveFWMPL\)-value of the initial vertex in \(\Game\) is at least \(\Threshold\) if and only if the expected \(\liminfObj\)-value of the initial vertex in the constructed game \(\Game'\) is at least \(\Threshold\). 

Intuitively, each vertex in \(\Game'\) is a history of the last \(\WindowLength\) vertices seen in \(\Game\), and thus, the game \(\Game'\) has exponentially many vertices than \(\Game\). 
The value of a \kl{play} in \(\Game\) depends on the payoffs seen in a sliding window of length \(\WindowLength\).
Since each vertex in \(\Game'\) stores the \(\WindowLength\)-length history, the value of a \kl{play} in \(\Game'\) can be described simply in terms of a \(\liminfObj\) objective.

We construct \(\Game'\) as follows. 
The set \(\Vertices'\) of vertices in \(\Game'\) is equal to \(\Vertices^{\WindowLength + 1}\), and thus, there are \(\abs{\Vertices}^{\WindowLength + 1}\) vertices in \(\Game'\). 
We have a label of length \(\WindowLength + 1\) vertices for each vertex in \(\Vertices'\). 
For all \(1 \le i \le \WindowLength + 1\) and \(v' \in \Vertices'\), let \(\Label{i}{v'}\) denote the \(i^{\text{th}}\) coordinate of \(v'\). 
That is, \(v' = (\Label{1}{v'}, \Label{2}{v'}, \ldots, \Label{\WindowLength + 1}{v'})\). 
Each vertex \(v' \in \Vertices'\) belongs to \(\VerticesMain'\), \(\VerticesAdversary'\), or \(\VerticesRandom'\) depending on whether the last coordinate of \(v'\) (i.e., \(\Label{\WindowLength+1}{v'}\)) belongs to \(\VerticesMain\), \(\VerticesAdversary\), or \(\VerticesRandom\). 
Formally, we have that \(v' \in \VerticesMain'\) if \(\Label{\WindowLength + 1}{v'} \in \VerticesMain\),
and \(v' \in \VerticesAdversary'\) if \(\Label{\WindowLength + 1}{v'} \in \VerticesAdversary\),
and \(v' \in \VerticesRandom'\) if \(\Label{\WindowLength + 1}{v'} \in \VerticesRandom\).

Next, we describe the edges in \(\Game'\). 
If \(v'\) is a vertex in \(\Game'\), then for all \(u \in \Vertices\) such that \(u\) is an out-neighbour of \(\Label{\WindowLength+1}{v'}\) in \(\Game\), we have that \((\Label{2}{v'}, \Label{3}{v'}, \ldots, \Label{\WindowLength + 1}{v'}, u)\) is an out-neighbour of \(v'\) in \(\Game'\). 
Formally, \(\Edges'(v') = \{ (\Label{2}{v'}, \Label{3}{v'}, \ldots, \Label{\WindowLength +1}{v'}, u) \suchthat u \in \OutNeighbours{\Label{\WindowLength + 1}{v'}}\} \).
Intuitively, we pop out the vertex in the first coordinate and  push in a new vertex in the last coordinate. 
Note that the degree of vertex \(v'\) in \(\Game'\) is equal to the degree of \(\Label{\WindowLength + 1}{v'}\) in \(\Game\). 

Now, we define the probability distribution \(\ProbabilityFunction'(v')\) over probabilistic vertices in \(\Game'\). 
If \(v'\) is a probabilistic vertex, i.e., if \(v' \in \VerticesRandom'\), then for all out-neighbours \(u' \in \Edges'(v')\) of \(v'\), we have that \(\ProbabilityFunction'(v')(u') = \ProbabilityFunction(\Label{\WindowLength+1}{v'})( \Label{\WindowLength + 1}{u'})\). 

Finally, we define the payoffs \(\PayoffFunction'(e')\) of edges \(e'\) in \(\Game'\).
All edges coming out of the same vertex in \(\Game'\) are given the same payoff. 
That is, the edge payoffs of \((v', u_1')\) and \((v', u_2')\) are equal.
The edge payoff of edges out of \(v'\) is determined by the label \(\Label{1}{v'} \Label{2}{v'} \cdots \Label{\WindowLength + 1}{v'}\) of \(v'\).
The payoff depends on whether this label is a sequence of edges in \(\Game\), that is, if \((L_{i}(v'), L_{i+1}(v')) \in E\) for all \(1 \le i \le \WindowLength\). 
For all \(v' \in \Vertices'\), 
if the label of \(v'\) is a sequence of edges in \(\Game\), then for all edges \(e'\) out of \(v'\), i.e., for all \(e' \in \Edges'(v')\), the payoff \(\PayoffFunction'(e')\) of \(e'\) is equal to the maximum \(\Threshold\) such that the \(\Threshold\)-window starting at \(\Label{1}{v'}\) is closed at or before the end of the label. 
Otherwise, if it is not a sequence of edges in \(\Game\), then for all edges \(e'\) out of \(v'\), we define the payoff of \(\PayoffFunction'(e')\) to be \(0\).

If the initial vertex in \(\Game\) is \(v \in \Vertices\), then let the initial vertex in \(\Game'\) be \((v, v, \ldots, v)\), where the tuple has length \(\WindowLength + 1\) as stated above.
Observe that there is a one-to-one correspondence between \kl{plays} in \(\Game\) and \kl{plays} in \(\Game'\).
Starting with a \kl{play} \(\Play\) in \(\Game\), we show how to obtain the corresponding \kl{play} \(\Play'\) in \(\Game'\).
We start with \(v\) in \(\Play\) and \((v, v, \ldots, v)\) in \(\Play'\).
Each time we read a vertex, say \(u\), in \(\Play\), the next vertex in \(\Play'\) can be obtained by considering the label of the last vertex in \(\Play'\) until before \(u\) is read in \(\Play\), popping its first coordinate and pushing the new vertex \(u\) in its last coordinate.
As a consequence, after the first \(\WindowLength\) steps in \(\Game'\), every vertex visited in \(\Game'\) has a label that is a sequence of the last \(\WindowLength\) edges in \(\Game\). 
Conversely, given a \kl{play} \(\Play'\) in \(\Game'\), one can project \(\Play'\) to the last component to obtain the corresponding \kl{play} \(\Play\) in \(\Game\). 

If \(\Threshold\) is such that, eventually, every \(\Threshold\)-window in \(\Play\) closes in at most \(\WindowLength\) steps, then there are infinitely many edges in \(\Play'\) with payoff at least \(\Threshold\), in which case the \(\liminfObj\)-value of \(\Play'\) is at least \(\Threshold\).
Conversely, if \(\Threshold\) is such that there are infinitely many open \(\Threshold\)-windows of length \(\WindowLength\) in \(\Play\), then there are infinitely many edges in \(\Play'\) with payoff less than \(\Threshold\), and the \(\liminfObj\)-value of \(\Play'\) is less than \(\Threshold\). 
The correspondence between \kl{plays} in the two games, gives a correspondence between strategies between the games. 
The probability functions \(\ProbabilityFunction\) and \(\ProbabilityFunction'\) of \(\Game\) and \(\Game'\) respectively are defined such that once we fix strategies of the players in \(\Game\), and the corresponding strategies in \(\Game'\), then the probability distribution of sets of \kl{plays} in \(\Game\) is equal to the probability distribution of the corresponding sets of \kl{plays} in \(\Game'\).
Thus, the expected \(\ObjectiveFWMPL\)-value of a vertex \(v\) in \(\Game\) is equal to the expected \(\liminfObj\)-value of \((v,v, \ldots, v)\) in \(\Game'\). 

Since the expected \(\liminfObj\)-value problem is in \(\NP \intersection \coNP\)~\cite{CH09} and the size of \(\Game'\) is exponential in the size of \(\Game\), we have that the expected \(\ObjectiveFWMPL\)-value problem is in \(\NEXP \intersection \coNEXP\).
\end{proof}

In order to use the characterization from \Cref{thm:six-conditions}, we show the existence of the bound \(\DenBoundNoBoundaryVertex{\FWMPL}\) for the \(\ObjectiveFWMPL\) objective.
We show in \Cref{lem:fwmpl-values-without-boundary-vertices} that the expected \(\ObjectiveFWMPL\)-value \(\ClassValFWMP{i}\) of a class \(\FWMPClass{i}\) without \kl{boundary vertices} takes a special form, that is, \(\ClassValFWMP{i}\) is the mean payoff of a sequence of at most \(\WindowLength\) edges in \(\FWMPClass{i}\).
We use the fact that the \(\ObjectiveFWMPL\)-value of every \kl{play} \(\Play\) is the largest \(\Threshold\) such that, eventually, every \(\Threshold\)-window in \(\Play\) closes in at most \(\WindowLength\) steps, and that \(\Threshold\) is the mean payoff of a sequence of at most \(\WindowLength\) edges in \(\Play\).
To complete the argument, we show that if both players play optimally, then, with probability~\(1\), the \(\ObjectiveFWMPL\)-value of the outcome \(\Play\) is equal to \(\ClassValFWMP{i}\) and thus, \(\ClassValFWMP{i}\) is also of this form.
\begin{restatable}{lemma}{FWMPLValuesWithoutBoundaryVertices}%
\label{lem:fwmpl-values-without-boundary-vertices}
    The expected \(\ObjectiveFWMPL\)-value \(\ClassValFWMP{i}\) of vertices in a class \(\FWMPClass{i}\) without \kl{boundary vertices} is equal to the mean payoff of some sequence of \(\WindowLength\) or fewer edges in \(\FWMPClass{i}\). 
    That is, \(\ClassValFWMP{i}\) is of the form \(\frac{1}{j} \left(\PayoffFunction(e_1) + \cdots + \PayoffFunction(e_j) \right)\) for some \(j \le \WindowLength\) and edges \(e_1, e_2, \ldots, e_j\). 
\end{restatable}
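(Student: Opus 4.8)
The plan is to reduce the statement to a concentration fact about a single closed subgame. First I would fix the class $\FWMPClass{i}$ and recall from the discussion preceding \Cref{thm:denominator-bound} that, since $\FWMPClass{i}$ has no \kl{boundary vertices} and $\ValVectorOpt$ satisfies \ConditionBellman, once the token enters $\FWMPClass{i}$ it never leaves: neither player gains by exiting, and no \kl{probabilistic vertex} can exit. Hence, by \Cref{prop:bellman-is-stochastic-game}, $\Game_{\FWMPClass{i}}$ is a \kl{stochastic game} that is \emph{closed} (every \kl{play} stays inside), and every vertex of it has expected $\ObjectiveFWMPL$-value $c := \ClassValFWMP{i}$.

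Next I would pin down the finite set of values a single \kl{play} can take. For a \kl{play} $\Play$ confined to $\FWMPClass{i}$ and each position $t \ge 0$, the largest $\Threshold$ for which the $\Threshold$-window at $t$ closes within $\WindowLength$ steps is $W_t := \max_{j \in \PositiveSet{\WindowLength}} \MP(\PlayInfix{t}{t+j})$, the mean payoff of a sequence of at most $\WindowLength$ edges. By the definition of $\FWMP$, one has $\Play \in \FWMP(\WindowLength, \Threshold)$ iff $W_t \ge \Threshold$ eventually, so $\ObjectiveFWMPL(\Play) = \liminf_{t \to \infty} W_t$. Since there are only finitely many sequences of at most $\WindowLength$ edges in $\FWMPClass{i}$, the sequence $(W_t)_t$ takes values in a finite set $S$, so its $\liminf$ is attained and lies in $S$. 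Thus every \kl{play} in $\FWMPClass{i}$ has $\ObjectiveFWMPL$-value in $S$, i.e.\ equal to the mean payoff of some sequence of at most $\WindowLength$ edges of $\FWMPClass{i}$; it remains to prove the single claim $c \in S$.

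The core step is to establish $c \in S$, for which I would argue by contradiction using a finite gap of $S$ around $c$ together with the zero-one law for \kl{prefix-independent} objectives. Suppose $c \notin S$, and set $a = \max\{s \in S : s < c\}$ and $b = \min\{s \in S : s > c\}$ (both exist because an $S$-valued random variable of expectation $c \notin S$ must take values on both sides of $c$). Consider the \kl{prefix-independent} \kl{Boolean objective} $\Phi = \{\Play : \ObjectiveFWMPL(\Play) < b\} = \overline{\FWMP(\WindowLength, b)}$; since $S \cap (a,b) = \emptyset$, satisfying $\Phi$ forces $\ObjectiveFWMPL \le a$. If $\PlayerAdversary$ has a positive winning strategy for $\Phi$ from some vertex of $\Game_{\FWMPClass{i}}$, then by \cite[Theorem 1]{Cha07} he has an almost-sure winning strategy from some vertex $v'$; as plays cannot leave $\FWMPClass{i}$, this same strategy realizes $\ObjectiveFWMPL \le a < c$ almost surely in $\Game$, contradicting that the value of $v'$ is $c$. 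Otherwise $\PlayerAdversary$ has no positive winning strategy for $\Phi$ from any vertex, so by qualitative determinacy (positive winning for $\PlayerAdversary$ is the negation of almost-sure winning for $\PlayerMain$) $\PlayerMain$ almost-surely wins $\overline{\Phi} = \FWMP(\WindowLength, b)$ from every vertex $v'$, giving $\ObjectiveFWMPL \ge b > c$ almost surely and hence value $\ge b > c$, again a contradiction. Therefore $c \in S$, which is exactly the claim.

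I expect the concentration step to be the main obstacle: a priori the expected value $c$ is only a convex combination of the finitely many values in $S$ and need not itself lie in $S$, so the whole difficulty is to upgrade this to an almost-sure equality. The two-case zero-one-law argument above accomplishes this by exploiting the \emph{discreteness} of $S$ (the gap $(a,b)$) rather than the granularity $\GranularityBound{\Objective}$; using $\GranularityBound{\Objective}$ here would be circular, since $\DenBoundNoBoundaryVertex{\FWMPL}$ — and hence $\GranularityBound{\Objective}$ — is precisely what this lemma is invoked to supply. The only routine point to discharge carefully is the transfer between $\Game_{\FWMPClass{i}}$ and $\Game$, namely that an almost-sure winning strategy in the closed subgame realizes the same bound in $\Game$, which is immediate from closedness of $\FWMPClass{i}$.
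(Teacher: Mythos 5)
Your proof is correct and follows essentially the same route as the paper: both arguments rest on the observation that every play confined to \(\FWMPClass{i}\) has \(\ObjectiveFWMPL\)-value in the finite set \(S\) of mean payoffs of at most \(\WindowLength\) edges of the class, and both invoke the zero-one law for prefix-independent objectives from~\cite[Theorem~1]{Cha07} to force the class value \(\ClassValFWMP{i}\) into that set. The only difference is the packaging of the concentration step: the paper fixes an optimal strategy profile and argues directly that the outcome's \(\ObjectiveFWMPL\)-value is almost surely equal to \(\ClassValFWMP{i}\), whereas you isolate the gap \((a,b)\) of \(S\) around \(c\) and run a qualitative-determinacy case split on the threshold objective \(\FWMP(\WindowLength,b)\) --- a slightly more explicit (and arguably more watertight) rendering of the same idea.
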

\begin{proof}
    Let \(\Strategy[\Main]^*\) and \(\Strategy[\Adversary]^*\) be optimal strategies of \(\PlayerMain\) and \(\PlayerAdversary\) respectively for the expected \(\ObjectiveFWMPL\) objective in \(\FWMPClass{i}\) from an initial vertex \(v_0\) in \(\FWMPClass{i}\). 
    By definition, we have that the expected \(\ObjectiveFWMPL\)-value of the outcome of this strategy profile is equal to \(\ClassValFWMP{i}\). 
    We can in fact show a stronger statement: starting from \(v_0\), with probability~\(1\), the \(\ObjectiveFWMPL\)-value of the outcome of this strategy profile is equal to \(\ClassValFWMP{i}\). 
    If not, then with positive probability, the \(\ObjectiveFWMPL\)-value of the outcome is greater than \(\ClassValFWMP{i}\) and with positive probability, the \(\ObjectiveFWMPL\)-value of the outcome is less than \(\ClassValFWMP{i}\). 
    Since \(\ObjectiveFWMPL\) is a \kl{prefix-independent} objective, it follows that there exists another vertex \(v'\) from which \(\PlayerMain\) can ensure with probability~\(1\) that the \(\ObjectiveFWMPL\)-value of the outcome is greater than \(\ClassValFWMP{i}\). 
    It follows that the expected \(\ObjectiveFWMPL\)-value of \(v'\) is greater than \(\ClassValFWMP{i}\), which contradicts our hypothesis that the expected \(\ObjectiveFWMPL\)-value of each vertex in \(\FWMPClass{i}\) is exactly \(\ClassValFWMP{i}\).

    Now, we show that \(\ClassValFWMP{i}\) takes the form mentioned in the statement of \Cref{lem:fwmpl-values-without-boundary-vertices}. 
    Since \(\ObjectiveFWMPL\) is a \kl{prefix-independent} objective, the \(\ObjectiveFWMPL\)-value of the outcome \(\Play\) only depends only on \(\inf(\Play)\), the set of vertices visited infinitely often in \(\Play\). 
    For every probabilistic vertex \(v\) in \(\inf(\Play)\), each out-edge of \(v\) is also chosen infinitely often in \(\Play\). 
    Thus, the \(\ObjectiveFWMPL\)-value of the outcome is equal to supremum \(\Threshold\) such that every \(\Threshold\)-window closes in at most \(\WindowLength\) steps.
    This is the mean payoff of a sequence of at most \(\WindowLength\) edges that appears infinitely often in \(\Play\). 
    This gives that \(\ClassValFWMP{i}\) is equal to the mean payoff of some \(j\) edges in \(\FWMPClass{i}\) for some \(1 \le j \le \WindowLength\). 
\end{proof}

This observation gives us the bound \(\DenBoundNoBoundaryVertex{\FWMPL}\) on the denominators of the values of \(\ValVectorFWMP\)-classes without \kl{boundary vertices}.
To see this, let \(\PayoffDenominator = \max \{q \suchthat \exists p, q \in \Integers, \, \exists \Edge \in \Edges :  \PayoffFunction(e) = \frac{p}{q} \text{ with } p, q \text{ co-prime}\}\) be the maximum denominator over all edge payoffs in \(\Game\).
Since \(j \le \WindowLength\), and each \(\PayoffFunction(e_1), \PayoffFunction(e_2), \ldots, \PayoffFunction(e_j)\) is a rational number with denominator at most \(\PayoffDenominator\),  the denominator of the sum \(\PayoffFunction(e_1) + \cdots + \PayoffFunction(e_j)\) is at most \(\PayoffDenominator \cdot (\PayoffDenominator - 1) \cdot (\PayoffDenominator - 2) \cdots (\PayoffDenominator - (\WindowLength-1))\) if \(\PayoffDenominator \ge \WindowLength\), and at most \(\PayoffDenominator!\) if \(\PayoffDenominator \le \WindowLength\).
In both cases, this is at most \(\PayoffDenominator^{\WindowLength}\).

\begin{corollary}%
\label{cor:fwmpl-denominator-without-boundary-vertices}
    The expected \(\ObjectiveFWMPL\)-value of vertices in \(\ValVectorFWMP\)-classes without \kl{boundary vertices} can be written as \(\frac{p}{q}\) where \(p\) and \(q\) are integers and \(q \le \PayoffDenominator^{\WindowLength} \cdot \WindowLength\).
\end{corollary}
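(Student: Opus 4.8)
The plan is to obtain the corollary as a short arithmetic consequence of the structural characterization already established in \Cref{lem:fwmpl-values-without-boundary-vertices}. First I would invoke that lemma to write the value of a class $\FWMPClass{i}$ without \kl{boundary vertices} as $\ClassValFWMP{i} = \frac{1}{j}\left(\PayoffFunction(e_1) + \cdots + \PayoffFunction(e_j)\right)$ for some $j \le \WindowLength$ and edges $e_1, \ldots, e_j$ in $\FWMPClass{i}$. This reduces the task entirely to bounding the denominator of a rational number that is the average of at most $\WindowLength$ edge payoffs, so no further game-theoretic reasoning is needed beyond the lemma.

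Next I would bound the denominator of the sum $S = \PayoffFunction(e_1) + \cdots + \PayoffFunction(e_j)$. Each payoff is a rational with denominator at most $\PayoffDenominator$, so when $S$ is written in reduced form its denominator divides the least common multiple of the $j$ individual denominators. Since there are at most $\WindowLength$ terms and each denominator lies in $\{1, \ldots, \PayoffDenominator\}$, this lcm is at most the product of the $\min(\WindowLength, \PayoffDenominator)$ largest distinct integers not exceeding $\PayoffDenominator$. Splitting into the two regimes $\PayoffDenominator \ge \WindowLength$ and $\PayoffDenominator \le \WindowLength$ yields the bounds $\PayoffDenominator(\PayoffDenominator - 1)\cdots(\PayoffDenominator - \WindowLength + 1)$ and $\PayoffDenominator!$ respectively, both of which are at most $\PayoffDenominator^{\WindowLength}$.

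Finally, since $\ClassValFWMP{i} = S / j$ with $1 \le j \le \WindowLength$, dividing by $j$ multiplies the denominator by a further factor of at most $\WindowLength$. Combining this with the bound on the denominator of $S$ gives that $\ClassValFWMP{i}$, written as $\frac{p}{q}$ in reduced form, satisfies $q \le \PayoffDenominator^{\WindowLength} \cdot \WindowLength$, which is exactly the claimed bound (and identifies $\DenBoundNoBoundaryVertex{\FWMPL}$ with $\PayoffDenominator^{\WindowLength} \cdot \WindowLength$).

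The main obstacle, though a mild one, is the lcm estimate: one must argue that the denominator of a sum of fractions divides the \emph{lcm} of the summand denominators rather than their product, and then bound that lcm uniformly by $\PayoffDenominator^{\WindowLength}$ across both cases of $\PayoffDenominator$ relative to $\WindowLength$. Everything else follows directly from \Cref{lem:fwmpl-values-without-boundary-vertices} and the definition of $\PayoffDenominator$.
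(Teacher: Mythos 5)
Your proposal is correct and follows essentially the same route as the paper: invoke \Cref{lem:fwmpl-values-without-boundary-vertices} to write \(\ClassValFWMP{i}\) as the mean payoff of at most \(\WindowLength\) edges, bound the denominator of the sum via the lcm of the summand denominators (split into the cases \(\PayoffDenominator \ge \WindowLength\) and \(\PayoffDenominator \le \WindowLength\), both bounded by \(\PayoffDenominator^{\WindowLength}\)), and pick up an extra factor of \(\WindowLength\) from dividing by \(j\). Your explicit remark that the denominator of a sum divides the lcm rather than the product is a slightly more careful phrasing of the step the paper states directly, but the argument is the same.
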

From \Cref{thm:denominator-bound}, we get that the denominator of \(\ClassValFWMP{i}\) for each class \(\FWMPClass{i}\) in \(\Game\) is at most \(2^{\abs{V}} \cdot \ProbabilityDenominator^{\abs{V}^3} \cdot (\DenBoundNoBoundaryVertex{\FWMPL})^{\abs{\Vertices}}\), which is at most \(2^{\abs{V}} \cdot \ProbabilityDenominator^{\abs{V}^3} \cdot (\PayoffDenominator^{\WindowLength} \cdot \WindowLength)^{\abs{\Vertices}}\).
\begin{restatable}{lemma}{FWMPLValueDenominator}
    \label{lem:fwmpl-value-denominator}
    The expected \(\ObjectiveFWMPL\)-value of each vertex in \(\Game\) can be written as a fraction \(\frac{p}{q}\), where \(p, q\) are integers, and \(q \le 2^{\abs{V}} \cdot \ProbabilityDenominator^{\abs{V}^3} \cdot (\PayoffDenominator^{\WindowLength} \cdot \WindowLength)^{\abs{V}}\), and \(-\ObjectiveBound{} \cdot q \le p \le \ObjectiveBound{} \cdot q\). 
\end{restatable}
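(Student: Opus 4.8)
The plan is to assemble this bound directly from the machinery already developed, since all the genuine work resides in \Cref{thm:denominator-bound} together with its supporting propositions. First I would observe that, by the very definition of the \(\ValVectorFWMP\)-classes, the expected \(\ObjectiveFWMPL\)-value \(\ExpValueFWMP{v}\) of every vertex \(v\) in \(\Game\) coincides with the value \(\ClassValFWMP{i}\) of the class \(\FWMPClass{i}\) containing \(v\). Hence it suffices to bound the numerator and denominator of each class value \(\ClassValFWMP{i}\), and the vertex-level statement follows immediately.

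For the denominator, I would instantiate the generic bound of \Cref{thm:denominator-bound} with the objective-specific quantity for \(\ObjectiveFWMPL\). \Cref{cor:fwmpl-denominator-without-boundary-vertices} supplies \(\DenBoundNoBoundaryVertex{\FWMPL} = \PayoffDenominator^{\WindowLength} \cdot \WindowLength\) as a valid bound on the denominators of the values of classes \emph{without} \kl{boundary vertices}. Substituting this into \(\DenBound{\Objective} = 2^{\abs{\Vertices}} \cdot \ProbabilityDenominator^{\abs{\Vertices}^3} \cdot (\DenBoundNoBoundaryVertex{\Objective})^{\abs{\Vertices}}\) from \Cref{thm:denominator-bound} yields the claimed bound \(q \le 2^{\abs{\Vertices}} \cdot \ProbabilityDenominator^{\abs{\Vertices}^3} \cdot (\PayoffDenominator^{\WindowLength} \cdot \WindowLength)^{\abs{\Vertices}}\) on the denominator of every class value, and therefore of every vertex value.

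For the numerator, I would invoke that \(\ObjectiveFWMPL\) is a bounded objective. If \(\ObjectiveBound{}\) bounds the absolute value of every edge payoff, then every \(\ObjectiveFWMPL\)-value of a \kl{play}, being a mean payoff over a window of edges, lies in \([-\ObjectiveBound{}, \ObjectiveBound{}]\); consequently every expected value \(\ClassValFWMP{i}\), being an average of such \kl{play}-values, also lies in \([-\ObjectiveBound{}, \ObjectiveBound{}]\). Writing \(\ClassValFWMP{i} = \frac{p}{q}\) with \(q\) bounded as above, the inequality \(\abs{\frac{p}{q}} \le \ObjectiveBound{}\) rearranges to \(-\ObjectiveBound{} \cdot q \le p \le \ObjectiveBound{} \cdot q\), giving the numerator bound.

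Since this lemma is essentially a specialization of \Cref{thm:denominator-bound} to \(\ObjectiveFWMPL\), there is no substantial obstacle to overcome. The only point requiring care is to confirm that \(\DenBoundNoBoundaryVertex{\FWMPL}\) as furnished by \Cref{cor:fwmpl-denominator-without-boundary-vertices} is a legitimate input to \Cref{thm:denominator-bound}, i.e.\ that it uniformly bounds the denominators of \emph{all} boundary-free classes; this is precisely what \Cref{lem:fwmpl-values-without-boundary-vertices} guarantees, since it shows each such value is a mean payoff of at most \(\WindowLength\) edges and hence has denominator at most \(\PayoffDenominator^{\WindowLength} \cdot \WindowLength\).
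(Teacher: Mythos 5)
Your proposal is correct and follows essentially the same route as the paper: the paper likewise obtains the denominator bound by instantiating \Cref{thm:denominator-bound} with the bound \(\DenBoundNoBoundaryVertex{\FWMPL} \le \PayoffDenominator^{\WindowLength} \cdot \WindowLength\) from \Cref{cor:fwmpl-denominator-without-boundary-vertices} (itself a consequence of \Cref{lem:fwmpl-values-without-boundary-vertices}), and the numerator bound follows from the boundedness of \(\ObjectiveFWMPL\) exactly as you argue.
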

We now state the main result of this section for the expected \(\ObjectiveFWMPL\)-value problem.

\begin{theorem}%
\label{thm:fwmp-summary}
    The expected \(\ObjectiveFWMPL\)-value problem is in \(\UP \intersection \coUP\) when \(\WindowLength\) is given in unary.
    Memory of size \(\WindowLength\) suffices for \(\PlayerMain\), while memory of size \(\abs{\Vertices} \cdot \WindowLength\) suffices for \(\PlayerAdversary\).
\end{theorem}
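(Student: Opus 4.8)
The plan is to combine the characterization of \Cref{thm:six-conditions} with the denominator bound of \Cref{lem:fwmpl-value-denominator} into a guess-and-verify procedure, and then read off the memory bounds from \Cref{cor:memory-bound}. First I would observe that because \(\WindowLength\) is given in unary, the bound \(\DenBound{\FWMPL} = 2^{\abs{\Vertices}} \cdot \ProbabilityDenominator^{\abs{\Vertices}^3} \cdot (\PayoffDenominator^{\WindowLength} \cdot \WindowLength)^{\abs{\Vertices}}\) of \Cref{lem:fwmpl-value-denominator} can be written with a number of bits polynomial in the size of \(\Game\). Consequently a candidate vector \(\ValVector\), each of whose entries is a rational \(\frac{p}{q}\) with \(q \le \DenBound{\FWMPL}\) and \(\abs{p} \le \ObjectiveBound{} \cdot q\), has a polynomial-size representation and can be guessed nondeterministically.

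The verifier then checks the three conditions of \Cref{thm:six-conditions}. The value classes, their boundary vertices, and the trap subgames \(\TrapSubgame{\Main}{\ValClass{i}}\) and \(\TrapSubgame{\Adversary}{\ValClass{i}}\) are all computable in polynomial time by attractor computations, and the \(\ConditionBellman\) condition is a direct arithmetic check on the guessed entries. For \(\ConditionLB\) and \(\ConditionUB\) I would reduce each to an almost-sure satisfaction instance for \(\FWMPL\): translating every edge payoff by the relevant threshold turns \(\PlayerMain\) winning \(\ThresholdObjective{> \ClassVal{i} - \GranularityBound{\FWMPL}}\) almost surely in \(\TrapSubgame{\Main}{\ValClass{i}}\) into almost-sure satisfaction of \(\FWMP(\WindowLength, 0)\) by \(\PlayerMain\), and dually turns \(\PlayerAdversary\) winning \(\ThresholdObjective{< \ClassVal{i} + \GranularityBound{\FWMPL}}\) almost surely in \(\TrapSubgame{\Adversary}{\ValClass{i}}\) into almost-sure satisfaction of \(\overline{\FWMP(\WindowLength, 0)}\) by \(\PlayerAdversary\) (equivalently, the complement of \(\PlayerMain\)'s positive winning region). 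Each such instance is decidable in polynomial time~\cite{DGG25LMCS}, and since there are at most \(\abs{\Vertices}\) classes, the whole verification runs in polynomial time.

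Membership in \(\UP \intersection \coUP\) then rests on the uniqueness guaranteed by \Cref{thm:six-conditions}: among all vectors with denominators at most \(\DenBound{\FWMPL}\), only \(\ValVectorOpt\) passes verification. The \(\UP\) machine guesses \(\ValVector\), verifies the three conditions, and accepts iff additionally \(\Val{v} \ge \Threshold\); on a yes-instance there is exactly one accepting run (the unique valid guess, which satisfies \(\Val{v} = \ExpValueFWMP{v} \ge \Threshold\)) and on a no-instance there is none. Running the same machine but accepting iff \(\Val{v} < \Threshold\) places the complement in \(\UP\), giving \(\coUP\); this matches the lower bound of \Cref{lem:ssg-hardness-np-conp} and the \(\UP \intersection \coUP\) bound for simple stochastic games~\cite{CF11, Con92}.

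Finally, the memory bounds follow from \Cref{cor:memory-bound}, which says that \(\Strategy[\Main]^{*}\) (and, symmetrically, the optimal strategy of \(\PlayerAdversary\)) needs no more memory than an almost-sure winning strategy for the corresponding \(\FWMPL\) threshold objective, the reachability component being memoryless. Since almost-sure satisfaction of \(\FWMPL\) admits winning strategies of memory \(\WindowLength\) for \(\PlayerMain\) and \(\abs{\Vertices} \cdot \WindowLength\) for \(\PlayerAdversary\)~\cite{DGG25LMCS, CDRR15}, and since the current value class is read directly off the current vertex so that switching classes costs no extra memory, these bounds transfer to the expectation-optimal strategies. The step I expect to be most delicate is the reduction of the granularity-shifted thresholds \(\ClassVal{i} \mp \GranularityBound{\FWMPL}\) to clean almost-sure \(\FWMPL\) instances: this relies on \(\GranularityBound{\FWMPL}\) being smaller than the gap between consecutive achievable \(\ObjectiveFWMPL\)-values, so that no play-value lies strictly between \(\ClassVal{i} - \GranularityBound{\FWMPL}\) and \(\ClassVal{i}\); checking this compatibility, together with confirming that the unary encoding of \(\WindowLength\) is exactly what keeps \(\DenBound{\FWMPL}\) of polynomial bit-length, are the two points requiring care.
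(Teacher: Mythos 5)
Your proposal is correct and follows essentially the same route as the paper: guess the value vector (polynomial-size by \Cref{lem:fwmpl-value-denominator} since \(\WindowLength\) is unary), verify the three conditions of \Cref{thm:six-conditions} in polynomial time using the almost-sure satisfaction algorithm of~\cite{DGG25LMCS}, invoke uniqueness for \(\UP \intersection \coUP\), and read the memory bounds off \Cref{cor:memory-bound}. The "delicate step" you flag about the shifted thresholds is handled simply because the almost-sure satisfaction problem for \(\FWMP(\WindowLength,\Threshold)\) is polynomial for arbitrary rational \(\Threshold\), so no renormalization to threshold \(0\) is actually needed.
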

\begin{proof}
    To show membership of the expected \(\ObjectiveFWMPL\)-value problem in \(\UP \intersection \coUP\),
    we first guess the expected \(\ObjectiveFWMPL\)-value vector \(\ValVectorFWMP\), that is, the expected \(\ObjectiveFWMPL\)-value \(\ValFWMP{v}\) of every vertex \(v\) in the game.
    From \Cref{lem:fwmpl-value-denominator}, it follows that the number of bits required to write \(\ValFWMP{v}\) for every vertex \(v\) is polynomial in the size of the input.
    Thus, the vector \(\ValVectorFWMP\) can be guessed in polynomial time.

    Then, to verify the guess, it is sufficient to verify the \ConditionBellman, \ConditionLB, and \ConditionUB\ conditions for \(\ObjectiveFWMPL\).
    It is easy to see that the \ConditionBellman\ condition can be checked in polynomial time.
    Checking the \ConditionLB\ and \ConditionUB\ conditions, i.e., checking the almost-sure satisfaction of the \kl{threshold Boolean objective} \(\FWMP(\WindowLength, \Threshold)\) for appropriate thresholds \(\Threshold\) in \kl{trap subgames} in each \(\ValVectorFWMP\)-class can be done in polynomial time~\cite{DGG25LMCS}.
    Thus, the decision problem of \(\ExpValue{v}{\ObjectiveFWMPL} \ge \Threshold\) is in \(\NP\), 
    and moreover, since there is exactly one value vector that satisfies the conditions in \Cref{thm:six-conditions}, the decision problem is, in fact, in \(\UP\). 
    Analogously, the complement decision problem of \(\ExpValue{v}{\ObjectiveFWMPL} < \Threshold\) is also in \(\UP\).
    Hence, the expected \(\ObjectiveFWMPL\)-value problem is in \(\UP \intersection \coUP\). 
    
    From the description of the optimal strategy in \Cref{lem:fwmpl-almost-sure-to-expected}, it follows from \cref{cor:memory-bound} that the memory requirement for the expected \(\ObjectiveFWMPL\) objective is no greater than the memory requirement for the almost-sure satisfaction of the corresponding \kl{threshold objectives}, which are \(\WindowLength\) and \(\abs{\Vertices} \cdot \WindowLength\) for \(\PlayerMain\) and \(\PlayerAdversary\) respectively~\cite{DGG25LMCS}.
\end{proof}

\begin{example} \label{exm:FWMP}
    We show that the value vector \(\ValVector\) in \Cref{ex:vc-notations} for the game in \Cref{fig:swmp-example} satisfies all the conditions of~\Cref{thm:six-conditions} for the \(\ObjectiveFWMPL\) objective with \(\WindowLength = 2\).
    The \ConditionBellman\ condition can be verified by analysing the game graph.
    To check the \ConditionLB\ condition, we check if \(\PlayerMain\) satisfies the \kl{threshold Boolean objectives} \(\{\ObjectiveFWMPL > \ClassVal{i} - \GranularityBound{\FWMP}\}\) almost surely from every vertex in each \(\TrapSubgame{\Main}{\ValClass{i}}\) (computed in \Cref{ex:apt}).
    Since \(\TrapSubgame{\Main}{\ValClass{2}} = \emptyset\), the condition holds vacuously.
    For \(i \in \{1, 3, 5\}\), the game \(\TrapSubgame{\Main}{\ValClass{i}}\) consists of only one vertex with a self-loop of value \(\ClassVal{i}\), and thus the \kl{threshold objective} is satisfied almost surely.
    Similarly, for \(\TrapSubgame{\Main}{\ValClass{4}}\), there is only one \kl{play} where payoff \(0\) and \(2\) occur alternatingly, and it has \(\ObjectiveFWMPL\)-value equal to \(1\) for \(\WindowLength = 2\).
    The \ConditionUB\ condition can be verified analogously.
    Thus, we have that the vector \(\ValVector\) is equal to  the expected \(\ObjectiveFWMPL\)-value vector \(\ValVectorFWMP\).
    \lipicsEnd
\end{example}

\subsection{Expected bounded window mean-payoff value}%
\label{sec:expected-bounded-window-mean-payoff-value}

We would like to apply the characterization in \Cref{thm:six-conditions} to \(\ObjectiveBWMP\) to show that the expected \(\ObjectiveBWMP\)-value problem is in \(\UP \intersection \coUP\), and thus, we show the existence of the bound \(\DenBound{\BWMP}\) for the \(\ObjectiveBWMP\) objective.
We show in \Cref{lem:bwmp-values-without-boundary-vertices} that the expected \(\ObjectiveBWMP\)-value \(\ClassValBWMP{i}\) of a class \(\BWMPClass{i}\) without \kl{boundary vertices} is the mean payoff of a simple cycle in \(\BWMPClass{i}\).
While \Cref{lem:bwmp-values-without-boundary-vertices} is analogous to \Cref{lem:fwmpl-values-without-boundary-vertices} for \(\ObjectiveFWMPL\), the proof of \Cref{lem:bwmp-values-without-boundary-vertices} is more involved since the \(\ObjectiveBWMP\) objective requires one to consider windows of arbitrary lengths.
In the proof, we make use of the fact that memoryless strategies suffice for \(\PlayerMain\) to \kl{play} optimally for the almost-sure satisfaction of the \(\BWMP\) objective~\cite{DGG25LMCS}.
In the resulting MDP (which has the same set of vertices as the game \(\Game_{\BWMPClass{i}}\)), we carefully analyse the resulting \kl{plays} when \(\PlayerAdversary\) plays optimally.
\begin{restatable}{lemma}{BWMPValuesWithoutBoundaryVertices}
\label{lem:bwmp-values-without-boundary-vertices}
    The expected \(\ObjectiveBWMP\)-value \(\ClassValBWMP{i}\) of vertices in a class \(\BWMPClass{i}\) without \kl{boundary vertices} is equal to the mean-payoff value of a simple cycle in \(\BWMPClass{i}\). 
    That is, \(\ClassValBWMP{i}\) is of the form \(\frac{1}{j}(\PayoffFunction(e_1) + \cdots + \PayoffFunction(e_j))\) for some \(j \le \abs{\Vertices}\) and edges \(e_1, e_2, \ldots, e_j\) of a simple cycle.
\end{restatable}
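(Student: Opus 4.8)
The plan is to mirror the structure of the proof of \Cref{lem:fwmpl-values-without-boundary-vertices}, first passing from the \emph{expected} value to the \emph{almost-sure} value of the outcome, and only then analysing the combinatorial form of that value. Fix optimal strategies $\Strategy[\Main]^*$ and $\Strategy[\Adversary]^*$ for the expected $\ObjectiveBWMP$ objective in $\Game_{\BWMPClass{i}}$ from a vertex $v_0 \in \BWMPClass{i}$; by \Cref{cor:memory-bound} together with the fact that memoryless strategies suffice for $\PlayerMain$ for almost-sure satisfaction of $\ObjectiveBWMP$~\cite{DGG25LMCS}, I may take $\Strategy[\Main]^*$ to be memoryless. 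Exactly as in \Cref{lem:fwmpl-values-without-boundary-vertices}, I would first argue that with probability~$1$ the $\ObjectiveBWMP$-value of the outcome equals $\ClassValBWMP{i}$: if this failed, then by \kl{prefix-independence} of $\ObjectiveBWMP$ and determinacy there would be a vertex from which $\PlayerMain$ can guarantee, with probability~$1$, an outcome of strictly larger value, contradicting that every vertex of $\BWMPClass{i}$ has expected value exactly $\ClassValBWMP{i}$. It therefore suffices to show that the $\ObjectiveBWMP$-value of almost every such outcome is the mean payoff of a simple cycle in $\BWMPClass{i}$.

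To expose the cycle structure I would fix the memoryless $\Strategy[\Main]^*$, turning $\Game_{\BWMPClass{i}}$ into an MDP $M$ on the same vertex set; since $\BWMPClass{i}$ has no \kl{boundary vertices}, the token never leaves $\BWMPClass{i}$. Fixing any optimal $\PlayerAdversary$ strategy in $M$ yields a finite Markov chain, so with probability~$1$ the play is eventually confined to a bottom strongly connected component $S$, visiting exactly the vertices of $S$ infinitely often. The remaining task is to determine the $\ObjectiveBWMP$-value of a play that settles in such an $S$.

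The heart of the argument, and the step I expect to be the main obstacle, is to show that this value is the mean payoff of a \emph{single} simple cycle of $S$ rather than some more complicated long-run average. For the upper bound I would use that, within the strongly connected $S$, $\PlayerAdversary$ can force the play to repeat the minimum-mean simple cycle $C^*$ of $S$ for longer and longer stretches; whenever $\MP(C^*) < \Threshold$, each such stretch keeps a $\Threshold$-window open for a number of steps that grows without bound, so no uniform window length exists and the play violates $\BWMP(\Threshold)$, giving value at most $\MP(C^*)$. For the matching lower bound I would use a cycle-decomposition argument generalising \Cref{prop:inductive-property-of-windows}: if every simple cycle of $S$ has mean payoff at least $\Threshold$, then every sufficiently long infix has mean payoff at least $\Threshold$, so $\Threshold$-windows close within a bound depending only on $\abs{S}$ and the edge payoffs, whence $\Play \in \BWMP(\Threshold)$. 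Combining the two bounds shows the $\ObjectiveBWMP$-value of the play is exactly $\MP(C^*)$, the mean of a simple cycle, and optimality of $\Strategy[\Main]^*$ and $\Strategy[\Adversary]^*$ then pins $\ClassValBWMP{i}$ to this value. The two delicate points I anticipate are handling the supremum in the definition of the $\ObjectiveBWMP$-value (which, as noted in the remarks, need not be attained) precisely through this two-sided squeeze, and justifying that $\PlayerAdversary$ can genuinely force repeated traversals of the worst simple cycle inside the MDP $M$ despite branching at probabilistic vertices.
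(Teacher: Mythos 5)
Your overall strategy (fix a memoryless strategy for \(\PlayerMain\), pass to the resulting MDP, and identify the value with the minimum mean of a simple cycle via a two-sided squeeze on the threshold) is close in spirit to the paper's proof, and your lower-bound argument --- every simple cycle has mean at least \(\Threshold\) implies every long infix does, so \(\Threshold\)-windows close within a structural bound --- is essentially the paper's argument (carried out there in an offset one-player game where all simple cycles have nonnegative total payoff). However, there are two genuine gaps. First, you fix ``any optimal \(\PlayerAdversary\) strategy'' and claim this yields a \emph{finite} Markov chain with a BSCC decomposition. For \(\ObjectiveBWMP\), \(\PlayerAdversary\) requires infinite memory in general (\Cref{thm:bwmp-summary} and \Cref{tab:results-summary}), so an optimal adversary strategy need not induce a finite chain and the BSCC structure you rely on is not available. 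The paper sidesteps this by decomposing the MDP \(\MDP\) into \emph{maximal end components}, a purely structural notion independent of the adversary's strategy, and proving that inside each MEC \(T\) the value is exactly \(\gamma_T\), the minimum cycle mean; the MEC quotient then finishes the argument.

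Second, your upper bound is phrased as ``\(\PlayerAdversary\) can force the play to repeat the minimum-mean simple cycle for longer and longer stretches,'' but at that point in your argument both strategies are already fixed, so there is nothing left to force: what is actually needed is the probabilistic fact that, in the relevant recurrent component, arbitrarily long consecutive traversals of the minimum-mean cycle occur almost surely (a Borel--Cantelli-style argument). You flag this as a delicate point but do not supply it. The paper obtains the upper bound differently and without this recurrence argument from scratch: it \emph{weakens} \(\PlayerAdversary\) by replacing his vertices in the MEC with uniformly random ones, obtaining a Markov chain whose single BSCC has expected \(\ObjectiveBWMP\)-value equal to the minimum cycle mean by a result of~\cite{BGR19}; since even a weakened adversary achieves \(\gamma_T\), the true value is at most \(\gamma_T\). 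Finally, your opening step --- that with probability~1 the outcome's \(\ObjectiveBWMP\)-value equals \(\ClassValBWMP{i}\), imported from the proof of \Cref{lem:fwmpl-values-without-boundary-vertices} --- is not needed in the paper's route and is itself delicate here because the supremum in the definition of \(\ObjectiveBWMP\) need not be attained; the paper instead argues directly that every MEC has \(\gamma_T \ge \ClassValBWMP{i}\) and that \(\PlayerAdversary\) can almost surely reach a MEC with value \(\ClassValBWMP{i}\).
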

\begin{proof}
    Since \(\BWMPClass{i}\) is a class without \kl{boundary vertices}, the game \(\Game_{\BWMPClass{i}}\) is  obtained by simply restricting \(\Game\) to the class \(\BWMPClass{i}\).
    The expected \(\ObjectiveBWMP\)-value of each vertex in \(\BWMPClass{i}\) in the game \(\Game\) is equal to \(\ClassValBWMP{i}\), which is also equal to the expected \(\ObjectiveBWMP\)-value of each vertex in the restriction game \(\Game_{\BWMPClass{i}}\).
    Thus, it is sufficient to consider the game \(\Game_{\BWMPClass{i}}\).
    Recall from the proof of \Cref{lem:fwmpl-almost-sure-to-expected} that an optimal strategy of \(\PlayerMain\) for the expected \(\ObjectiveBWMP\)-objective is to follow an optimal strategy for the almost-sure satisfaction of the \(\BWMP\) objective, for which it is known that  memoryless strategies suffice~\cite{DGG25LMCS}.
    We show a useful claim that holds for all memoryless strategies \(\Strategy[\Main]\) of \(\PlayerMain\).
    In particular, the claim also holds for optimal memoryless strategies of \(\PlayerMain\) for which the expected \(\ObjectiveBWMP\)-value \(\ClassValBWMP{i}\) is attained.

    Let \(\Strategy[\Main]\) be a memoryless strategy of \(\PlayerMain\) in \(\Game_{\BWMPClass{i}}\).
    Fixing the strategy \(\Strategy[\Main]\) in the game \(\Game_{\BWMPClass{i}}\) gives an MDP \(\OutcomeMDP{\Game_{\BWMPClass{i}}}{\Strategy[\Main]}\) with the same set of vertices as \(\Game_{\BWMPClass{i}}\).
    For ease of notation, we denote this MDP by \(\MDP\).
    The MDP \(\MDP\) can be decomposed into maximal end-components (MECs).
    For every MEC \(T\) in \(\MDP\), let \(\gamma_{T}\) denote the mean payoff of the simple cycle in \(T\) with the minimum mean payoff.
    The claim is the following:
    for every MEC \(T\) in \(\MDP\), in the MDP \(\MDP_{T}\) obtained by restricting \(\MDP\) to \(T\), the expected \(\ObjectiveBWMP\)-value of every vertex is the same and is equal to \(\gamma_{T}\).
    In other words, if the token reaches a MEC \(T\) in the MDP \(\MDP\) and \(\PlayerAdversary\) chooses to always stay in \(T\) and never exit, then the expected \(\ObjectiveBWMP\)-value of the outcome is equal to \(\gamma_{T}\).
    We prove this claim later.

    Now, we show that this observation implies that \(\ClassValBWMP{i}\) is equal to the mean payoff of a simple cycle in \(\BWMPClass{i}\).
    Recall that for all strategies of \(\PlayerAdversary\), the outcome in \(\MDP\) almost surely ends up in an MEC of \(\MDP\) from which it never exits.
    Moreover, since \(\ObjectiveBWMP\) is a \kl{prefix-independent} objective, the \(\ObjectiveBWMP\)-value of a \kl{play} only depends on the MEC that the \kl{play} ends up in.
    Thus, to obtain the expected \(\ObjectiveBWMP\)-value of vertices in the MDP \(\MDP\), we can \emph{collapse} each MEC \(M\), that is, we replace each MEC \(M\) with a single vertex \(v_M\). 
    The out-edges of \(v_M\) are the union of the out-edges of vertices in \(M\) to vertices not in \(M\), 
    The in-edges of \(v_M\) are the union of the in-edges of vertices in \(M\) from vertices not in \(M\). 
    In addition, we have a self-loop on \(v_M\) with payoff \(\gamma_{T}\). 
    The resulting MDP with the collapsed MECs is known as the MEC quotient of \(\MDP\)~\cite{BCCF14,ACDKM17,HM18,BGR19}.
    The expected \(\ObjectiveBWMP\)-value of vertices in the MDP \(\MDP\) is equal to the expected mean-payoff value of the corresponding vertices in the MDP with the collapsed MECs.

    In particular, when the strategy \(\Strategy[\Main]\) of \(\PlayerMain\) is an optimal strategy, we get that every vertex in \(\MDP\) has the same expected \(\ObjectiveBWMP\)-value. 
    That is, for each MEC \(T\) in \(\MDP\), we have that \(\gamma_{T}\) is at least \(\ClassValBWMP{i}\), and moreover, from each vertex in \(\MDP\), \(\PlayerAdversary\) has a strategy to almost surely eventually reach a MEC with value \(\ClassValBWMP{i}\).
    Thus, we get that \(\ClassValBWMP{i}\) is equal to the mean payoff of a simple cycle in \(\MDP\). 
    Since every simple cycle in \(\MDP\) is also a simple cycle in \(\BWMPClass{i}\), we have that \(\ClassValBWMP{i}\) is equal to the mean payoff of a simple cycle in \(\BWMPClass{i}\).
    
    \subparagraph{Proof of claim.}
    It remains to prove our claim. 
    Recall that for a MEC \(T\) in the MDP \(\MDP\), we denote by \(\MDP_{T}\) the restriction of \(\MDP\) to \(T\), and  we want to show that for all MECs \(T\) in \(\MDP\), the expected \(\ObjectiveBWMP\)-value of every vertex in \(\MDP_{T}\) is equal to \(\gamma_{T}\).
    First, we show that the expected \(\ObjectiveBWMP\)-value of every vertex in \(\MDP_{T}\) is at most \(\gamma_{T}\).
    To do this, we show that even if we weaken \(\PlayerAdversary\), the expected \(\ObjectiveBWMP\)-value of the outcome is at most \(\gamma_{T}\).
    Formally, we weaken \(\PlayerAdversary\) by replacing every vertex \(v \in \VerticesAdversary\) in \(\MDP_{T}\) that belongs to \(\PlayerAdversary\) with a probabilistic vertex with a uniform distribution over all out-neighbours of \(v\).
    This yields a Markov chain which we denote by \(\MC_{T}\).
    Since \(\MDP_{T}\) is an MEC, we have that \(\MC_{T}\) is a single bottom strongly-connected component (BSCC). 
    From~\cite{BGR19}, it follows that the expected \(\ObjectiveBWMP\)-value of the outcome in a BSCC is equal to the mean payoff of the cycle with the minimum mean payoff in the BSCC.
    Thus, in the MDP \(\MDP_{T}\) as well, \(\PlayerAdversary\) can ensure from every vertex in \(\MDP_{T}\) that the expected \(\ObjectiveBWMP\)-value of the outcome is at most \(\gamma_{T}\).

    Now, we show the other direction, that is, the expected \(\ObjectiveBWMP\)-value of every vertex in \(\MDP_{T}\) is at least \(\gamma_{T}\).
    Here, we strengthen \(\PlayerAdversary\) by replacing every probabilistic vertex in \(\MDP_{T}\) with a \(\PlayerAdversary\) vertex to obtain a non-stochastic one-player game \(\Game_{T}\). 
    We show that despite also having control over all probabilistic vertices, for all strategies of \(\PlayerAdversary\) in \(\Game_{T}\), the \(\ObjectiveBWMP\)-value of the outcome in \(\Game_{T}\) is at least \(\gamma_{T}\).
    Since \(\MDP_{T}\) is a MEC, we have that \(\Game_{T}\) is strongly connected and that \(\PlayerAdversary\) has a strategy to reach every vertex in \(\Game_{T}\) from every other vertex in \(\Game_{T}\).
    Since \(\ObjectiveBWMP\) is \kl{prefix-independent}, this implies that the \(\ObjectiveBWMP\)-value of every vertex in \(\Game_{T}\) is the same.
    For ease of analysis, we add \(- \gamma_{T}\) to every edge payoff in \(\Game_{T}\) to obtain a new one-player game \(\hat{\Game}_{T}\).
    We have that the \(\ObjectiveBWMP\)-value of vertices in the original game \(\Game_{T}\) is at least \(\gamma_{T}\) if and only if the \(\ObjectiveBWMP\)-value of vertices in the offset game \(\hat{\Game}_{T}\) is nonnegative.
    In the offset game \(\hat{\Game}_{T}\), every simple cycle has nonnegative mean payoff, and therefore, every simple cycle in \(\hat{\Game}_{T}\) also has nonnegative total payoff.
    We show that every \kl{play} in \(\hat{\Game}_{T}\) has \(\ObjectiveBWMP\)-value that is nonnegative.

    Let \(\PayoffFunction_{\min}\) denote the minimum edge payoff occurring in the offset game \(\hat{\Game}_{T}\).
    Since there is a simple cycle in \(\hat{\Game}_{T}\) with total payoff equal to zero, it cannot be the case that \(\PayoffFunction_{\min}\) is positive.
    If \(\PayoffFunction_{\min}\) is equal to zero, then we have that every edge in \(\hat{\Game}_{T}\) has nonnegative payoff, and therefore every \kl{play} in \(\hat{\Game}_{T}\) has \(\ObjectiveBWMP\)-value that is nonnegative.
    Now, suppose that \(\PayoffFunction_{\min}\) is strictly negative.
    Let \(\Play\) be a \kl{play} in \(\hat{\Game}_{T}\).
    Note that for every suffix \(\PlaySuffix{x}\) of \(\Play\), for every finite prefix \(\PlayInfix{x}{y}\) of \(\PlaySuffix{x}\), the total payoff of the segment \(\PlayInfix{x}{y}\) is bounded below by \(\abs{\Vertices} \cdot \PayoffFunction_{\min}\).
    Indeed, if the length of \(\PlayInfix{x}{y}\) is strictly less than \(\abs{\Vertices}\), then the total payoff of the segment is at least \(\abs{\PlayInfix{x}{y}} \cdot \PayoffFunction_{\min}\), which is greater than \(\abs{\Vertices} \cdot \PayoffFunction_{\min}\).
    Otherwise, if the length of \(\PlayInfix{x}{y}\) is at least \(\abs{\Vertices}\), then it contains a simple cycle with nonnegative total payoff. 
    Deleting this simple cycle from \(\PlayInfix{x}{y}\) gives a shorter \kl{play} whose total payoff is at most the total payoff of \(\PlayInfix{x}{y}\).
    Thus, for all \(\varepsilon < 0\), we have that all \(\varepsilon\)-windows in \(\Play\) close in at most \(\frac{\abs{\Vertices} \cdot \PayoffFunction_{\min}}{\varepsilon}\) steps, and thus, the \(\ObjectiveBWMP\)-value of the \kl{play} \(\Play\) is greater than \(\varepsilon\).
    Thus, the \(\ObjectiveBWMP\)-value of every vertex in the offset game \(\hat{\Game}_{T}\) is at least zero, and the \(\ObjectiveBWMP\)-value of every vertex in the one-player game \(\Game_{T}\) is at least \(\gamma_{T}\). 
    Hence, for all strategies of \(\PlayerAdversary\) in the MDP \(\MDP_{T}\), the expected \(\ObjectiveBWMP\)-value of the outcome in \(\MDP_{T}\) is also at least \(\gamma_{T}\).
    
    This shows that the expected \(\ObjectiveBWMP\)-value of every vertex in \(\MDP_{T}\) is equal to \(\gamma_{T}\) and concludes the proof of the observation.
\end{proof}
The following corollary of \Cref{lem:bwmp-values-without-boundary-vertices}, which is analogous to \Cref{cor:fwmpl-denominator-without-boundary-vertices},
states the bound \(\DenBoundNoBoundaryVertex{\BWMP}\) for the \(\ObjectiveBWMP\) objective.
\begin{corollary}%
\label{cor:bwmp-denominator-without-boundary-vertices}
    The expected \(\ObjectiveBWMP\)-value of vertices in \(\ValVectorBWMP\)-classes without \kl{boundary vertices} can be written as \(\frac{p}{q}\) where \(p\) and \(q\) are integers and  \(q \le \PayoffDenominator^{\abs{\Vertices}} \cdot \abs{\Vertices}\).
\end{corollary}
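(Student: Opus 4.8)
The plan is to derive this denominator bound directly from the structural characterization of $\ClassValBWMP{i}$ provided by \Cref{lem:bwmp-values-without-boundary-vertices}, mirroring the argument used to obtain \Cref{cor:fwmpl-denominator-without-boundary-vertices} for the $\ObjectiveFWMPL$ objective, but with the cycle length bounded by $\abs{\Vertices}$ in place of $\WindowLength$. By \Cref{lem:bwmp-values-without-boundary-vertices}, the value $\ClassValBWMP{i}$ of a class $\BWMPClass{i}$ without boundary vertices equals the mean payoff $\frac{1}{j}(\PayoffFunction(e_1) + \cdots + \PayoffFunction(e_j))$ of a simple cycle, and $j \le \abs{\Vertices}$ since a simple cycle visits each vertex at most once. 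So the task reduces to a purely arithmetic estimate of the denominator of such an expression.

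First I would bound the denominator of the numerator sum $\PayoffFunction(e_1) + \cdots + \PayoffFunction(e_j)$. Let $\PayoffDenominator$ be the maximum denominator over all edge payoffs in $\Game$, as defined earlier. Each summand $\PayoffFunction(e_\ell)$ is a rational with denominator at most $\PayoffDenominator$, so the denominator of the sum, written in reduced form, divides the least common multiple of the individual denominators. As there are at most $j \le \abs{\Vertices}$ of these, each at most $\PayoffDenominator$, this lcm is at most $\PayoffDenominator^{\abs{\Vertices}}$; exactly as in the $\ObjectiveFWMPL$ case a slightly sharper falling-product bound holds, but $\PayoffDenominator^{\abs{\Vertices}}$ suffices for our purposes.

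Finally I would account for the outer factor $\frac{1}{j}$. Writing the sum as $\frac{a}{b}$ with integers $a,b$ and $b \le \PayoffDenominator^{\abs{\Vertices}}$, we have $\ClassValBWMP{i} = \frac{a}{jb}$, whose reduced denominator is at most $j \cdot b \le \abs{\Vertices} \cdot \PayoffDenominator^{\abs{\Vertices}}$. Hence $\ClassValBWMP{i}$ can be written as $\frac{p}{q}$ with $p,q$ integers and $q \le \PayoffDenominator^{\abs{\Vertices}} \cdot \abs{\Vertices}$, which is precisely the claimed bound, establishing the value $\DenBoundNoBoundaryVertex{\BWMP} = \PayoffDenominator^{\abs{\Vertices}} \cdot \abs{\Vertices}$.

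I do not expect any genuine obstacle at this level: the entire structural difficulty—showing that the value of a boundary-free class is realised as the mean payoff of a single simple cycle, despite $\ObjectiveBWMP$ permitting windows of unbounded length—is already discharged by \Cref{lem:bwmp-values-without-boundary-vertices}. The only point requiring care is the elementary number-theoretic estimate on the denominator of a sum and a quotient of bounded-denominator rationals, and in particular keeping the cycle-length factor $j \le \abs{\Vertices}$ as an explicit multiplicative term rather than absorbing it into the exponent, so that the bound matches the stated $\PayoffDenominator^{\abs{\Vertices}} \cdot \abs{\Vertices}$.
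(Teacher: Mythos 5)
Your proposal is correct and follows essentially the same route as the paper, which likewise obtains this corollary directly from \Cref{lem:bwmp-values-without-boundary-vertices} by the same elementary denominator arithmetic used for \Cref{cor:fwmpl-denominator-without-boundary-vertices}, with the window length \(\WindowLength\) replaced by the simple-cycle length bound \(\abs{\Vertices}\). The lcm estimate for the sum and the extra factor \(j \le \abs{\Vertices}\) from the division match the paper's reasoning exactly.
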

From \Cref{thm:denominator-bound}, we get that the denominator of \(\ClassValBWMP{i}\) of each class \(\BWMPClass{i}\) in \(\Game\) is at most \(2^{\abs{V}} \cdot \ProbabilityDenominator^{\abs{V}^3} \cdot (\DenBoundNoBoundaryVertex{\BWMP})^{\abs{\Vertices}}\), which is at most \( 2^{\abs{\Vertices}} \cdot \ProbabilityDenominator^{\abs{\Vertices}^3} \cdot (\PayoffDenominator^{\abs{\Vertices}} \cdot \abs{\Vertices})^{\abs{\Vertices}}\).
\begin{lemma}
    \label{lem:bwmp-value-denominator}
    The expected \(\ObjectiveBWMP\)-value of each vertex in \(\Game\) can be written as \(\frac{p}{q}\), where \(p, q\) are integers, and \(q \le 2^{\abs{\Vertices}} \cdot \ProbabilityDenominator^{\abs{\Vertices}^3} \cdot (\PayoffDenominator^{\abs{\Vertices}} \cdot \abs{\Vertices})^{\abs{\Vertices}}\), and \(-\ObjectiveBound{} \cdot q \le p \le \ObjectiveBound{} \cdot q\). 
\end{lemma}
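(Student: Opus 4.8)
The plan is to assemble this bound mechanically from the two ingredients already established: the generic denominator bound of \Cref{thm:denominator-bound}, which expresses the overall class-value denominator bound \(\DenBound{\BWMP}\) in terms of the denominator bound \(\DenBoundNoBoundaryVertex{\BWMP}\) for value classes \emph{without} boundary vertices, and the explicit value of \(\DenBoundNoBoundaryVertex{\BWMP}\) supplied by \Cref{cor:bwmp-denominator-without-boundary-vertices}. First I would recall that by definition of the value classes the expected \(\ObjectiveBWMP\)-value \(\ExpValueBWMP{v}\) of a vertex \(v\) equals the value \(\ClassValBWMP{i}\) of the \(\ValVectorBWMP\)-class \(\BWMPClass{i}\) containing \(v\), so bounding the denominators of the class values suffices.

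Next I would instantiate \Cref{thm:denominator-bound} with \(\Objective = \ObjectiveBWMP\). By \Cref{cor:bwmp-denominator-without-boundary-vertices} (which rests on the cycle characterization of \Cref{lem:bwmp-values-without-boundary-vertices}), every class without boundary vertices has value expressible as \(\frac{p}{q}\) with \(q \le \PayoffDenominator^{\abs{\Vertices}} \cdot \abs{\Vertices}\); this is exactly the input \(\DenBoundNoBoundaryVertex{\BWMP} = \PayoffDenominator^{\abs{\Vertices}} \cdot \abs{\Vertices}\) required by \Cref{thm:denominator-bound}. Substituting this into \(\DenBound{\BWMP} = 2^{\abs{\Vertices}} \cdot \ProbabilityDenominator^{\abs{\Vertices}^3} \cdot (\DenBoundNoBoundaryVertex{\BWMP})^{\abs{\Vertices}}\) yields the stated denominator bound \(q \le 2^{\abs{\Vertices}} \cdot \ProbabilityDenominator^{\abs{\Vertices}^3} \cdot (\PayoffDenominator^{\abs{\Vertices}} \cdot \abs{\Vertices})^{\abs{\Vertices}}\), valid for \emph{all} classes, including those with boundary vertices. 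For the numerator bound I would invoke boundedness of the objective: if \(\ObjectiveBound{}\) bounds the absolute edge payoffs, then every play has \(\ObjectiveBWMP\)-value in \([-\ObjectiveBound{}, \ObjectiveBound{}]\), so \(\ExpValueBWMP{v}\) lies in the same interval; writing \(\ExpValueBWMP{v} = \frac{p}{q}\) with \(q\) bounded as above, the inequality \(\abs{\frac{p}{q}} \le \ObjectiveBound{}\) rearranges to \(-\ObjectiveBound{} \cdot q \le p \le \ObjectiveBound{} \cdot q\).

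Since the genuinely substantive work — identifying class values without boundary vertices as mean payoffs of simple cycles, and propagating denominators through the linear system \((I - Q_B)\ClassValBWMP{B} = Q_C \ClassValBWMP{C}\) of \Cref{thm:denominator-bound} — has already been carried out, I do not expect a real obstacle here: the lemma is a direct instantiation of \Cref{thm:denominator-bound} at \(\DenBoundNoBoundaryVertex{\BWMP} = \PayoffDenominator^{\abs{\Vertices}} \cdot \abs{\Vertices}\), combined with the trivial numerator bound from boundedness. The only point I would state with care is that the class-value denominators governed by \Cref{thm:denominator-bound} indeed coincide with the vertex-value denominators, which is immediate from the definition of value classes.
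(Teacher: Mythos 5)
Your proposal is correct and matches the paper's own justification exactly: the paper likewise obtains the denominator bound by instantiating \Cref{thm:denominator-bound} with \(\DenBoundNoBoundaryVertex{\BWMP} = \PayoffDenominator^{\abs{\Vertices}} \cdot \abs{\Vertices}\) from \Cref{cor:bwmp-denominator-without-boundary-vertices}, and the numerator bound from the boundedness of \(\ObjectiveBWMP\). No further comment is needed.
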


\begin{remark}
    In~\cite{CDRR15}, it has been shown that for non-stochastic two-player games, there exists a large enough window length (\(\WindowLength_{\max} = (\abs{\Vertices} - 1) \cdot (\abs{\Vertices} \cdot \ObjectiveBound{} + 1)\), where \(\ObjectiveBound{}\) is maximum absolute edge payoff in the game) such that for all vertices \(v\) in the game, it is the case that \(v\) is winning for the \(\BWMP(0)\) objective if and only if it is winning for the \(\FWMP(\WindowLength_{\max}, 0)\) objective.
    We remark that in general, there does not exist a window length \(\WindowLength\) such that the expected \(\ObjectiveFWMPL\)-value of a vertex is equal to the expected \(\ObjectiveBWMP\)-value of the vertex.
    To see this, consider the game in \Cref{fig:bwmp-example-infinite-memory} (from~\cite{CDRR15}).
    We have that the expected \(\ObjectiveBWMP\)-value of both vertices in this game is zero.
    However, the expected \(\ObjectiveFWMPL\)-value of both vertices is equal to \(-1/\WindowLength\), which is strictly negative.
    \begin{figure}[t]
        \centering
        \begin{tikzpicture}
            \node[state] (v1) {\(\Vertex[1]\)};
            \node[square, draw, right of=v1] (v2) {\(\Vertex[2]\)};
            \draw 
                  (v1) edge[bend left] node[above, pos=0.3] {\(\EdgeValues{-1}{}\)} (v2)
                  (v2) edge[bend left] node[below, pos=0.3]{\(\EdgeValues{+1}{}\)} (v1)
                  (v2) edge[loop right] node[right, pos=0.3]{\(\EdgeValues{0}{}\)} (v2)
            ;
        \end{tikzpicture}
        \caption{expected \(\ObjectiveBWMP\) value of both vertices is zero, but the expected \(\ObjectiveFWMPL\)-value of both vertices is strictly negative for all \(\WindowLength \ge 1\). }
        \label{fig:bwmp-example-infinite-memory}
    \end{figure}
    \lipicsEnd
\end{remark}
We now state the main result of this section for the expected \(\ObjectiveBWMP\)-value problem.
\begin{theorem}%
\label{thm:bwmp-summary}
    The expected \(\ObjectiveBWMP\)-value problem is in \(\UP \intersection \coUP\). 
    Memoryless strategies suffice for \(\PlayerMain\). \(\PlayerAdversary\) requires infinite memory in general.
\end{theorem}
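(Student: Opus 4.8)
The plan is to mirror the proof of \Cref{thm:fwmp-summary}, the only structural difference being that the almost-sure satisfaction subroutine for $\BWMP$ is no longer polynomial-time but instead lies in $\UP \intersection \coUP$~\cite{DGG25LMCS}. For the complexity bound I would appeal to the characterization in \Cref{thm:six-conditions}: the procedure first guesses the expected $\ObjectiveBWMP$-value vector $\ValVectorBWMP$, which by \Cref{lem:bwmp-value-denominator} has components that are fractions requiring only polynomially many bits, so the guess has polynomial size. It then verifies the three conditions. The \ConditionBellman\ condition is a system of max/min/average equations checkable in polynomial time, while \ConditionLB\ and \ConditionUB\ reduce to linearly many almost-sure satisfaction queries for threshold $\BWMP$ objectives of the form $\BWMP(\ClassValBWMP{i} - \GranularityBound{\BWMP})$ (and its dual) in the trap subgames of the value classes.

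The step I expect to be the real obstacle is preserving the \emph{unique-witness} property required for $\UP$. In the $\FWMPL$ case the verification of \ConditionLB\ and \ConditionUB\ was purely polynomial-time, so guessing only the (unique) value vector sufficed to place the problem in $\UP$. Here, by contrast, each almost-sure query is itself in $\UP \intersection \coUP$, so I would guess, alongside the value vector, the unique $\UP$-witness for each of the linearly many almost-sure queries and verify them deterministically. Since the value vector is unique by \Cref{thm:six-conditions} and each almost-sure witness is unique, the resulting nondeterministic polynomial-time computation has a unique accepting branch, placing the problem in $\UP$. The complementary problem $\ExpValueBWMP{v} < \Threshold$ is handled symmetrically, yielding $\coUP$ and hence membership in $\UP \intersection \coUP$.

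For the memory bound on $\PlayerMain$ I would invoke \Cref{cor:memory-bound}: the optimal strategy $\Strategy[\Main]^{*}$ constructed in \Cref{lem:fwmpl-almost-sure-to-expected} uses no more memory than an almost-sure winning strategy for the relevant threshold $\BWMP$ objective, combined with memoryless positive reachability strategies~\cite{Con92}. Since memoryless strategies suffice for $\PlayerMain$ for almost-sure satisfaction of $\BWMP$~\cite{DGG25LMCS}, and since the value class and the region (trap subgame versus positive attractor) of the current vertex are determined by the vertex alone, the combined strategy $\Strategy[\Main]^{*}$ reads only the current vertex and is therefore memoryless.

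Finally, for the lower bound on $\PlayerAdversary$ I would point to the game of \Cref{fig:bwmp-example-infinite-memory}: to drive the expected $\ObjectiveBWMP$-value towards its optimal value, $\PlayerAdversary$ must keep windows open for ever-increasing lengths (as noted in the remark on achieving the supremum), which no finite-memory strategy can do. This infinite-memory requirement already holds for the special case of non-stochastic games~\cite{CDRR15,DGG25LMCS} and hence transfers directly to the stochastic setting.
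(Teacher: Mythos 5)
Your proposal is correct and follows essentially the same route as the paper: guess the value vector (polynomially bounded by \Cref{lem:bwmp-value-denominator}), verify the three conditions of \Cref{thm:six-conditions} using the \(\UP \intersection \coUP\) almost-sure satisfaction subroutine for \(\BWMP\), and derive the memory bounds from \Cref{cor:memory-bound} together with the known results for almost-sure \(\BWMP\). Your explicit handling of the unique-witness composition (guessing the \(\UP\)-certificates for the linearly many almost-sure queries alongside the unique value vector) spells out a detail the paper leaves implicit, but it is the same argument.
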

\begin{proof}
This proof follows a similar structure as the proof of \Cref{thm:fwmp-summary}.
As before, the \ConditionBellman\ condition can be checked in polynomial time. 
Checking the \ConditionLB\ and \ConditionUB\ conditions involves checking almost-sure satisfaction of the \kl{Boolean objective} \(\BWMP\) for appropriate thresholds, which reduces to checking the satisfaction of \(\BWMP\) in non-stochastic games~\cite{DGG25LMCS}, which in turn reduces to total supremum payoff~\cite{CDRR15}, which is in \(\UP \intersection \coUP\)~\cite{GS09}.
Both of these reductions are polynomial-time reduction, and hence, the expected \(\ObjectiveBWMP\)-value problem is in \(\UP \intersection \coUP\).

Memoryless strategies suffice for \(\PlayerMain\) for almost-sure satisfaction of \(\BWMP(\Threshold)\)~\cite{DGG25LMCS}.
\(\PlayerAdversary\) requires infinite memory in general for the \(\BWMP(\Threshold)\) objective even in non-stochastic games~\cite{CDRR15}, which are a special case of stochastic games.
Deterministic strategies suffice for both players.
Hence, we get the memory requirements of an optimal strategy for the expected \(\ObjectiveBWMP\)-value problem using \Cref{cor:memory-bound}.
\end{proof}

\section{Discussion}%
\label{sec:discussion}

We discuss some concluding remarks about the relation of our work to previous work~\cite{CHH09}, which deals with the satisfaction of Boolean \kl{prefix-independent} objectives.
We also discuss practical implementations for window mean-payoff objectives and applicability of our technique to other prefix-independent objectives.

\subparagraph*{Comparison with \cite{CHH09}.}
In~\cite{CHH09}, it suffices to check the almost-sure satisfaction of the same \kl{Boolean objective} \(\BooleanObjective\) in all value classes.
In contrast, for \kl{quantitative objectives}, the \kl{threshold Boolean objective} for which we check the almost-sure satisfaction depends on the guessed value of the value class
(``Can \(\PlayerMain\) satisfy \(\ThresholdObjective{> \ClassVal{i} - \GranularityBound{\Objective}}\) with probability~\(1\)?''). 
Another key difference is that for \kl{Boolean objectives}, the value classes without \kl{boundary vertices} 
are precisely the extremal value classes, that is classes with values~\(0\) and~\(1\). 
In the case of \kl{quantitative objectives}, there may be multiple intermediate value classes without \kl{boundary vertices}, making reasoning about the correctness of the reduction more difficult.

We note that if we apply our approach to Boolean \kl{prefix-independent} objectives (such as \Buchi, co\Buchi, parity) by viewing them as \kl{quantitative objectives} mapping each \kl{play} to \(0\) or \(1\), then we recover the algorithm given in~\cite{CHH09}.

\subparagraph*{Applicability to other prefix-independent objectives.}
Recall that in order to be able to apply our characterization to a prefix-independent objective \(\Objective\), we require bounds \(\ObjectiveBound{\Objective}\) on the image of \(\Objective\) and \(\DenBound{\Objective}\) on the denominators of the optimal expected \(\Objective\)-values of vertices in the game.
These bounds can easily be derived individually for the following quantitative prefix-independent objectives, and thus, our technique applies to these objectives. 

\begin{description}
    \item[Mean-payoff objective.]
    The mean-payoff value of any play is bounded between the minimum and maximum edge payoffs in the game, directly giving bounds on the image of the mean-payoff objective.
    Since deterministic memoryless optimal strategies exist for both players~\cite{EM79}, the expected mean-payoff value is a solution of a stationary distribution in the Markov chain obtained by fixing memoryless strategies of both players. 
    This gives denominator bounds for the expected mean-payoff values of vertices in the game.

    \item[Limsup and liminf objectives.]
      Since the \(\liminfObj\) objective is equivalent to \(\FWMPL\) objective with window length \(\WindowLength = 1\)~(\Cref{lem:NEXPcoNEXP}), and the \(\limsupObj\) objective is the dual of the \(\liminfObj\) objective, our analysis with window mean-payoff objectives generalizes \(\limsupObj\) and \(\liminfObj\) objectives.

    \item[Positive frequency payoff objective~\cite{GK23}.]
    Here, each vertex has a payoff, and this objective returns the maximum payoff among all vertices that are visited with positive frequency in an infinite play. 
    This objective is prefix-independent, as the frequency of a vertex is independent of finite prefixes. 
    The image of the objective is bounded between the minimum and maximum vertex payoffs. 
    We observe that the value of vertices in a value class without boundary vertices is equal to the payoff of a vertex in the class, (giving a denominator bound for these vertices), and Theorem 11 uses this to give a denominator bound for vertices in value classes with boundary vertices. 
\end{description}

\subparagraph*{Practical implementation.}
We discuss approaches to solve the expected \(\Objective\)-value problem for the window mean-payoff objectives in practice.

A trivial algorithm that works for both \(\ObjectiveFWMPL\) and \(\ObjectiveBWMP\) objectives is to iterate over all possible value vectors. 
For each value vector, we check if the conditions in \Cref{thm:six-conditions} are satisfied, which can be done in polynomial time. 
Since there are exponentially many possible value vectors, this algorithm has an exponential running time in the worst-case.

Another technique is value iteration~\cite{CH08}, which has been seen to be an anytime algorithm for the standard mean-payoff objective~\cite{KMW23}.
An anytime algorithm gives better precision the longer it is run, and can be interrupted any time.
Given a game \(\Game\) with \(|\Vertices|\) vertices, the expected \(\ObjectiveFWMPL\)-value problem on \(\Game\) reduces to the expected \(\liminfObj\)-value problem on a game \(\Game'\) with \(|\Vertices|^{\WindowLength}\) vertices, (that is, on an exponentially larger game graph).
The \(\liminfObj\) objective is a well-studied objective in the context of value iteration~\cite{CH08,CDH09}.
We describe the reduction in~\Cref{lem:NEXPcoNEXP}, which also gives the expected \(\ObjectiveFWMPL\)-values of vertices in \(\Game\).

Since the size of the graph \(\Game'\) is much bigger than that of \(\Game\), we would like to work with \(\Game'\) on-the-fly rather than explicitly constructing the entire graph.
In~\cite{KMW23}, the authors show bounded value iteration for objectives such as \kl{reachability} and mean-payoff. 
They also discuss that the algorithm can be extended to be asynchronous and use partial exploration. 
As future work, we would like to look at the practicality of on-demand asynchronous value iteration for the \(\liminfObj\) objective, or even the window mean-payoff objectives \(\ObjectiveFWMPL\) and \(\ObjectiveBWMP\) directly.
An interesting aspect of it would be to investigate heuristics and optimizations such as sound value iteration~\cite{QK18}, optimistic value iteration~\cite{HK20}, and topological value iteration~\cite{AEKSW22} to speed up the practical running time.

\bibliography{mybib}

\end{document}